\newcommand{\R}{\mathbb{R}}
\newcommand{\cone}[2]{\Phi_{#1,#2}}
\newcommand{\conestrat}[3]{\cone{#1}{#2}^{#3}}
\newcommand{\define}[1]{\emph{#1}}
\newcommand{\cayley}[5]{\Phi_{#1,#2}^{#3}({#4},{#5})}
\newtheorem{observation}[theorem]{Observation}
\begin{document}

\title{On Flattenability of Graphs}
\author{Meera Sitharam \thanks{This research was supported in part by the grant  NSF CCF-1117695} \and Joel Willoughby}
\institute{University of Florida}

\maketitle

\begin{abstract}
We consider a generalization of the concept of $d$-flattenability of 
graphs - introduced for the $l_2$ norm by Belk and Connelly - to general $l_p$ norms, with integer $P$, $1 \le p < \infty$, though many of our results work for $l_\infty$ as well. 
The following results are shown for graphs $G$, using notions of 
genericity, rigidity, and generic $d$-dimensional rigidity matroid introduced 
by Kitson for frameworks in general $l_p$ norms,  as well as the cones of
vectors of pairwise $l_p^p$ distances of a  finite 
point configuration in $d$-dimensional, $l_p$ space:  
(i) $d$-flattenability of a graph $G$ is equivalent to the convexity of 
$d$-dimensional, inherent Cayley configurations spaces for $G$, 
a concept introduced by the first author; (ii) $d$-flattenability and 
convexity of Cayley configuration spaces over specified non-edges of 
a $d$-dimensional framework are not 
generic properties of frameworks (in arbitrary dimension); (iii) $d$-flattenability
of $G$ is equivalent to all of $G$'s generic frameworks being $d$-flattenable; 
(iv) existence of one generic $d$-flattenable framework for $G$ is equivalent to 
the independence of the edges of $G$, a 
generic property of frameworks; (v) the rank of $G$ equals the dimension of 
the projection of the $d$-dimensional stratum of the $l_p^p$ distance cone.
We give stronger results for specific norms for $d = 2$: we show
that (vi) 2-flattenable graphs for the $l_1$-norm (and $l_\infty$-norm) 
are a larger class than 2-flattenable graphs for Euclidean $l_2$-norm case 
and finally (vii) prove further results towards characterizing 
2-flattenability in the $l_1$-norm. A number of conjectures and open 
problems are posed.





\end{abstract}

\section{Introduction, Preliminaries, Contributions}
\label{intro}
A \define{realization or framework of a graph} $G = (V,E)$ 
\define{under norm} $|| \cdot ||$ is  
is an assignment $r: V\to \R^m$  
of points in the corresponding normed vector space 
$\R^m$. 
A \define{linkage} $(G, \delta_G)$ is a graph $G=(V,E)$ together with an assignment $\delta_G: E \to \R$ of positive real assignments of lengths to the edges of $G$. 
A \define{realization of a linkage} $(G,\delta^G)$ 
\define{in $d$ dimensional $||.||$-normed space} is 
an assignment $r: V \to \R^d$,   such that 
$\forall (v,w) \in E, ||r(v) - r(w)|| = \delta^G_{vw}$. 
A realization \define{under norm $||.||$} is a realization in $d$ dimensional
$||.||$-normed space, for some dimension $d$. In this paper, we are concerned with standard $l_p$ norms. By \define{general $l_p$ norms}, we mean norms with integer $p$, $1 \le p < \infty$. However, many results of this paper hold for $l_\infty$ as well.
While under the $l_2$ norm a realization $r$
of intrinsic dimension $d$ - i.e., whose points lie on a $d$-dimensional subspace of some higher
$d'$-dimensional space - is linearly isometric to a
$d$-dimensional realization, this is not the case for other $l_p$ norms unless the subspace is axis parallel, i.e, a
 coordinate subspace or a translated (affine) subspace..
Hence  the dimension of such a realization $r$ under general norms is considered to be $d'$ rather than $d$.
A graph $G$ is \define{$d$-flattenable} if 
for every realization $r$ of $G$ 
under norm $||.||$, the linkage $(G,\delta^G)$ where 
$\delta^G_{vw} := ||r(v) -r(w)||$ 
This an illustration of a 2-flattenable graph that does not refer to realizations of intrinsic dimension 2 in some higher dimensional space.
also has a realization in the $d$-dimensional $||.||$-normed space. This definition does not imply that there is a continuous path of realizations starting from a realization of $(G, \delta_G)$ in some higher dimension to the realization in $d$-dimensions, nor does it refer to realizations of intrinsic dimension $d$ in some higher dimensional space. For a clarification of the latter, see the example in the Proof of Theorem \ref{thm:2sumK4} where we give a realization of a graph on a 2-dimensional subspace of $\R^3$, this does not imply that the graph is 2-flattenable. This particular graph turns out to be 2-flattenable as every $l_1$ realizable linkage of it can be realized in $\R^2$.

This concept was 
first introduced in \cite{Belk:2007} for the Euclidean or $l_2$ norm. However they called it "$d$-realizability,"
which can be confused with the realizability of a given linkage in $d$-dimensions. This is one of reasons we introduced the term: {\em flattenability.}

The term flattening has also been used  by Matousek  \cite{Matousek2004}
in the context of non-isometric embeddings (with low distortion via
Johnson-Lindenstrauss lemma in $l_2$ \cite{JohnsonLindenstrauss},  impossibility 
of low distortion in $l_1$ \cite{Brinkman2005}, etc).
 Our paper admits arbitrary distortions of non-edge lengths, but forces edge lengths to remain undistorted.

A \define{minor} of $G$ is any graph $G'$ that can be obtained from $G$ from a series of edge-contractions or edge-deletions. If a property of $G$ remains consistent under the operation of taking minors, that property is \define{minor-closed}. A useful result due to \cite{RobertsonAndSeymour} is that if a property is minor-closed, then there is a finite set of \define{forbidden minor} $\mathcal{F}$ such that if $G$ has any element of $\mathcal{F}$ as a minor, then $G$ does not have that property.

Immediately by definition, $d$-flattenability is a minor-closed property
under any norm. A full characterization for 3-flattenable graphs was given 
for the Euclidean or  $l_2$ norm  by \cite{Belk:2007}.

    This paper gives basic results illustrating how
    $d$-flattenability for general norms is a natural link
    between
    combinatorial rigidity and 
    configuration spaces of frameworks on the one hand, and    
    coordinate shadows
    (projections) of the faces of the cone -- consisting of vectors of 
    pairwise $l_p^p$-
    distances of $n$-point configurations (see Figure \ref{fig:vector_example}) -- on the other hand (see Figure \ref{fig:cone_dflat}). We define the \define{$l_\infty^\infty$ cone} to be the limit of the $l_p^p$ cones
    as $p\rightarrow \infty$. This definition permits some of our results to
    hold for the $l_\infty$ norm as well. 
    Thus, via $d$-flattenability, 
    graph minors and topological embeddings, 
    as well as combinatorial rigidity tools 
    can now be used to understand the structure of these 
    cone faces that play a crucial 
    role in convex and semidefinite programming, spectral graph theory
    and metric space embedding \cite{Steurer:SoS}.The latter techniques are used widely in  
    approximation of  optimal solutions to {\em NP}-hard combinatorial problems and in complexity theory, where
    in particular, non-Euclidean norms such as 
    $l_1$ and $l_\infty$ play a crucial role \cite{Khot05,Trevisan}. 
    Thus $d$-flattenability is a nexus connecting 
    diverse techniques and applications.

\begin{figure} 
    \centering\tiny
    \def\svgwidth{0.2\linewidth} 
    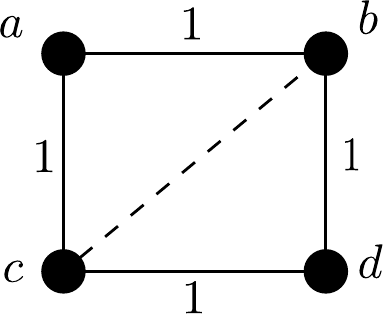 
    \caption{Example of a linkage. The corresponding {\em pairwise
    distance vector} for this graph is given by: $\delta =
    (1,1,1,1,||a-d||,||b-c||)^T$. The Cayley configuration space on the
    non-edge $ad$ can take any distance in the range $[0,2]$. The ordering of
    pairs as coordinate positions in the vector is arbitrary, but fixed by
    convention.}
\label{fig:vector_example}
\end{figure}

\bigskip\noindent
    In the remainder of this section we give preliminary definitions, 
    state the paper's contributions and organization, and provide a brief listing of
    related work on the above topics in Section \ref{related}.

    In \cite{SiGa:2010} one of the authors introduced an alternative 
    perspective on the configuration or realization space for a 
    given linkage $(G,\delta^G)$, defining the   
    \define{$d$-dimensional Cayley configuration space} over some set of 
    non-edges, $F$, of $G$ 
    under the $l_2^2$ norm.
    This Cayley configuration space is
    denoted $\Phi^{d}_{F,l_2}(G,\delta^G)$, and is 
    the set of vectors $\delta^F$ of  Euclidean lengths attained
    by the non-edges $F$ over all the realizations of the linkage $(G,\delta^G)$. 
    This same space is also sometimes referred to as the {\em Cayley configuration space of any
    realization or framework} $(G,r)$ whose edge lengths are $\delta^G$.
    The definition readily extends to arbitrary norms.
    In
    \cite{SiGa:2010}, it was shown that for the $l_2$ norm, 
    $d$-flattenability of a graph
    $G$ implies $G$ has a \define{$d$-dimensional,
    inherent convex Cayley} configuration
    space, i.e., for all partitions of $G = H\cup F$, and all length vectors
    $\delta^H$ for the edges of $H$, $\Phi^{d}_{F,l_2}(H,\delta^H)$ is a convex set (see Figure \ref{fig:convexcayley}). 
     This property
    was then used towards highly efficient 
    atlasing of molecular configuration spaces 
    \cite{Ozkan:2011}, 
    compared and hybridized with standard monte carlo methods in \cite{MonteCarlo},
    \cite{OzkanJacobian},   
    with multiple applications demonstrated in \cite{Ozkan:2011,WuVirus}. 
    Our first result in Section \ref{joel1} 
    shows the converse of the above result and generalizes both directions to general
    $l_p$ norms, leading to our first main result:

    \begin{itemize}
        \item
            For $l_p$ norms,
    $G$ is $d$-flattenable if and only if $G$ has a \define{$d$-dimensional,
    inherent convex Cayley} configuration space.
    As a direct corollary, it follows that 
    both properties are minor-closed for general $l_p$ norms. 
    \end{itemize}

    For the next set of results given in Section \ref{meera}, we refer the reader to combinatorial
    rigidity preliminaries in \cite{Graver}, defined 
    for the Euclidean or $l_2$ normed space.
    The \define{$d$-dimensional rigidity matrix} of a graph  $G = (V,E)$, 
    denoted $R(G)$,
is a matrix of indeterminates $r_1(v), r_2(v), \ldots r_d(v)$ for $v\in V$.
These represent the coordinate position $r(v) \in \R^d$ of  
the point corresponding to a 
vertex $v\in V$ in an arbitrary realization or framework $r$ of $G$.
The matrix has one row for each edge
each vertex $v \in V$.
The row corresponding to $e = (u,v) \in E$ 
represents the {\em bar} 
from $r(u)$ to $r(v)$ and 
has $d$ non-zero entries
$r(u)-r(v)$ (resp. $r(v)-r(u)$), in the $d$ columns corresponding to $u$
(resp. $v$).
An instantiation of $R(G)$ to a particular framework
is called the rigidity matrix of that framework.
A \define{regular} or \define{generic} framework $(G,r)$ (with respect to infinitesimal rigidity), is one whose corresponding
instantiation of $R(G)$ has maximal rank over all instantiations.

A subset of edges of a graph $G$  is  said to be 
\define{independent} if 
the corresponding set of rows of $R(G)$ are generically independent. 
The maximal independent set  yields the rank of $G$ in the \define{$d$-dimensional rigidity matroid} (independent sets of edges of 
the complete graph).
The graph (resp. generic framework) is 
\define{(resp. infinitesimally) rigid} if the number of generically independent rows or 
the rank of $R(G)$  is maximal, i.e., $d|V| - {d+1 \choose 2}$, where 
${d+1 \choose 2}$ is the number of Euclidean isometries
in $\R^d$ \cite{Graver}.

For frameworks in polyhedral norms (including the $l_p$ norms), 
    Kitson \cite{Kitson:2014} 
    has defined properties such as {\em well-positioned, regular} analogous  to the above,
    which have been used 
    to show 
(infinitesimal) rigidity 
to be a generic property of frameworks.

We refer the reader to Kitson's paper for a precise definition. Intuitively, a well-positioned $d-$dimensional framework under norm $||.||$ is one in whose $d$-dimensional neighborhood in $||.||$-normed space the pairwise distances between points can be expressed in polynomial form.  

\begin{itemize}
    \item
    For general 
    $l_p$ frameworks in arbitrary
    dimension, $d$-flattenability of a graph $G$ is equivalent to
    all  generic frameworks of $G$ being $d$-flattenable. 
\item
    However, already for the Euclidean or $l_2$ case, 
    $d$-flattenability
    is not a generic property of  frameworks (in arbitrary dimension),
    and neither is the convexity of 
    Cayley configuration
    spaces over specified non-edges of a $d$-dimensional framework. 
    The latter uses minimal, 1-dof Henneberg-I frameworks for $d=2$
    constructed in 
    \cite{SiWa:2012,SiWa:2013}. 
\item
    The {\sl existence} 
    of a generic $d$-flattenable framework (in arbitrary dimension) 
    is equivalent to independence of the rows of the generic $d$-dimensional 
rigidity matrix of its graph 
-  we use the genericity concepts developed by Kitson \cite{Kitson:2014}
for $l_p$ norms.
    \end{itemize}

\begin{figure} 
    \centering\tiny
    \def\svgwidth{0.3\linewidth} 
    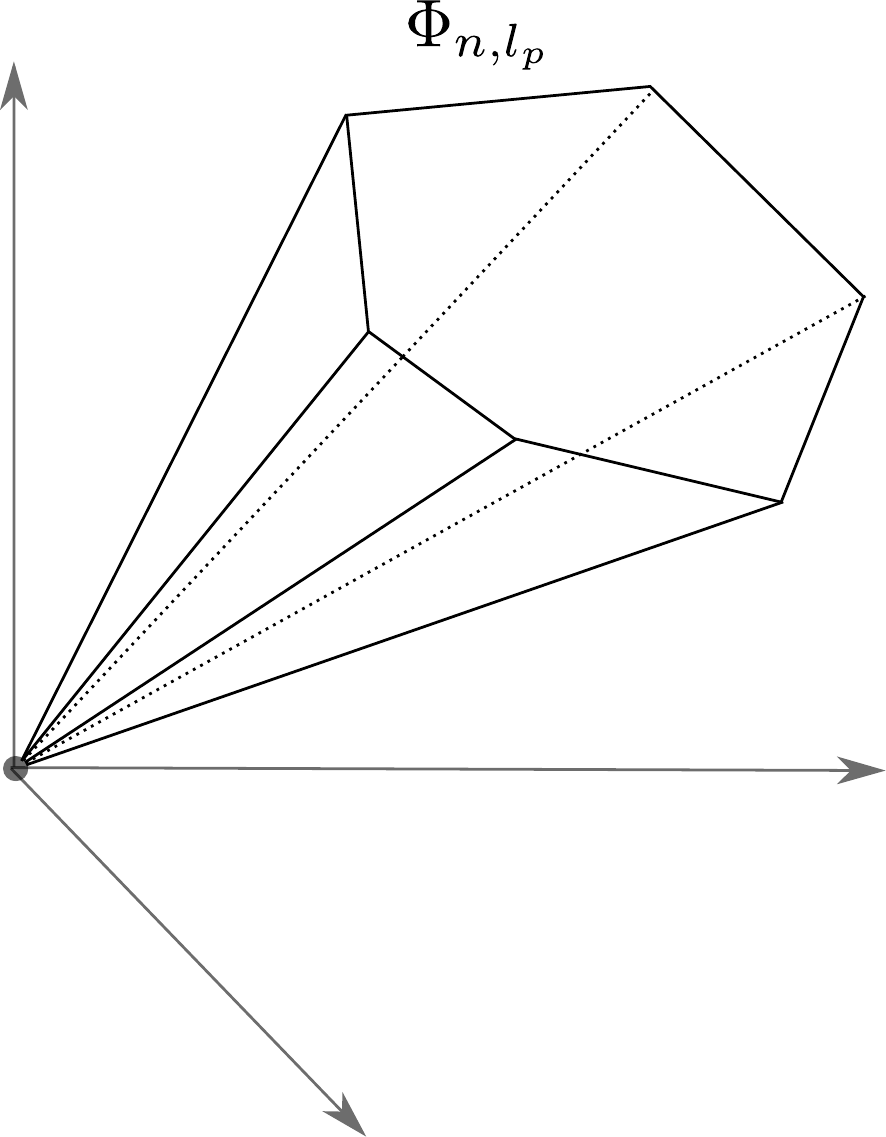 
    \def\svgwidth{0.3\linewidth} 
    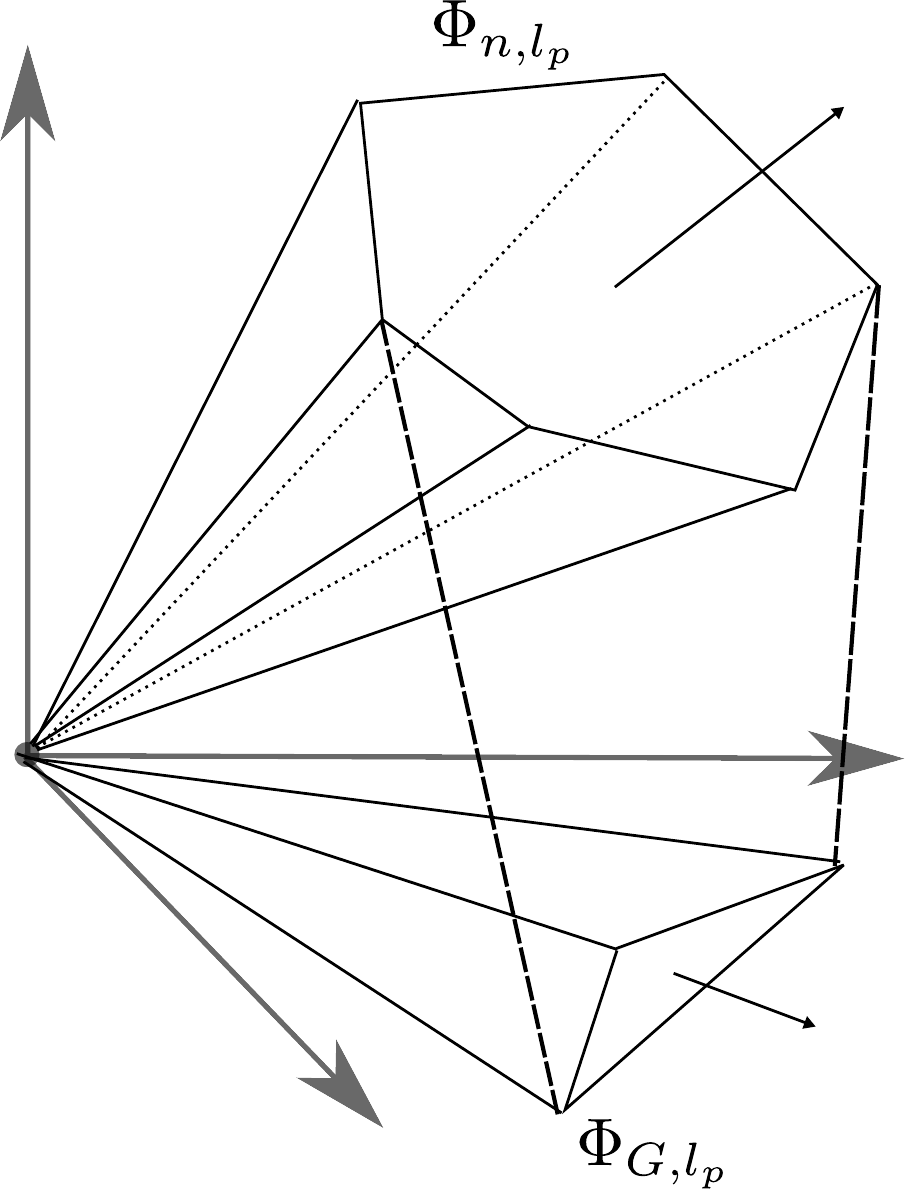
    \def\svgwidth{0.3\linewidth} 
    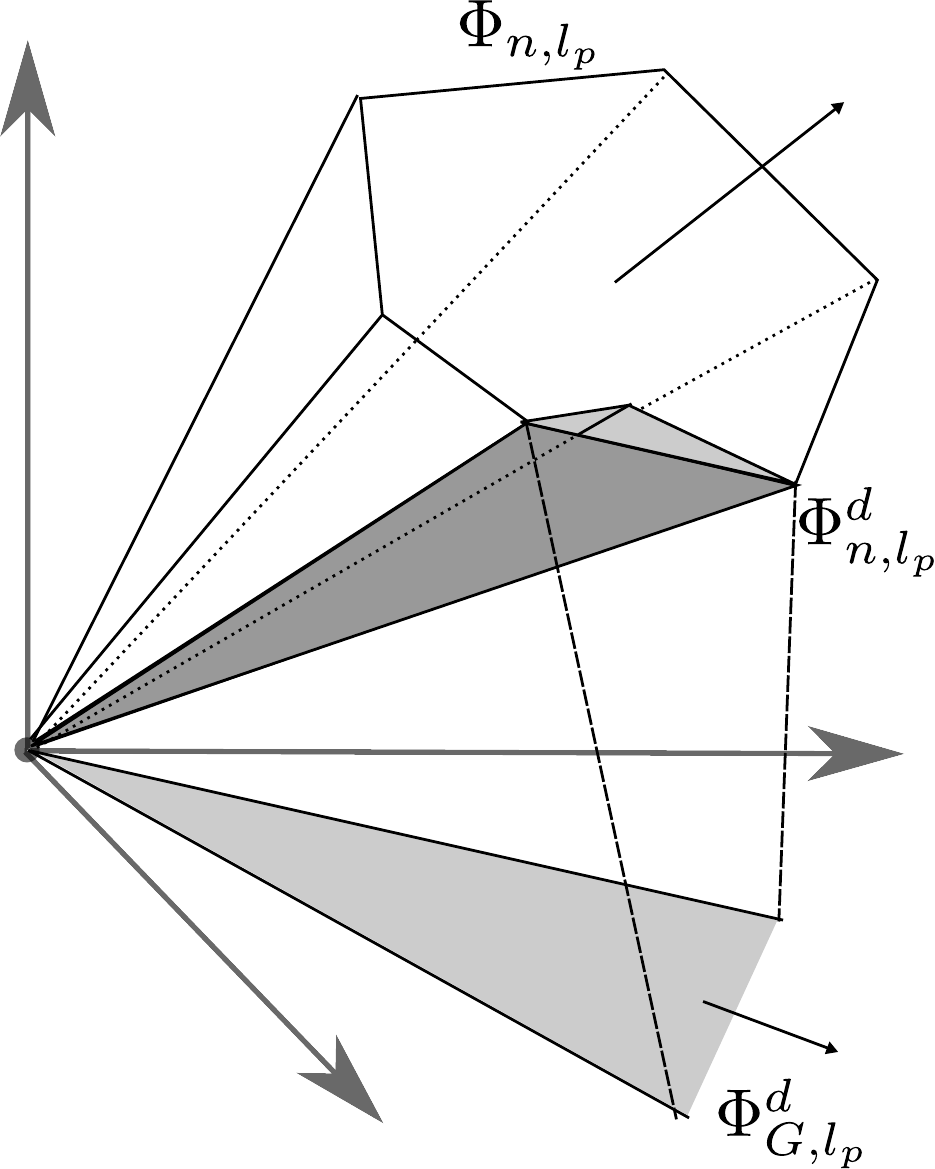
    \caption{Visualizing operations common to our proofs. On the left we have the cone of realizable distance vectors under $l_p$. \emph{It is shown here as a polytope, but in general that is not the case; these are not rigorous figures -- their purpose is intuitive visualization}. The cone lives in ${n \choose 2}$-dimensional space where each dimension is a pairwise distance among $n$ points. In the middle is a projection onto the edges of some graph. This will yield a lower dimensional object (unless $G$ is complete). On the right, a $d$-dimensional stratum is highlighted and lines show the projection onto coordinates representing edges of a graph. In general this stratum is not just a single face. Note that this projection is equal to the projection as the whole cone (middle) iff $G$ is $d$-flattenable.}
\label{fig:cone_dflat}
\end{figure}

    The next result, also in Section \ref{meera}  concerns the cone $\cone{n}{l_p}$    
 consisting of vectors $\delta_r$ of 
    pairwise $l_p^p$-distances of $n$-point configurations $r$. 
 (A proof that this set is a cone 
    can be found in \cite{Ball},  which also applies to infinite dimensional settings).

    The 
\define{$d$-dimensional stratum} of this cone consists of 
pairwise distance vectors of $d$-dimensional point configurations and is denoted
$\conestrat{n}{l_p}{d}$.
The projection or shadow of this cone (resp. stratum) 
on a subset of coordinates 
    i.e., pairs corresponding to the edges of a graph $G$
    is denoted $\cone{G}{l_p}$ (resp. $\conestrat{G}{l_p}{d}$).
    This projection is the set of 
    realizable  edge-length
    vectors $\delta^G$ of linkages $(G,\delta^G)$ in $l_p^p$ (resp. in $d$-dimensions) (See Figure \ref{fig:cone_dflat}).

Notice that  $\cone{n}{l_p}$ is the same as $\cone{K_n}{l_p}$, where
$K_n$ is the complete graph on $n$ vertices.
The \define{$l_p$-flattening dimension} of a graph $G$ (resp. class $C$ of graphs)
is the minimum dimension $d$ for which $G$ (resp. all graphs in $C$) are flattenable in
$l_p$.
Let $n_p$ be the flattening dimension of $K_n$.  
It is not hard to show \cite{DezaAndLaurent} that in fact  $n_p \le \R^{{n\choose2}}$  
   (using this finite dimensionality,  a slight simplification of Ball's proof of convexity of  $\cone{n}{l_p}$  is presented for completeness in Section \ref{joel1}).  For the Euclidean or $l_2$ case, a further result 
of Barvinok \cite{Barvinok1995} shows that the flattening dimension of 
any graph $G = (V,E)$ (although he did not use this
terminology),  is 
at most $O(\sqrt(|E|))$.
    Notice  additionally that $\Phi_{F,l_p}^{d}(G,\delta^G)$, namely 
    the $d$-dimensional Cayley configuration space of a 
    linkage $(G,\delta^G)$ in $l_p$
    is the coordinate shadow of the  
    $(G,\delta^G)$-fiber of $\conestrat{G\cup F}{l_p}{d}$, i.e. all 
    linkages $(G \cup F, \delta_{G \cup F})$ that have $\delta_G$
    assigned to the edges of $G$,
    on the coordinate set $F$ (see Figure \ref{fig:convexcayley}). In this paper, we show the following:

    \begin{figure} 
    \centering\tiny
    \def\svgwidth{0.29\linewidth} 
    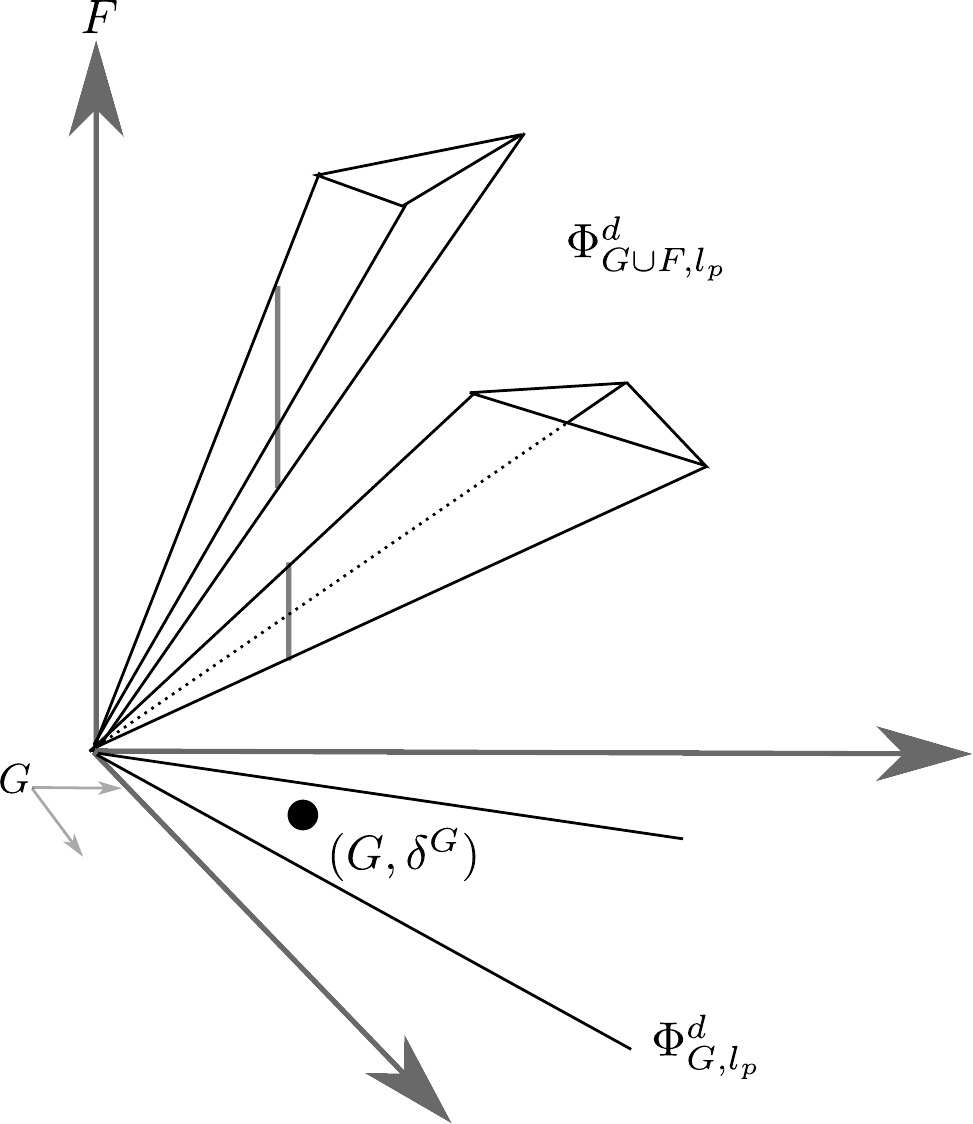 
    \def\svgwidth{0.3\linewidth} 
    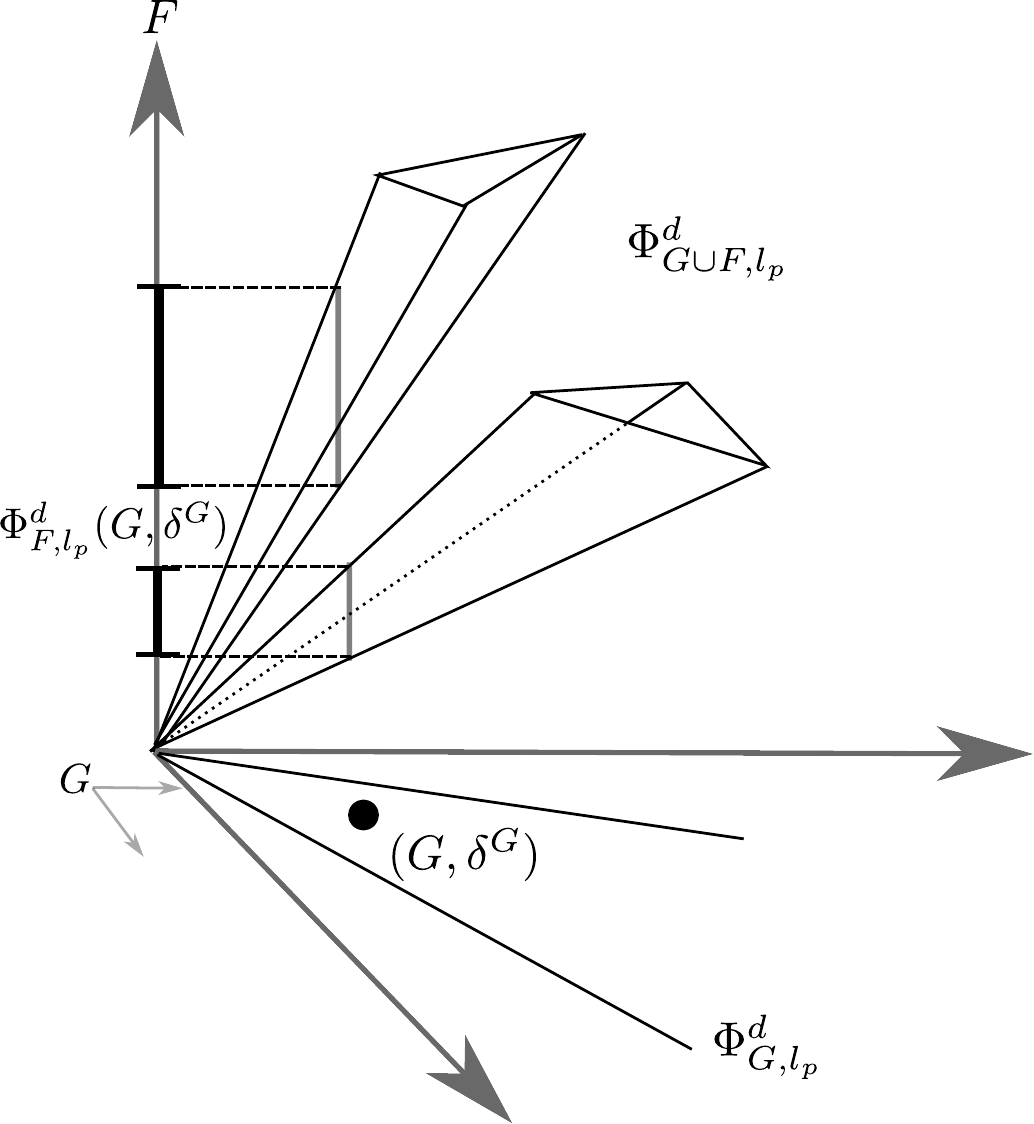
    \def\svgwidth{0.28\linewidth} 
    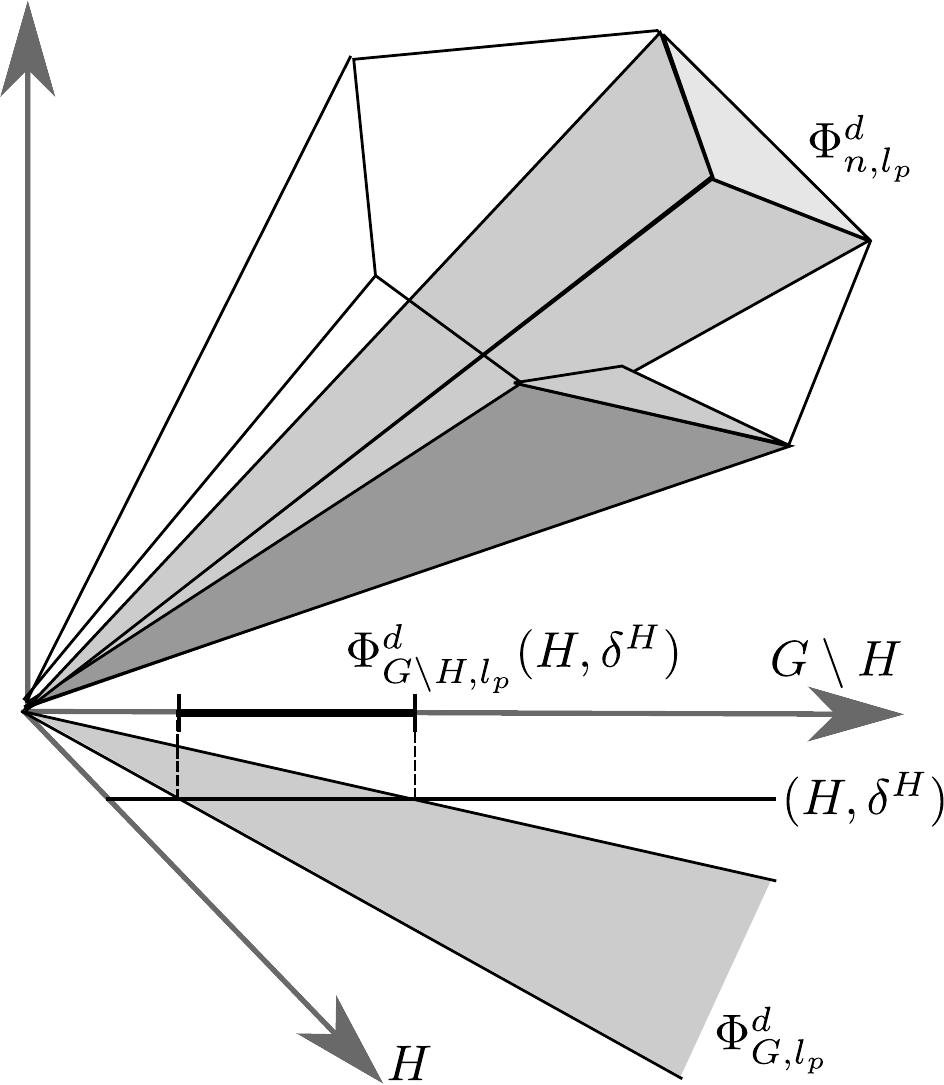
    \caption{This is an example of $\conestrat{G \cup F}{l_p}{d}$ that is not convex. The linkage $(G,\delta^G)$ and its fiber in $\conestrat{G \cup F}{l_p}{d}$ are shown on the left. Note that the fiber is not convex. In the middle, this fiber is then projected onto the remaining edges of $G \cup F$ to form $\cayley{F}{l_p}{d}{G}{ \delta^G}$. Note that it is not convex either. On the right, $\conestrat{n}{l_p}{d}$ is projected onto the edges of some $d$-flattenable $G$ (note that this is the same as projection of $\cone{n}{l_p}$). The inherent Cayley configuration space corresponding to some subgraph $G \setminus H$ of $G$ is then shown projected onto the edges of $G \setminus H$. This projection is convex.}
\label{fig:convexcayley}
\end{figure}

    \begin{itemize}
        \item
    Consider 
    the coordinate shadow (or projection) of any
    neighborhood in the stratum
$\conestrat{n}{l_p}{d}$
     onto
    the edges of an $n$-vertex graph $G$. The dimension 
    of this coordinate shadow
    equals the rank of $G$ (size of maximal  independent set) in the generic 
    $d$-dimensional rigidity matroid \cite{Kitson:2014} in $l_p$. 
    \end{itemize}

    In Section \ref{joel2}, we give stronger results for specific norms for $d=2$:
    \begin{itemize}
        \item
     The class of 2-flattenable graphs for the $l_1$-norm 
     (and $l_\infty$-norm)  strictly contains 
    the class of 2-flattenable graphs for the Euclidean $l_2$-norm case, 
    (the latter
    being
    the partial 2-tree graphs that avoid the $K_4$ minor). 
    In particular, $K_4$ is 2-flattenable in $l_1$.  
    Graphs with {\sl Banana} graphs as minors, 
    however, are not 2-flattenable. We also consider 
    other graphs such as the {\sl 4-wheel} and 
    the {\sl doublet and $K_{3,3}$} towards obtaining a 
    forbidden-minor characterization of 2-flattenability in the $l_1$-norm.
    \end{itemize}

    Finally, in Section \ref{open}, a number of conjectures and 
    open problems are posed.

\subsection{Related Results}
\label{related}
The structure of the cone $\cone{n}{l_p}$, its strata and faces are well-studied.
The fact that the object is a cone even in the infinite dimensional case is a 
useful observation by Ball \cite{Ball}.
For $l_2$ this is called the \define{Euclidean distance matrix (EDM)} 
cone \cite{Dattorro2011,Tarazaga,Pataki}, which is  
a simple, linear transformation of the cone of positive semidefinite matrices,
a fact first observed by Schoenberg \cite{Schoenberg35}.
Consequently understanding its structure is important in 
semidefinite programming relaxations and the so-called sums of squares
method with numerous applications \cite{Parrilo03,Parillo2004,Steurer2011}. 
Connections between combinatorial rigidity and the structure of the EDM
have been investigated extensively by Alfakih \cite{Alfakih2000,Alfakih2010}. 
The reader is additionally referred to \cite{DezaAndLaurent} for a
comprehensive survey of key results about the EDM cone, including observations about the face
structure and dimensional strata of the EDM cone.
The $l_1$-cone is often called the \define{cut cone}, whose extreme rays correspond to 1-dimensional realizations and 
characteristic vectors of cuts in a complete graph.  The cone has been
studied by \cite{Avis1991,DezaAndLaurent,Witsenhausen1986} and plays an important
role in metric space embeddings used in the study of (non-)approximability in polynomial time,  of
NP-hard optimization problems, including ramifications of the unique games conjecture
\cite{Matousek2004}, \cite{Khot05}. Kitson's recent work
\cite{Kitson:2014} has shown that many of the results in combinatorial
rigidity for the Euclidean or $l_2$ norm case have parallels in the case of
general polyhedral norms, including the $l_p$ norms.

    \section{$l_p$: Flattenability and Inherent Convex Cayley Configuration Space} \label{joel1}

In this section, we improve in the work done in \cite{SiGa:2010} in relating $d$-flattenability to convex inherent Cayley configuration spaces of a given graph. The results of this section hold
for $p = \infty$ as well, appealing to the definition of
$l_\infty^\infty$ cone as the limit of the $l_p^p$ cones as $p \rightarrow \infty$. Our main result of this section  follows.

\begin{theorem}\label{thm:covexcayley_d-flattenable}
  For any $l_p$ norm, a graph $G$ is $d$-flattenable if and only if $G$ admits convex inherent $d$-dimensional Cayley configuration spaces for each of its subgraphs.
\end{theorem}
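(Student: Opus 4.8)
The plan is to prove both implications through the geometry of the $l_p^p$ distance cone $\cone{n}{l_p}$ and its $d$-dimensional stratum $\conestrat{n}{l_p}{d}$, leaning on two structural facts. First, $\cone{n}{l_p}$ is convex (Ball), so every coordinate projection $\cone{G}{l_p}$, and every operation of fixing some coordinates and projecting the rest, again yields a convex set. Second, because $l_p^p$ distances are additive across coordinates, $||r(v)-r(w)||_p^p = \sum_k |r_k(v)-r_k(w)|^p$, a $d$-dimensional distance vector is exactly a sum of $d$ one-dimensional ones; hence $\cone{n}{l_p}$ is the set of finite conical combinations of vectors in $\conestrat{n}{l_p}{1}$, and, projecting, $\cone{G}{l_p}$ is the conical hull of $\conestrat{G}{l_p}{1}$. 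I will also use the reformulation (Figure \ref{fig:cone_dflat}) that $G$ is $d$-flattenable iff $\conestrat{G}{l_p}{d} = \cone{G}{l_p}$, i.e. the $d$-stratum and the whole cone cast the same shadow on the edges of $G$.

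For the forward direction, I would assume $G$ is $d$-flattenable, fix a partition $G = H \cup F$ and lengths $\delta^H$, and consider the all-dimensional Cayley space $\cone{F}{l_p}(H,\delta^H)$, which is the projection onto the $F$-coordinates of the $\delta^H$-fiber of $\cone{G}{l_p}$. Being obtained from the convex set $\cone{G}{l_p}$ by intersecting with an affine subspace and then projecting, it is convex. Now $d$-flattenability of $G$ says precisely that any length vector $(\delta^H,\delta^F)$ on $E(G)$ realizable in some dimension is already realizable in dimension $d$, so $\cayley{F}{l_p}{d}{H}{\delta^H} = \cone{F}{l_p}(H,\delta^H)$, which is therefore convex. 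Since $d$-flattenability is minor-closed and every subgraph is a minor, the same argument applied to each subgraph delivers the inherent convexity claimed.

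For the converse, I would assume all inherent Cayley spaces are convex and instantiate the hypothesis at the degenerate partition with $H$ the edgeless spanning subgraph and $F = E(G)$, which gives that $\cayley{F}{l_p}{d}{H}{\cdot} = \conestrat{G}{l_p}{d}$ is convex. The key observation is that $\conestrat{G}{l_p}{d}$ is closed under positive scaling: scaling a $d$-dimensional configuration by $\lambda^{1/p}$ scales every $l_p^p$ distance by $\lambda$. A convex set closed under positive scaling is a convex cone, so $\conestrat{G}{l_p}{d}$ is a convex cone. It contains $\conestrat{G}{l_p}{1}$ (one-dimensional configurations are $d$-dimensional ones, as $d \ge 1$), hence it contains the conical hull of $\conestrat{G}{l_p}{1}$, which by the additivity fact above equals $\cone{G}{l_p}$. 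The reverse inclusion $\conestrat{G}{l_p}{d} \subseteq \cone{G}{l_p}$ is immediate, so $\conestrat{G}{l_p}{d} = \cone{G}{l_p}$ and $G$ is $d$-flattenable. Notably, the converse consumes only this single Cayley space.

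The step I expect to require the most care is the converse's appeal to the degenerate partition: one must confirm that the definition of inherent convex Cayley configuration space genuinely quantifies over the spanning edgeless subgraph, equivalently that $\conestrat{G}{l_p}{d}$ itself is among the spaces whose convexity is hypothesized, since it is exactly this convexity that the scaling-closure argument upgrades to a cone. The additivity identity and the identification of $\cone{n}{l_p}$ with \emph{finite} conical combinations of one-dimensional vectors are the load-bearing facts, with finiteness guaranteed by $n_p \le \binom{n}{2} < \infty$. For $p=\infty$ additivity degrades, so I would run the argument for finite $p$ and transport it to the $l_\infty^\infty$ cone defined as $\lim_{p\to\infty}$ of the $l_p^p$ cones, using that convexity and positive-scaling closure persist in the limit.
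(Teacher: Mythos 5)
Your proposal is correct and follows essentially the same route as the paper: the forward direction via convexity of $\cone{n}{l_p}$ plus sections and projections (identifying the $d$-dimensional Cayley space with the all-dimensional one using flattenability), and the converse via the empty/edgeless partition together with the fact that the cone is generated by $1$-dimensional configurations through coordinate-wise additivity of $l_p^p$ distances. Your packaging of the converse as ``convex $+$ closed under positive scaling $=$ convex cone containing the conical hull of $\conestrat{G}{l_p}{1}$'' is just a rephrasing of the paper's Proposition \ref{thm:convexhull_dimensions} (convex-hull) argument, and your treatment of $p=\infty$ by passing to the limit matches the paper's convention.
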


This ``only if'' direction of this statement was shown in \cite{SiGa:2010} for the $l_2$ norm. The argument only required the fact that the cone of squared distance vectors is convex. Hence, we can use the same proof if we can show $\cone{n}{l_p}$ is convex. The proof of the ``if'' direction requires that the cone is the convex hull of $l_p^p$ distance vectors in any dimension $d$. 


\begin{proposition}
\label{1-d Convex Hull}
  $\cone{n}{l_p}$ for general $l_p$ is contained in the convex hull of the $l_p^p$ distance vectors of the 1-dimensional $n$-point configurations in $\R$.
\end{proposition}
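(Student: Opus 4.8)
The plan is to reduce an arbitrary-dimensional configuration to its one-dimensional coordinate projections, exploiting two elementary facts about the $l_p^p$ distance: it is additive across coordinates, and it is homogeneous of degree $p$ under scaling. First I would fix an arbitrary configuration $r = (r_1, \ldots, r_n)$ with each $r_i \in \R^m$, and write $r_i = (r_i^1, \ldots, r_i^m)$. Since $\|r_i - r_j\|_p^p = \sum_{k=1}^m |r_i^k - r_j^k|^p$, the pairwise $l_p^p$-distance vector $\delta_r$ decomposes coordinatewise as $\delta_r = \sum_{k=1}^m \delta^{(k)}$, where $\delta^{(k)}$ is the $l_p^p$-distance vector of the one-dimensional configuration consisting of the $n$ points $r_1^k, \ldots, r_n^k \in \R$ obtained by projecting onto the $k$-th axis. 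Thus every element of $\cone{n}{l_p}$ is a finite sum of one-dimensional $l_p^p$-distance vectors.

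The remaining step is to turn this sum into a convex combination. Here I would use that the set $S$ of one-dimensional $l_p^p$-distance vectors is itself a cone through the origin: scaling a one-dimensional configuration by $t \ge 0$ scales every $l_p^p$ distance by $t^p$, and as $t$ ranges over $[0,\infty)$ so does $t^p$, so $S$ is closed under nonnegative scaling. Given the decomposition $\delta_r = \sum_{k=1}^m \delta^{(k)}$, I would rewrite it as $\delta_r = \sum_{k=1}^m \frac{1}{m}\,(m\,\delta^{(k)})$. Each $m\,\delta^{(k)}$ still lies in $S$ by the cone property, so $\delta_r$ is an explicit convex combination, with equal weights $1/m$, of members of $S$. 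This places $\delta_r$ in the convex hull of the one-dimensional $l_p^p$-distance vectors, which is exactly the claim.

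I expect the main obstacle to be conceptual rather than computational: recognizing that the coordinatewise decomposition produces a \emph{sum} rather than a convex combination, and that the degree-$p$ homogeneity of $l_p^p$ is precisely what lets one absorb the scalar $m$ into the summands to recover a genuine convex combination; everything else is a one-line verification. I would also note that the same scaling argument runs in reverse — rescaling the $k$-th coordinate axis by $\lambda_k^{1/p}$ realizes any convex combination $\sum_k \lambda_k \delta^{(k)}$ as the distance vector of an honest configuration — so that together with this proposition one identifies $\cone{n}{l_p}$ with the convex hull of one-dimensional distance vectors and hence recovers convexity of the cone; but only the stated containment is needed here, and it is the ingredient required for the ``if'' direction of Theorem \ref{thm:covexcayley_d-flattenable}.
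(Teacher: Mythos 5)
Your proposal is correct and follows essentially the same route as the paper's proof: decompose the $l_p^p$ distance vector coordinatewise into one-dimensional distance vectors, use the degree-$p$ homogeneity (rescaling a one-dimensional configuration by $m^{1/p}$) to absorb the factor $m$, and write the sum as the equal-weight convex combination $\sum_k \frac{1}{m}(m\,\delta^{(k)})$. The only cosmetic difference is that the paper also cites Ball's bound $n_p \le \binom{n}{2}$ to justify working in finite dimension, which your setup assumes implicitly.
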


\begin{proof}
  Take some $\delta \in \cone{n}{l_p}$. Let $r(1), ... r(n)$ denote some realization of the complete linkage $(K_n,\delta)$. We refer to this as a realization of $\delta$.  So, $r(i) \in \R^k$ for 
  some $k$. It was shown in \cite{Ball} that for any $l_p$-norm, the flattening dimension $n_p \le {n \choose 2}$, so there is a realization in some finite dimension. We have 
  \[
  \delta_{ij} = \| r(i) - r(j)\| _p^p = \displaystyle\sum_{l=1}^k |r_l(i) - r_l(j)|^p
  \]
  where $r_l(i)$ denotes the $l$th coordinate of the $i$th point. Then, if we construct  the matrix $\delta^l$ such that $\delta^l_{ij} = \| r_l(i) - r_l(j) \|_p^p$, then $\delta^l$ is a valid $l_p^p$ distance matrix with an $n$-point configuration in $\R$. This point configuration simply being $r_l(1), ..., r_l(n)$. Also, for any $\alpha > 0$, $\alpha \delta^l$ is a valid $l_p^p$ distance matrix with realization $\alpha^{\frac{1}{p}} r_l(1), ... \alpha^{\frac{1}{p}} r_l(n)$. Finally, $\delta = \sum_l \frac{1}{k} [k\delta^l]$, which is a convex combination of $n$-point configurations in $\R$.

\qed
\end{proof}

A well-known result shows that $\cone{n}{l_p}$ is convex.

\begin{observation}
\label{Omega is convex}
  $\cone{n}{l_p}$ for general $l_p$ is convex
\end{observation}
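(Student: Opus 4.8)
The plan is to upgrade Proposition~\ref{1-d Convex Hull} from a one-sided containment into an exact description of the cone as a convex hull, from which convexity is immediate. First I would recall that $\cone{n}{l_p}$ is already known to be a cone, i.e.\ closed under multiplication by positive scalars; consequently, to conclude convexity it suffices to identify $\cone{n}{l_p}$ with the convex hull of a generating set, or equivalently to establish closure under addition.

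Let $S$ denote the set of $l_p^p$ distance vectors arising from $1$-dimensional $n$-point configurations in $\R$. Proposition~\ref{1-d Convex Hull} supplies the inclusion $\cone{n}{l_p} \subseteq \mathrm{conv}(S)$, so the task reduces to the reverse inclusion $\mathrm{conv}(S) \subseteq \cone{n}{l_p}$. To prove it, I would take a finite convex combination $\sum_m \lambda_m \delta^{(m)}$ of elements of $S$, where each $\delta^{(m)}$ is the distance vector of a $1$-dimensional configuration $x^{(m)}_1,\dots,x^{(m)}_n \in \R$. Exactly as in the proof of Proposition~\ref{1-d Convex Hull}, rescaling the $m$-th configuration by $\lambda_m^{1/p}$ realizes $\lambda_m \delta^{(m)}$ as a genuine $l_p^p$ distance vector; stacking these rescaled $1$-dimensional configurations into a single configuration in $\R^M$, with $M$ the number of summands, produces a realization whose pairwise $l_p^p$ distances are precisely $\sum_m \lambda_m \delta^{(m)}$. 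Hence every convex combination of points of $S$ lies in $\cone{n}{l_p}$, and the two inclusions give $\cone{n}{l_p} = \mathrm{conv}(S)$, a convex set.

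The single computation driving everything is the coordinatewise separability of the $l_p^p$ functional: for $u,v \in \R^a$ and $u',v' \in \R^b$ one has $\|(u,u')-(v,v')\|_p^p = \|u-v\|_p^p + \|u'-v'\|_p^p$. This is exactly what makes concatenation of configurations add their distance vectors, and it is the only place where the structure of $l_p^p$ (as opposed to $l_p$ itself) is used. I expect this to be the conceptual crux rather than a genuine obstacle: once additivity under direct sums is in hand, convexity of a cone follows formally. An equally short alternative, bypassing the hull language entirely, is to observe directly that for $\delta,\delta' \in \cone{n}{l_p}$ and $\lambda \in [0,1]$, the cone property places $\lambda\delta$ and $(1-\lambda)\delta'$ in $\cone{n}{l_p}$, and concatenating their realizations realizes the sum $\lambda\delta + (1-\lambda)\delta'$, giving closure under convex combinations immediately.
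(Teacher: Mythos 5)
Your proposal is correct and rests on exactly the same computation as the paper's proof: additivity of $\|\cdot\|_p^p$ under concatenation of coordinates, combined with rescaling realizations by $\lambda^{1/p}$. Indeed, the ``equally short alternative'' you give at the end --- realizing $\lambda\delta_r+(1-\lambda)\delta_s$ by the concatenated configuration $(\lambda^{1/p}r,(1-\lambda)^{1/p}s)$ --- is verbatim the paper's argument, and your main route through $\mathrm{conv}(S)$ is just a repackaging of the same idea (which the paper records separately as Proposition~\ref{thm:convexhull_dimensions}).
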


\begin{proof}
  The proof for this result (see \cite{Ball}) is well known even for the infinite dimensional case. Here we give a simplified proof for finite dimensions for completeness.

  Let $r$ and $s$ be two $n$-point configurations with corresponding distance vectors $\delta_r, \delta_s \in \cone{n}{l_p}$. Assume $r$ and $s$ are realized in some dimension $k$. Let $0 \le \lambda \le 1$ and consider the convex combination $\delta = \lambda \delta_r + (1-\lambda) \delta_s$. We will construct an $n$-point configuration in $2k$ dimensions with $\delta$ as its distance matrix. Note that $\delta_{ij} = \| r(i) - r(j) \|_p^p + \| s(i) - s(j) \|_p^p = \sum_l^k |r_l(i) - r_l(j)|^p + \sum_l^k |s_l(i) - s_l(j)|^p$. Then a realization for $t$ can be found by simply concatenating the coordinates of $r$ and $s$ and scaling them appropriately:
  \begin{align*}
    t = (\lambda^{\frac{1}{p}} r, (1-\lambda)^{\frac{1}{p}} s)   
  \end{align*} 
  It is easy to verify that $t$ is a realization of $\delta$.
\qed
\end{proof}

Proposition \ref{1-d Convex Hull} and Observation \ref{Omega is convex} lead to the following, which is useful to us in proving Theorem \ref{thm:covexcayley_d-flattenable}.

\begin{proposition}\label{thm:convexhull_dimensions}
  $\cone{n}{l_p}$, $1 \le p \le \infty$ is the convex hull of the $l_p^p$ distance vectors of the 1-dimensional, $n$-point configuration vectors in $\R$. 
\end{proposition}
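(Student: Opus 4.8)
The plan is to establish the set equality by a double inclusion, each direction of which is essentially immediate from the two results just obtained. Write $S$ for the set of $l_p^p$ distance vectors arising from $1$-dimensional $n$-point configurations in $\R$; the claim is then $\cone{n}{l_p} = \mathrm{conv}(S)$.

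For the inclusion $\mathrm{conv}(S) \subseteq \cone{n}{l_p}$, I would first observe that every element of $S$ already lies in $\cone{n}{l_p}$: a $1$-dimensional point configuration is in particular a configuration in $\R^k$ for any $k \ge 1$, so its vector of pairwise $l_p^p$ distances is a realizable distance vector and hence belongs to the cone. Thus $S \subseteq \cone{n}{l_p}$. Since Observation \ref{Omega is convex} guarantees that $\cone{n}{l_p}$ is convex, and the convex hull of a set is by definition the smallest convex set containing it, we conclude $\mathrm{conv}(S) \subseteq \cone{n}{l_p}$. For the reverse inclusion $\cone{n}{l_p} \subseteq \mathrm{conv}(S)$ there is nothing new to prove: this is exactly the content of Proposition \ref{1-d Convex Hull}. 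Combining the two inclusions yields $\cone{n}{l_p} = \mathrm{conv}(S)$ for every integer $p$ with $1 \le p < \infty$.

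There is no real obstacle in the finite-$p$ case; the only point demanding care is the extension to $p = \infty$, which lies outside the range addressed directly by Proposition \ref{1-d Convex Hull} and Observation \ref{Omega is convex}. For this I would appeal to the definition of the $l_\infty^\infty$ cone as the limit of the $l_p^p$ cones as $p \to \infty$. Since each finite-$p$ cone equals $\mathrm{conv}(S_p)$, and both convexity and the containment in the $1$-dimensional convex hull are the kind of closed conditions one expects to persist under this limit, the same characterization should carry over. The step I would check most carefully is precisely that the inclusion of Proposition \ref{1-d Convex Hull} survives the passage to the limit, so that the limiting cone is genuinely the convex hull of its $1$-dimensional distance vectors and not merely some convex set containing them.
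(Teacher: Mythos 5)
Your proposal is correct and follows essentially the same route as the paper: the paper's one-line proof likewise combines Proposition \ref{1-d Convex Hull} (for the inclusion of the cone in the hull) with Observation \ref{Omega is convex} and the fact that the $1$-dimensional distance vectors already lie in $\cone{n}{l_p}$ (for the reverse inclusion). Your additional caution about the $p=\infty$ case, handled via the limit definition of the $l_\infty^\infty$ cone, is a reasonable point that the paper's proof leaves implicit.
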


\begin{proof}
  Follows from Proposition \ref{1-d Convex Hull} and Observation \ref{Omega is convex} and the fact that in Proposition \ref{1-d Convex Hull}, the points making up the convex hull are in $\cone{n}{l_p}$.
\qed
\end{proof}

Since from the 1-dimensional vectors, we can construct - as convex combinations - vectors realizable in any arbitrary $d$-dimensions, we get the following Corollary:

\begin{corollary}
  $\cone{n}{l_p}$ is the convex hull of the vectors in $\conestrat{n}{l_p}{d}$ for any $d$ for general $l_p$.
\end{corollary}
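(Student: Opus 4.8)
The plan is to establish the equality $\cone{n}{l_p} = \mathrm{conv}\bigl(\conestrat{n}{l_p}{d}\bigr)$ by a two-sided inclusion, sandwiching the convex hull of the $d$-dimensional stratum between two sets that we already know coincide with the full cone. Both inclusions become almost immediate once the right containments among the underlying configuration sets are in place, so the work is essentially bookkeeping on top of Proposition~\ref{thm:convexhull_dimensions} and Observation~\ref{Omega is convex}.

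First I would argue $\mathrm{conv}\bigl(\conestrat{n}{l_p}{d}\bigr) \subseteq \cone{n}{l_p}$. Every vector in $\conestrat{n}{l_p}{d}$ is by definition the $l_p^p$-distance vector of some configuration of $n$ points in $\R^d$, hence the distance vector of a configuration in a finite-dimensional space, so $\conestrat{n}{l_p}{d} \subseteq \cone{n}{l_p}$. Since $\cone{n}{l_p}$ is convex by Observation~\ref{Omega is convex}, it contains the convex hull of each of its subsets, which gives this inclusion.

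For the reverse inclusion I would show that the $1$-dimensional distance vectors---whose convex hull is exactly $\cone{n}{l_p}$ by Proposition~\ref{thm:convexhull_dimensions}---are themselves contained in $\conestrat{n}{l_p}{d}$. The key observation is that a $1$-dimensional configuration $x(1),\dots,x(n)\in\R$ can be viewed as a configuration in $\R^d$ by appending $d-1$ zero coordinates, i.e.\ placing the points on a coordinate axis; this leaves every pairwise $l_p^p$ distance unchanged, since $\|(x(i),0,\dots,0)-(x(j),0,\dots,0)\|_p^p = |x(i)-x(j)|^p$. Thus each $1$-dimensional distance vector already lies in $\conestrat{n}{l_p}{d}$ for every $d\ge 1$, so $\cone{n}{l_p} = \mathrm{conv}\bigl(\{\text{$1$-dimensional vectors}\}\bigr) \subseteq \mathrm{conv}\bigl(\conestrat{n}{l_p}{d}\bigr)$. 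Combining the two inclusions yields the desired equality.

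The one point requiring care---and the only place the argument could stumble---is the reading of ``$d$-dimensional point configuration'' in the definition of the stratum: the embedding step above relies on the stratum being closed under viewing a lower-dimensional configuration as a (possibly degenerate) configuration in $\R^d$, i.e.\ on $d$ denoting the ambient dimension rather than requiring the points to affinely span $\R^d$. Under the paper's convention this is precisely the intended reading, so no limiting argument is needed; were the stratum instead restricted to full-dimensional configurations, one would recover the same conclusion by an additional density argument, since the distance vector depends continuously on the configuration and a $1$-dimensional configuration is a limit of full-dimensional ones.
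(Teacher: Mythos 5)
Your proof is correct and takes essentially the same route as the paper, whose entire justification is the one-sentence remark preceding the corollary: both arguments rest on Proposition~\ref{thm:convexhull_dimensions} together with the convexity of $\cone{n}{l_p}$ from Observation~\ref{Omega is convex}. You merely make explicit what the paper leaves implicit, namely that the $1$-dimensional distance vectors lie in $\conestrat{n}{l_p}{d}$ via the (axis-parallel, hence distance-preserving) zero-padding embedding, which is indeed the intended reading of the stratum under the paper's conventions.
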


The following observation is useful in characterizing $d$-flattenability.

\begin{observation}\label{obs:projection_is_cone}
  If $G$ is $d$-flattenable, then the projection of $\conestrat{n}{l_p}{d}$ onto the edges of $G$ is exactly the projection $\cone{n}{l_p}$ onto the edges of $G$.
\end{observation}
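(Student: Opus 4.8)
The plan is to establish the stated equality of coordinate shadows by proving the two inclusions separately, with the nontrivial direction reducing directly to the definition of $d$-flattenability. Write $\pi_G$ for the projection of $\R^{\binom{n}{2}}$ onto the coordinates indexed by the edges of $G$, so that the two sides of the claimed identity are $\pi_G(\conestrat{n}{l_p}{d})$ and $\pi_G(\cone{n}{l_p})$.

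First I would dispose of the easy inclusion. Since every $d$-dimensional $n$-point configuration is in particular an $n$-point configuration in some (namely $d$) dimension, the stratum is contained in the whole cone, $\conestrat{n}{l_p}{d} \subseteq \cone{n}{l_p}$. Projection is monotone under set inclusion, so $\pi_G(\conestrat{n}{l_p}{d}) \subseteq \pi_G(\cone{n}{l_p})$ with no use of the flattenability hypothesis.

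For the reverse inclusion I would take an arbitrary $\delta^G \in \pi_G(\cone{n}{l_p})$ and unpack what membership means: there is a full $l_p^p$ distance vector $\delta \in \cone{n}{l_p}$ with $\pi_G(\delta) = \delta^G$, hence an $n$-point configuration $r$ in some finite dimension $k$ (finiteness guaranteed by the bound $n_p \le \binom{n}{2}$ from \cite{Ball}) whose pairwise $l_p^p$ distances are $\delta$. Restricting attention to the edges of $G$, the configuration $r$ realizes the linkage $(G,\delta^G)$ in dimension $k$; here I would pass between the $l_p^p$ values recorded in the cone and the $l_p$ edge lengths used in the definition of a linkage by taking $p$-th roots, which is a bijection on nonnegative reals and hence harmless. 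Now the hypothesis that $G$ is $d$-flattenable applies verbatim: since $(G,\delta^G)$ has a realization under the norm, it has a realization $r'$ in $d$-dimensional $l_p$-space. Extending $r'$ from the edges of $G$ to all $\binom{n}{2}$ pairs yields a vector $\delta' \in \conestrat{n}{l_p}{d}$ with $\pi_G(\delta') = \delta^G$, which exhibits $\delta^G \in \pi_G(\conestrat{n}{l_p}{d})$ and closes the argument.

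The content of the statement is really just a reformulation of $d$-flattenability in the language of coordinate shadows of the $l_p^p$ cone, so I do not expect a genuine obstacle. The one point requiring care is the bookkeeping between the three parametrizations in play --- the full distance vector in the cone, its restriction to the edges of $G$, and the $l_p$ versus $l_p^p$ convention --- and in particular the observation that projecting onto the edges of $G$ deliberately discards the non-edge distances, which is precisely what makes the flattenability hypothesis (a statement only about edge lengths) exactly the right tool and what permits the realization dimension to drop from $k$ to $d$.
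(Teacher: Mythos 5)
Your proof is correct and is exactly the intended argument: the paper states this as an Observation with no proof at all, treating it as an immediate unpacking of the definition of $d$-flattenability in terms of coordinate shadows, which is precisely what you have written out (trivial inclusion from $\conestrat{n}{l_p}{d}\subseteq\cone{n}{l_p}$, and the reverse inclusion by realizing the projected edge-length vector in some finite dimension and invoking flattenability to drop to dimension $d$). Your care about the $l_p$ versus $l_p^p$ convention and the discarding of non-edge distances is sound and does not change the substance.
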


Using these results, we can now prove the ``if'' part of Theorem \ref{thm:covexcayley_d-flattenable}.

\begin{proof}[of Theorem \ref{thm:covexcayley_d-flattenable}]
    Suppose $G$ is $d$-flattenable under some $l_p$-norm. Then, because $\cone{|V|}{l_p}$ is convex, $\cone{G}{l_p}$ is convex. From Observation \ref{obs:projection_is_cone}, $\conestrat{G}{l_p}{d}$ is convex. Given a subgraph $F$ of $G$, if we break $G$ into $H$ and $F$ and fix the values of $E$ corresponding to a linkage $(H,\delta^H)$, we are taking a section of $\conestrat{H \cup F}{l_p}{d}$, which is again convex. This is also exactly the Cayley configuration space $\cayley{F}{l_p}{d}{H}{\delta^H}$. Note that this holds for any partition $H$ and $F$, so $G$ always admits a convex $d$-dimensional Cayley configuration space for each of its subgraphs.

    For the other direction, suppose for a linkage $(G,\delta^G)$, all $d$-dimensional Cayley configurations corresponding to subgraphs of $G$, $\cayley{F}{l_p}{d}{G \setminus F}{\delta^{G \setminus F}}$ are convex. Certainly this holds for the empty subgraph as well. We note that $\cayley{G}{l_p}{d}{\emptyset}{\delta^\emptyset}$ is just $\conestrat{G}{l_p}{d}$ and because it is convex, $\conestrat{G}{l_p}{d}$ is its own convex hull. We also know that the convex hull of $\conestrat{G}{l_p}{d}$ is the projection of the convex hull of $\conestrat{|V|}{l_p}{d}$. By Proposition \ref{thm:convexhull_dimensions} and its Corollary, we know this to be the entire cone $\cone{|V|}{l_p}$. Thus, $\conestrat{G}{l_p}{d}$ = $\cone{G}{l_p}$. Hence, $G$ is $d$-flattenable.
\qed
\end{proof}

This result provides a nice link between $d$-flattenability and convex Cayley configuration spaces. It leads to the following tools.

\begin{corollary}
  Having a $d$-dimensional convex Cayley configuration space on all subgraphs is a minor-closed property.
\end{corollary}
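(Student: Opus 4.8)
The plan is to avoid any direct manipulation of Cayley configuration spaces under edge deletion and contraction, and instead reduce the claim entirely to Theorem~\ref{thm:covexcayley_d-flattenable} together with the fact, already recorded in the introduction, that $d$-flattenability is minor-closed under any norm. The key observation is that the equivalence in Theorem~\ref{thm:covexcayley_d-flattenable} is a biconditional quantified over \emph{every} graph: for each graph $G$, admitting convex inherent $d$-dimensional Cayley configuration spaces on all subgraphs holds if and only if $G$ is $d$-flattenable. Thus the two conditions carve out exactly the same class of graphs, and minor-closedness can simply be transported from one to the other.

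Concretely, let $P$ denote the property ``admits convex inherent $d$-dimensional Cayley configuration spaces on all subgraphs,'' and suppose $G$ has $P$. By the ``only if'' direction of Theorem~\ref{thm:covexcayley_d-flattenable}, $G$ is $d$-flattenable. Since $d$-flattenability is minor-closed, every minor $G'$ of $G$ is again $d$-flattenable. Applying the ``if'' direction of the same theorem to each such $G'$ yields that every minor $G'$ of $G$ has $P$. Hence $P$ is preserved under taking minors, which is precisely the assertion of the corollary. The entire argument is this three-link chain: $P(G) \Rightarrow G$ flattenable $\Rightarrow G'$ flattenable for all minors $G'$ $\Rightarrow P(G')$.

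I do not expect a genuine obstacle here, as the mathematical content is carried entirely by Theorem~\ref{thm:covexcayley_d-flattenable} and by the minor-closedness of $d$-flattenability. The only point requiring care is that the theorem must be instantiated at each minor $G'$ individually, which is legitimate precisely because it is stated as a per-graph equivalence with no side hypotheses; this is exactly how it is phrased. It is worth noting that a \emph{direct} proof, one that argued convexity of the relevant Cayley spaces of $G'$ is preserved when an edge of $G$ is deleted or contracted, would be considerably more delicate (in particular, contraction forces a zero-length non-edge and changes the vertex set, so the bookkeeping over partitions $G = H \cup F$ becomes awkward); routing through flattenability sidesteps all of this.
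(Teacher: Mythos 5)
Your proof is correct and is exactly the route the paper intends: the corollary is stated as an immediate consequence of Theorem~\ref{thm:covexcayley_d-flattenable} (the introduction calls it a ``direct corollary''), transporting the minor-closedness of $d$-flattenability across the biconditional just as you do. No further comment is needed.
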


Another immediate result is that $d$-flattenability and convex Cayley configuration spaces have the same forbidden minor characterizations for given $d$ under the same $l_p$-norm. This gives us a nice tool when trying to find forbidden minors for other $l_p$ norms:

\begin{observation}\label{obs:not_convex_not_flattenable}
  If for some assignment of distances $l$ to some edges $E$ of of $G$ leads to a non-convex $\cayley{F}{l_p}{d}{G}{l}$, then $G$ is not $d$-flattenable.
\end{observation}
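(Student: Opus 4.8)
The plan is to read off this observation as the contrapositive of the forward (``only if'') direction of Theorem~\ref{thm:covexcayley_d-flattenable}, which has already been established. That direction asserts that whenever $G$ is $d$-flattenable, every $d$-dimensional inherent Cayley configuration space arising from a subgraph of $G$ --- that is, every space $\cayley{F}{l_p}{d}{H}{\delta^H}$ obtained by choosing a partition of the edges into $H$ and $F$ and fixing a length vector $\delta^H$ on $H$ --- is convex. Negating both sides gives precisely the claim: if even one such space fails to be convex, then $G$ cannot be $d$-flattenable.

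First I would assume, toward a contradiction, that $G$ is $d$-flattenable. The length assignment $l$ in the hypothesis fixes the edges of some subgraph $H = G \setminus F$ of $G$, and $\cayley{F}{l_p}{d}{G}{l}$ is exactly the inherent Cayley configuration space over $F$ for this choice. By the forward direction of Theorem~\ref{thm:covexcayley_d-flattenable}, this space is convex, contradicting the hypothesis that it is non-convex. Hence $G$ is not $d$-flattenable.

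The only point that needs care is matching the scope of the space named in the hypothesis to the class of spaces the theorem guarantees convex. Since $d$-flattenability is minor-closed --- and in particular closed under taking subgraphs --- every subgraph of a $d$-flattenable $G$ is itself $d$-flattenable, so the theorem's conclusion applies uniformly to all partitions into $H$ and $F$ and all length assignments. There is therefore no substantive obstacle; the work is purely in unwinding the definitions so that the space appearing in the hypothesis is recognized as one of those covered by the theorem. The value of isolating this statement is practical rather than technical: it packages the contrapositive as a ready-made certificate --- exhibiting a single non-convex Cayley configuration space --- for ruling out $d$-flattenability, which is exactly how it is applied later to identify forbidden minors under other $l_p$ norms.
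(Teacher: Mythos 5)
Your proof is correct and matches the paper's intent exactly: the paper states this observation without proof as an immediate consequence of Theorem~\ref{thm:covexcayley_d-flattenable}, and reading it as the contrapositive of the ``only if'' direction is precisely the intended argument. Your extra remark about minor-closedness to align the hypothesis with the theorem's scope is a harmless bit of added care.
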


We use this to show that the ``banana'' graph in $5$ vertices is not 2-flattenable for $l_1$ (and $l_\infty$) in Theorem \ref{thm:banana}. The banana is a $K_5$ graph with one edge removed.

\section{$l_2$: Flattenability, Genericity, Independence in Rigidity Matroid}
\label{meera}

In this section we show relationships between $d$-flattenability and combinatorial
rigidity concepts via the cone $\cone{n}{l_p}$.

The definition of $d$-flattenability of a graph $G$ in $l_p$ requires 
every $l_p$ framework of the graph $G$ -- in an arbitrary dimension -- 
to be $d$-flattenable.
   
To accommodate the arbitrary dimension of the original framework, we first give  a suitable definition 
of  generic frameworks for $d$-flattenability.
\begin{definition}
\label{def:generic-d-flat}
Given an $l_p$ framework $(G,r)$, with $n$ vertices,  in arbitrary dimension, 
consider its pairwise length vector, $\delta_r$, in the cone $\cone{n}{l_p}$ (this was used in Section \ref{joel1}).
A framework $(G,r)$ of $n$ vertices is \define{generic}   
with respect to $d$-flattenability if the following hold:  
(i)
there is an open neighborhood $\Omega$ of $\delta_r$ in the (interior of the) cone $\cone{n}{l_p}$, 
(recalling that $n_p$ is the flattening dimension of  the complete graph $K_n$, 
$\Omega$ corresponds to an open neighborhood of $n_p$-dimensional point-configurations of $r$);
and
(ii)
$(G,r)$ is $d$-flattenable if and only if 
all the frameworks in $\Omega$ are.
\end{definition}

Item (i) implies that there is a ``full measure" or $n_p$-dimensional neighborhood of $(G,r)$ that corresponds to a neighborhood of distance vectors in the interior of the cone. Item (ii) asserts that all frameworks in this neighborhood are $d$-flattenable iff $(G,r)$ is.

\begin{theorem}
\label{thm:allgenericall}
  Every generic framework of $G$ is $d$-flattenable if and only if $G$ is $d$-flattenable.
\end{theorem}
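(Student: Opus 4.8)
The plan is to prove the two directions separately, with essentially all of the content living in the ``only if'' direction. The ``if'' direction is immediate from the definition: $d$-flattenability of $G$ asserts that \emph{every} framework of $G$, in arbitrary dimension, is $d$-flattenable, and generic frameworks (Definition \ref{def:generic-d-flat}) form a subclass of all frameworks; so if $G$ is $d$-flattenable then in particular every generic framework is. For the converse I would argue by contraposition: assuming $G$ is not $d$-flattenable, I would manufacture a \emph{generic} framework that is not $d$-flattenable, contradicting the hypothesis.

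The technical heart is the claim that $\conestrat{G}{l_p}{d}$ is a closed subset of $\R^{|E|}$. I would prove this by a compactness argument. Given edge-length vectors $\delta^{(k)} \in \conestrat{G}{l_p}{d}$ converging to some $\delta$, choose for each $k$ a $d$-dimensional realization $r^{(k)}$ of the linkage $(G,\delta^{(k)})$, normalized by translation so that $r^{(k)}(v_0)=0$ at a fixed base vertex $v_0$ (treating each connected component separately). Since the entries of $\delta^{(k)}$ are bounded along a convergent sequence, the triangle inequality along a path from any vertex $v$ to $v_0$ — valid in every $l_p$ norm — confines each $r^{(k)}(v)$ to a bounded region. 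Passing to a convergent subsequence yields a limit configuration $r$ whose $l_p^p$ edge lengths are exactly $\delta$ by continuity, so $\delta \in \conestrat{G}{l_p}{d}$, proving closedness.

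With closedness in hand, consider the continuous projection $\pi\colon \cone{n}{l_p}\to \cone{G}{l_p}$ onto the edge coordinates of $G$, and recall that a framework is $d$-flattenable precisely when its edge-length vector lies in $\conestrat{G}{l_p}{d}$. Let $S=\pi^{-1}\!\big(\cone{G}{l_p}\setminus\conestrat{G}{l_p}{d}\big)$ be the set of distance vectors of non-$d$-flattenable frameworks; closedness of $\conestrat{G}{l_p}{d}$ makes $S$ relatively open in $\cone{n}{l_p}$. If $G$ is not $d$-flattenable then $S$ is nonempty, and since the (relative) interior of the convex cone $\cone{n}{l_p}$ is dense in it, the open set $S$ contains a point $\delta^*$ of the interior together with an entire neighborhood $\Omega\subseteq S$. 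Then $\delta^*$ meets condition (i) of Definition \ref{def:generic-d-flat} because it lies in the interior, and condition (ii) because on $\Omega$ flattenability is constantly false (so the biconditional holds). Thus the framework at $\delta^*$ is generic yet not $d$-flattenable, the desired contradiction; hence $G$ is $d$-flattenable.

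The step I expect to be the main obstacle is the closedness lemma, since it is where the geometry of general $l_p$ norms and the possibility of degenerate or disconnected $G$ must be controlled; the configuration bound rests only on the triangle inequality, so I expect it to go through uniformly in $p$. A secondary subtlety is matching both clauses of the genericity definition to the constructed point $\delta^*$ — in particular confirming that $\cone{n}{l_p}$ has nonempty interior (implicit in the definition's reference to an interior neighborhood) so that density of the interior can be invoked, and that a non-flattenable point surrounded by non-flattenable points genuinely satisfies condition (ii).
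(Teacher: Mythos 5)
Your proof is correct, but it takes a genuinely different route from the paper's. The paper proves the ``only if'' direction directly: it writes an arbitrary non-generic bounded framework as a limit of a sequence of generic bounded frameworks, notes that each of these is $d$-flattenable by hypothesis, and concludes that the limit linkage is $d$-realizable because the map taking a linkage to its bounded $d$-dimensional realizations is a closed map --- a fact it asserts rather than proves, remarking only afterward that a similar argument shows $\conestrat{G}{l_p}{d}$ is closed. You instead argue by contraposition: closedness of $\conestrat{G}{l_p}{d}$ makes the locus $S$ of non-$d$-flattenable distance vectors relatively open in $\cone{n}{l_p}$, and density of the interior of the full-dimensional convex cone then plants a generic, non-flattenable framework inside $S$, contradicting the hypothesis. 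Both arguments turn on the same two ingredients --- density of generic frameworks and closedness of the set of $d$-realizable edge-length vectors --- but yours actually supplies a proof of the closedness lemma (the translation-normalization plus Bolzano--Weierstrass argument is sound, including the per-component handling of disconnected $G$ and the use of the triangle inequality uniformly in $p$), and it engages explicitly with both clauses of Definition \ref{def:generic-d-flat}, which the paper's limit argument never does. What the paper's direct version buys is an explicit limiting realization of the flattened linkage, which feeds its later discussion of continuous flattening paths (Question \ref{ques:cont_path}); what yours buys is a self-contained check that the hypothesis is genuinely exercised, i.e., that a non-flattenable graph really does admit a \emph{generic} non-flattenable witness. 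One simplification you could note: since $\delta_{r}$ itself lies in any neighborhood $\Omega$ of $\delta_r$, clause (ii) of the genericity definition is automatically satisfied by any non-flattenable framework, so all your $\delta^*$ really needs is to lie in the interior of the cone --- though exhibiting the entire neighborhood $\Omega\subseteq S$ as you do is cleaner and costs nothing.
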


\begin{proof}
The ``if'' direction follows immediately from the definition of
$d$-flattenability.
For the ``only if'' direction, notice that a non-generic, (bounded)
framework $(G,r)$ is a limit of a sequence $Q$ of
generic, bounded frameworks $\{(G,r_i)\}_i$, with a corresponding
sequence of pairwise distance vectors in $\cone{G}{l_p}$, and further a
corresponding sequence of projections onto the edges of $G$, i.e, a
sequence $Q'$ of bounded linkages
of $G$.
Because each $(G,r_i)$ is $d$-flattenable, each linkage in $Q'$ must be
realizable as  some
generic bounded framework $(G,r_i')$ in $d$-dimensions, i.e, each linkage
is the projection of the pairwise distance vector of some
$d$-dimensional bounded framework $(G,r')$, i.e, a pairwise distance
vector in the $d$-dimensional stratum of the cone $\cone{G}{l_p}$.
The projection of the limit framework $(G,r)$ of the sequence $Q$ is the
limit linkage of the
projected sequence $Q'$ of linkages with bounded edge lengths, whose
corresponding sequence of realizations as $d$-dimensional bounded
frameworks has a limit $(G,r')$. The latter limit follows from the fact
that the realization map that takes a linkage to its $d$-dimensional
bounded framework realizations (for any given bound) is a closed map.
This completes the proof.
\qed
\end{proof}

A similar argument can be used to show that the projection of
$\conestrat{n}{l_p}{d}$ onto the edges of any graph $G$, denoted
$\conestrat{G}{l_p}{d}$ is closed. This result was shown for $l_2$ in
\cite{Tarazaga}.

Although $d$-flattenability is equivalent to the presence of an inherent convex Cayley configuration space for $G$,
(as shown in Section \ref{joel1}), 
     we now move beyond inherent convex Cayley configuration
    spaces to  Cayley configuration spaces 
    over specified non-edges 
    $F$. These could be convex even if $G$ itself is not  $d$-flattenable (simple examples can be found for $d=2,3$ for $l_2$ in \cite{SiGa:2010}).
    A complete characterization of such $G,F$ is shown in \cite{SiGa:2010}, in the case of $l_2$ norm for $d=2$,
    conjectured for $d = 3$, and completely open for $d>3$.
    In Section \ref{open}, we extend the conjecture for general $d$.

An analogous theorem to Theorem \ref{thm:allgenericall} can be proven for the property 
of a $d$-dimensional framework $(G,r)$ having a convex Cayley configuration space
over specified non-edge set $F$.
However, since this framework is $d$-dimensional rather than of arbitrary dimension,
the definition of genericity has to be modified from Definition \ref{def:generic-d-flat}.

\begin{definition}
\label{def:generic-cayley}
Let $\delta_r$ be as in Definition \ref{def:generic-d-flat}.
A framework $(G,r)$ of $n$ vertices in $d$-dimensions is \define{generic}   
with respect to the property of convexity of $\Phi_{F,l_p}^{d}({G,\delta^G})$ if 
(i)
there is an open neighborhood $\Omega$ of $\delta_r$ in the stratum $\conestrat{n}{l_p}{d}$, 
(this corresponds to an open neighborhood of $d$-dimensional point-configurations of $r$);
and
(ii)
$(G,r)$  has convex Cayley configuration space over $F$ if and only if 
all the frameworks in $\Omega$ do.
\end{definition}

\begin{theorem}
\label{thm:allgenericcayley}
 Every generic $d$-dimensional framework $(G,r)$ 
has a convex Cayley configuration space over $F$ if and only if 
for all $\delta^G$, the linkage $(G,\delta^G)$ has a $d$-dimensional, convex Cayley configuration space over $F$.
\end{theorem}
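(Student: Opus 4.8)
The \emph{if} direction is immediate: a generic $d$-dimensional framework realizes one particular $\delta^G$, so if every linkage $(G,\delta^G)$ has a convex Cayley configuration space over $F$, then in particular every generic framework does. The content is the \emph{only if} direction, and the plan is to mirror the limiting argument of Theorem~\ref{thm:allgenericall}, but now I must transport convexity of an entire fiber to the limit rather than mere realizability of a single linkage.

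First I would record the two semicontinuity facts that frame the argument. Regard the Cayley configuration space as a set-valued map $\Psi(\delta^G) = \cayley{F}{l_p}{d}{G}{\delta^G}$, whose graph is exactly the projected stratum $\conestrat{G\cup F}{l_p}{d}$ (the fiber over $\delta^G$ is $\{\delta^G\}\times\Psi(\delta^G)$, shadowed onto $F$). As noted after Theorem~\ref{thm:allgenericall}, applying that remark to the graph $G\cup F$ shows this set is closed, so $\Psi$ has closed graph; this gives outer semicontinuity, namely that whenever $\delta^G_i \to \delta^G$, $z_i \in \Psi(\delta^G_i)$, and $z_i \to z$, one has $z \in \Psi(\delta^G)$. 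Outer semicontinuity alone is not enough: a closed stratum can carry an isolated ``extra'' point in a non-generic fiber that destroys convexity even though every nearby fiber is convex, so I must also establish inner semicontinuity along a suitable generic approach.

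With both in hand the argument is short. Fix an arbitrary $\delta^G$ and take $x,y \in \Psi(\delta^G)$ with $z = \lambda x + (1-\lambda)y$, $0 \le \lambda \le 1$. Choose $d$-dimensional realizations $r'$ and $s'$ of $(G,\delta^G)$ attaining non-edge vectors $x$ and $y$. Using that Kitson-regular frameworks are dense and that the edge-length map is open onto its image at such configurations, I would perturb $r'$ and $s'$ to regular realizations whose edge-length images each contain a neighborhood of $\delta^G$, and then select a common generic $\delta^G_i \to \delta^G$ lying in both images. Pulling back gives $x_i \to x$ and $y_i \to y$ with $x_i,y_i \in \Psi(\delta^G_i)$ in the \emph{same} fiber. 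By hypothesis $\Psi(\delta^G_i)$ is convex (each $\delta^G_i$ being generic in the sense of Definition~\ref{def:generic-cayley}), so $z_i := \lambda x_i + (1-\lambda)y_i \in \Psi(\delta^G_i)$; since $z_i \to z$ and $\delta^G_i \to \delta^G$, outer semicontinuity yields $z \in \Psi(\delta^G)$. As $x,y,\lambda$ were arbitrary, $\Psi(\delta^G)$ is convex.

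The main obstacle is the inner-semicontinuity step: producing approximants of \emph{both} endpoints inside one common generic fiber $\Psi(\delta^G_i)$. This is exactly what is \emph{not} required in Theorem~\ref{thm:allgenericall}, where only the existence of a single $d$-dimensional realization of the limit linkage had to survive the limit, so the earlier closed-map argument does not transfer verbatim. Making the openness/selection step rigorous requires care when the edges of $G$ are dependent, since then the edge-length image is lower dimensional and the common generic $\delta^G_i$ must be chosen within that image; I expect this to rest on the maximal-rank (regularity) properties of the $l_p$ rigidity matrix from Kitson~\cite{Kitson:2014} together with the density of regular configurations guaranteed by Definition~\ref{def:generic-cayley}.
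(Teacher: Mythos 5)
Your proposal follows the same overall route as the paper's proof --- approximate the non-generic object by generic ones and pass convexity to the limit --- but the mechanism you use to transfer convexity is genuinely more explicit, and the comparison is instructive. The paper's proof simply takes a sequence of generic frameworks $(G,r_i)\to(G,r)$ and asserts that ``the limit of the sequence of spaces $\{\Phi_{F,l_p}^{d}(G,\delta^{G,i})\}_i$ exists, is convex, and is the Cayley configuration space of $(G,r)$''; it never says in what sense the set-limit is taken, nor why the limit of the fibers equals the fiber of the limit. You decompose that assertion into its two halves: outer semicontinuity (closedness of $\conestrat{G\cup F}{l_p}{d}$, which the paper does establish after Theorem~\ref{thm:allgenericall}) and inner semicontinuity (both endpoints $x,y$ of a segment in the limit fiber must be approximated inside a \emph{common} generic fiber), and you correctly observe that only the second half carries real content beyond Theorem~\ref{thm:allgenericall}. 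Your pointwise argument --- take $x,y$ in $\Psi(\delta^G)$, pull them into a common nearby generic fiber, form the convex combination there, and push it back by closedness --- is the standard rigorous rendering of the paper's one-line claim, and it buys a proof that does not presuppose convergence of the fibers as sets. The one step that remains soft in your write-up is the common-fiber selection itself: openness of the edge-length map onto its image at a regular configuration gives you \emph{relatively} open images around $\delta^{G}_{r'_i}$ and $\delta^{G}_{s'_j}$ separately, but these neighborhoods may shrink as $i,j\to\infty$, so their nonempty intersection near $\delta^G$ (and the existence of a generic linkage in that intersection, so that the hypothesis applies to its fiber) is not yet justified, particularly when the edges of $G$ are dependent and the image is a lower-dimensional subset of $\R^E$. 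You flag this honestly; note that this is precisely the point the paper's own proof also leaves unaddressed, so completing it would strengthen the published argument, not just yours.
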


\begin{proof}
The ``if'' direction follows immediately from the definitions.
Moreover, it is sufficient to prove the ``only if'' direction
for edge length vectors $\delta_G$ that are attained by some (potentially non-generic) $d$-dimensional framework $(G,r)$, 
because otherwise
the $d$-dimensional Cayley configuration space of the linkage $(G,\delta^G)$ is empty and hence trivially convex.
Now as in Theorem \ref{thm:allgenericall}, every non-generic $d$-dimensional framework $(G,r)$  
with edge length vector $\delta^G$ 
is a limit of a sequence $\{(G,r_i)\}_i$ of generic frameworks with edge length vectors $\delta^{G,i}$.
Since
convexity of  the Cayley configuration space 
$\Phi_{F,l_p}^{d}{G,\delta^{G,i}}$ is preserved over a open neighborhoods of $(G,r_i)$, it follows that
the limit of the sequence of spaces
$\{\Phi_{F,l_p}^{d}({G,\delta^{G,i}})\}_i$ exists, is convex, and is the  
Cayley configuration space 
of $(G,r)$. Since $(G,r)$ was chosen to have the edge length vector $\delta_G$, this space is 
in fact
$\Phi_{F,l_p}^{d}({G,\delta^{G}})$, 
the Cayley configuration space of the linkage $(G,\delta_G)$.
\qed
\end{proof}

A {\em property} of frameworks is said to be {\em generic}
if the {\em existence} of a generic framework with the property implies that the property holds for {\em all} generic frameworks.
Next we show that neither of the properties discussed above is a generic property of frameworks even for $l_2$.

\begin{theorem}
$d$-flattenability and convexity of Cayley configuration spaces over specific non-edges $F$ are not a generic property of frameworks
$(G,r)$. 
\end{theorem}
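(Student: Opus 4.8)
The plan is to prove the statement by exhibiting an explicit graph $G$ together with two generic frameworks that disagree on the property in question, thereby violating the definition of a generic property (where \emph{existence} of one framework with the property should force \emph{all} generic frameworks to share it). The natural strategy is to find a graph that is \emph{not} $d$-flattenable, yet which admits \emph{some} generic framework that happens to be $d$-flattenable, while other generic frameworks are not. Since Theorem~\ref{thm:allgenericall} tells us that $G$ is $d$-flattenable if and only if \emph{all} its generic frameworks are, the non-$d$-flattenability of $G$ already guarantees that at least one generic framework fails to be $d$-flattenable. So the crux is to produce, for such a non-flattenable $G$, a single generic framework that \emph{is} $d$-flattenable. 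The two frameworks then witness that $d$-flattenability is not generic.

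First I would fix the Euclidean case $l_2$ and $d = 2$, and choose $G$ to be a graph that is not $2$-flattenable but whose edges are still generically independent in the $2$-dimensional rigidity matroid. The paper's characterization (partial $2$-trees avoiding $K_4$) tells us that any graph containing a $K_4$ minor is not $2$-flattenable in $l_2$, so $K_4$ itself, or a slightly larger graph built from it, is a candidate. For the convex-Cayley-configuration-space version, I would instead invoke the minimal, $1$-dof Henneberg-I frameworks for $d=2$ constructed in \cite{SiWa:2012,SiWa:2013}, which the introduction already advertises as the tool for this part: these provide frameworks on a fixed non-edge set $F$ whose Cayley configuration space is convex for some generic placements but non-convex for others. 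The key geometric fact to exploit is that convexity of the Cayley space over $F$ can depend on the \emph{branch structure} of the realization map — a single non-edge length can have its attainable interval interrupted or remain a single interval depending on which generic stratum of realizations one sits in.

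The key steps, in order, would be: (1) select the witness graph $G$ and non-edge set $F$; (2) produce one generic framework $(G,r_1)$ whose length vector lies in a region where the Cayley configuration space over $F$ (respectively the $2$-flattenability condition) holds — for instance a placement where the realization space over $F$ projects to a single interval; (3) produce a second generic framework $(G,r_2)$ in a different generic stratum where the projected Cayley space splits into disconnected pieces or acquires a concavity, using the explicit $1$-dof Henneberg-I construction to control the discriminant of the relevant realization system; (4) verify that both $r_1$ and $r_2$ are genuinely generic in the sense of Definitions~\ref{def:generic-d-flat} and~\ref{def:generic-cayley}, i.e.\ each sits in an open neighborhood of the appropriate cone or stratum on which the property is constant. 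For the $d$-flattenability half, step (3) is immediate from non-flattenability of $G$ via Theorem~\ref{thm:allgenericall}, so the work concentrates on step (2).

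The hard part will be step (2) for the $d$-flattenability statement: exhibiting a concrete generic framework of a non-flattenable graph that nonetheless \emph{is} $d$-flattenable. One must find a framework whose length vector, though generic (lying in an open neighborhood of the cone on which flattenability is constant), happens to project into the $d$-dimensional stratum's shadow even though the full graph is not flattenable — intuitively, a placement all of whose nearby realizations admit $d$-dimensional reconstructions even while the graph globally does not. I expect the cleanest route is to use a graph that is $d$-flattenable on a full-dimensional open piece of its configuration space but fails flattenability only on some other, also full-dimensional, piece; the arbitrary-dimension nature of the original framework in Definition~\ref{def:generic-d-flat} gives enough room to place the framework inside the "good" piece. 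Verifying that this good piece is genuinely open in the cone — so that the chosen framework is generic rather than an accidental boundary coincidence — is the delicate point, and I would handle it by an explicit small example (e.g.\ a graph obtained by gluing a flattenable piece to a $K_4$ along a separating pair) where the two behaviors can be read off directly from the combinatorial structure.
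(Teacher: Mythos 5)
Your overall strategy coincides with the paper's: exhibit two generic frameworks of the same graph that disagree on the property, handle the Cayley-convexity half via the minimal $1$-dof Henneberg-I frameworks of \cite{SiWa:2012,SiWa:2013} (exactly the paper's argument: the attainable interval for the base non-edge $f$ is a single interval for some generic $2$-dimensional frameworks and two intervals for others), and for the flattenability half find a non-$2$-flattenable graph admitting one generic $2$-flattenable framework, the existence of a non-flattenable generic framework being free from Theorem \ref{thm:allgenericall}. So the architecture is right.

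However, your concrete choice of witness for the $d$-flattenability half has a genuine gap. You propose $K_4$ ``or a slightly larger graph built from it,'' and later ``a graph obtained by gluing a flattenable piece to a $K_4$ along a separating pair.'' Every such graph contains $K_4$ as a \emph{subgraph}, and the six edges of $K_4$ are dependent in the $2$-dimensional rigidity matroid ($6 > 2\cdot 4 - 3$). By Theorem \ref{thm:genericimpliesindep}, a graph that is dependent in the generic $2$-dimensional rigidity matroid admits \emph{no} generic $2$-flattenable framework at all: genericity forces a full $|E|$-dimensional neighborhood of edge-length vectors to be realizable in the plane, which is impossible when the projection of the $2$-dimensional stratum onto the edges of $G$ drops dimension. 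So step (2) of your plan cannot be carried out for any of your candidates. What you need is a graph with a $K_4$ \emph{minor} (hence not $2$-flattenable) but no $K_4$ subgraph, whose edges remain independent in the $2$-dimensional rigidity matroid. This is precisely what the paper uses: a $5$-vertex graph, in effect $K_4$ with one edge subdivided through a new vertex $e$ ($7$ edges on $5$ vertices, a Laman graph), for which some generic $4$-dimensional frameworks have edge lengths on the path $a\!-\!e\!-\!d$ that obstruct flattening, while others can be ``unfolded'' into the plane (Figure \ref{fig:notgenericprop}). With that corrected witness the rest of your outline goes through.
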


\begin{proof}
For $d$-flattenability:
    since the flattening dimension $n_2$ of $K_n$ in $l_2$ is $n-1$,  we show the counterexample of a $5$-vertex 
graph $G$ for which one
generic $4$-dimensional framework $(G,r)$ and its  neighborhood is $2$-flattenable in $l_2$,
while another such neighborhood is not.
See Figure \ref{fig:notgenericprop}.
For convexity of Cayley configuration spaces:
there are minimal, so-called  Henneberg-I graphs \cite{SiWa:2013} 
$G$, constructed on a {\em base} or initial edge $f$ with the following property:
for some 2-dimensional frameworks (and neighborhoods) $(G,r)$ with edge length vector $\delta^G$,
the 1-dimensional Cayley configuration space $\Phi_{f,l_2}^2(G\setminus f, \delta^{G\setminus f})$
(i.e, the attainable lengths for $f$)
is a single interval, while for other such frameworks (and neighborhoods) it is 2 intervals.
Please see Appendix in \cite{SiWa:2013}.
\qed
\end{proof}

\begin{figure}
  \centering\tiny
  \def\svgwidth{0.99\linewidth} 
  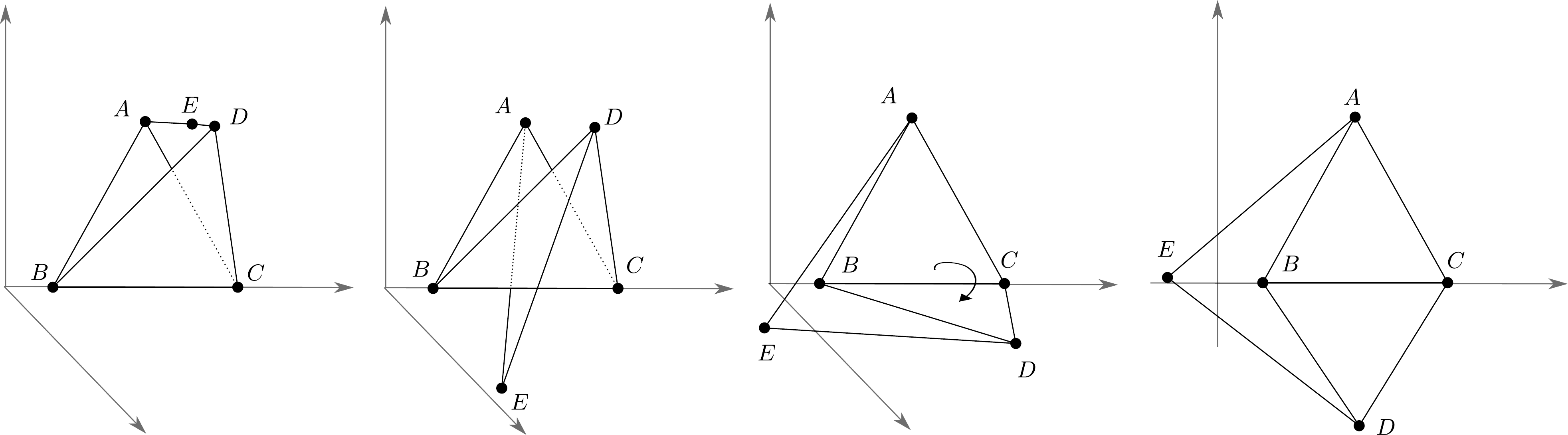 
  \caption{2 realizations of the same graph. In the first figure (left), we have edge lengths for $(a,e)$ and $(d,e)$ that do not allow $G$ to be flattened. The second graph is realized in 3-dimensions, but by ``unfolding it'' as shown, we can flatten it into 2-dimensions}
\label{fig:notgenericprop}
\end{figure}

Next, we consider the implication of the {\em existence} of a generic $d$-flattenable framework. Specifically, we prove two theorems connecting the $d$-flattenability with independence in the rigidity matroid:  we use the notion of rigidity matrix, and consequently regular frameworks and generic rigidity matroid developed by Kitson \cite{Kitson:2014}, as well as the equivalence of finite and infinitesimal rigidity using the notion of  {\em well-positioned} frameworks, which  
intuitively means that the $l_p$ balls of size given by the corresponding edge-lengths centered at the points
intersect properly (i.e, the intersection of $k$ $(d-1)$-dimensional ball boundaries is of dimension $d-k$).

The ``if'' direction of this next theorem is a restatement of Proposition 2 in Asimow and Roth \cite{MR0511410}. We extend their result here and show the other direction as well.

\begin{theorem}
\label{thm:genericimpliesindep}
For general $l_p$ norms, there exists a generic $d$-flattenable framework of $G$ if and only if $G$ is independent in the $d$-dimensional generic rigidity matroid.
\end{theorem}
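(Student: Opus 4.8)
The plan is to route both implications through the local dimension of the projected stratum $\conestrat{G}{l_p}{d}$ sitting inside $\R^{|E|}$. The bridge I would establish first is the identity $\dim \conestrat{G}{l_p}{d} = \mathrm{rank}\,R(G)$, where the dimension is the generic (maximal) rank of Kitson's $d$-dimensional rigidity matrix. This rests on the fact, supplied by Kitson \cite{Kitson:2014}, that at a well-positioned, regular configuration the map sending a $d$-dimensional point configuration to its vector of $l_p^p$ edge lengths is differentiable with differential exactly $R(G)$; since its image is $\conestrat{G}{l_p}{d}$ and the regular configurations are dense, the local image dimension at a regular point equals the generic rank of $R(G)$. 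With this in hand, the whole theorem reduces to the claim that a generic $d$-flattenable framework of $G$ exists if and only if $\conestrat{G}{l_p}{d}$ has nonempty interior in $\R^{|E|}$, equivalently $\mathrm{rank}\,R(G) = |E|$, which is precisely independence of the edges of $G$.

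For the ``only if'' direction (the extension we are proving), suppose $(G,r)$ is generic with respect to $d$-flattenability in the sense of Definition \ref{def:generic-d-flat} and is $d$-flattenable. By genericity there is an open neighborhood $\Omega$ of $\delta_r$ in the interior of the full cone $\cone{n}{l_p}$, and by item (ii) of that definition every framework of $\Omega$ is then $d$-flattenable, so the edge-projection of every point of $\Omega$ lies in $\conestrat{G}{l_p}{d}$. Because $\cone{n}{l_p}$ is full-dimensional, $\Omega$ is open in $\R^{\binom{n}{2}}$, and coordinate projection is an open map; hence the image of $\Omega$ on the edge coordinates is an open subset of $\R^{|E|}$ contained in $\conestrat{G}{l_p}{d}$. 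Thus $\conestrat{G}{l_p}{d}$ has full dimension $|E|$, and by the bridge identity $\mathrm{rank}\,R(G) = |E|$, i.e. $G$ is independent.

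For the ``if'' direction (essentially Proposition 2 of Asimow--Roth \cite{MR0511410}, transported to $l_p$ through Kitson), assume $G$ is independent. Then a regular $d$-dimensional framework $(G,r_0)$ has $\mathrm{rank}\,R(G) = |E|$, so the rigidity map is a submersion at $r_0$; by the inverse-function/submersion theorem it is locally open and $\conestrat{G}{l_p}{d}$ contains an open ball $U \subseteq \R^{|E|}$. It remains to lift this to a framework admissible under Definition \ref{def:generic-d-flat}, whose \emph{full} distance vector lies in the interior of $\cone{n}{l_p}$: the configuration $r_0$ itself sits on the lower stratum $\conestrat{n}{l_p}{d}$, hence on the boundary of the cone, and so is not generic in this sense. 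I would exploit that $\cone{n}{l_p}$, being full-dimensional and convex (Observation \ref{Omega is convex}), equals the closure of its interior, so interior distance vectors converge to $\delta_{r_0}$; for one such $\hat\delta$ close enough to $\delta_{r_0}$ the edge coordinates still lie in the open set $U$. A realization $(G,r)$ of $\hat\delta$ is then $d$-flattenable, and shrinking its cone-neighborhood $\Omega$ so that its edge-projection stays inside $U$ makes every framework of $\Omega$ $d$-flattenable as well; hence $(G,r)$ is generic and $d$-flattenable, as required.

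The hard part will be the bridge identity $\dim \conestrat{G}{l_p}{d} = \mathrm{rank}\,R(G)$ in the non-Euclidean setting: the $l_p^p$ distance is not differentiable everywhere, so this step must be restricted to well-positioned, regular configurations, and one must verify that these form a dense set carrying the generic rank of $R(G)$ — exactly what Kitson's apparatus \cite{Kitson:2014} provides, but the argument has to invoke it carefully rather than appeal to smooth rigidity theory wholesale. The secondary technical point is the boundary-to-interior lifting in the ``if'' direction, where I must ensure that the perturbation pushing $\delta_{r_0}$ into the interior of the cone does not carry the edge coordinates out of $U$; here openness of $U$ together with full-dimensionality of the cone makes the estimate routine.
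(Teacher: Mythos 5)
Your proposal is correct and follows essentially the same route as the paper: both directions pass through identifying Kitson's rigidity matrix with the Jacobian of the $l_p^p$ distance map at well-positioned, regular configurations, arguing that an open neighborhood of $\delta_r$ in the interior of $\cone{n}{l_p}$ projects to an open ($|E|$-dimensional) set of edge-length vectors inside $\conestrat{G}{l_p}{d}$ to force full rank, and invoking Asimow--Roth's Proposition 2 via the submersion property for the converse. Your write-up of the ``if'' direction is somewhat more explicit than the paper's (which disposes of it in one sentence), in that you spell out the perturbation lifting the boundary point $\delta_{r_0}$ of the stratum to an interior point of the cone so that Definition \ref{def:generic-d-flat} is actually met, but this is an elaboration of the same argument rather than a different one.
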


\begin{figure}
  \centering\tiny
  \def\svgwidth{0.35\linewidth} 
  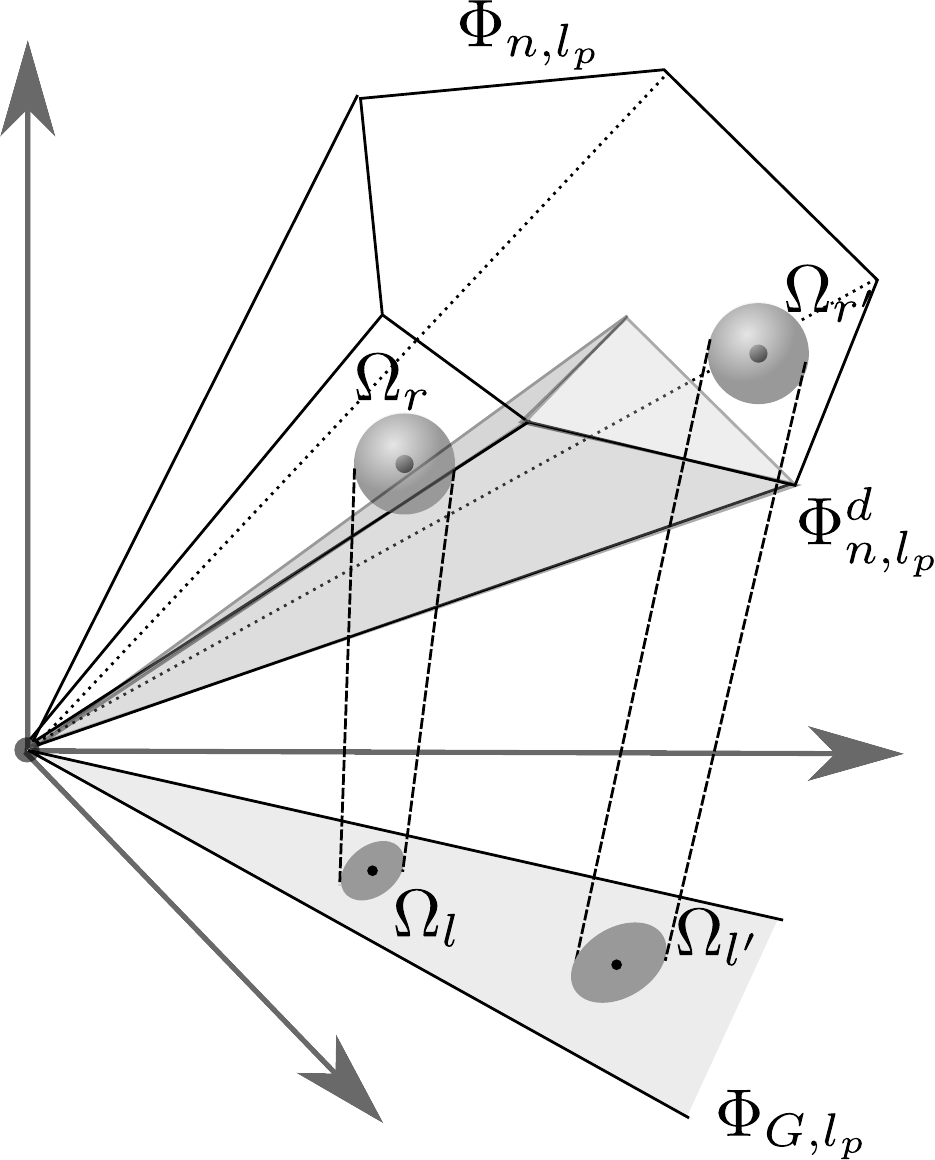 
  \def\svgwidth{0.35\linewidth} 
  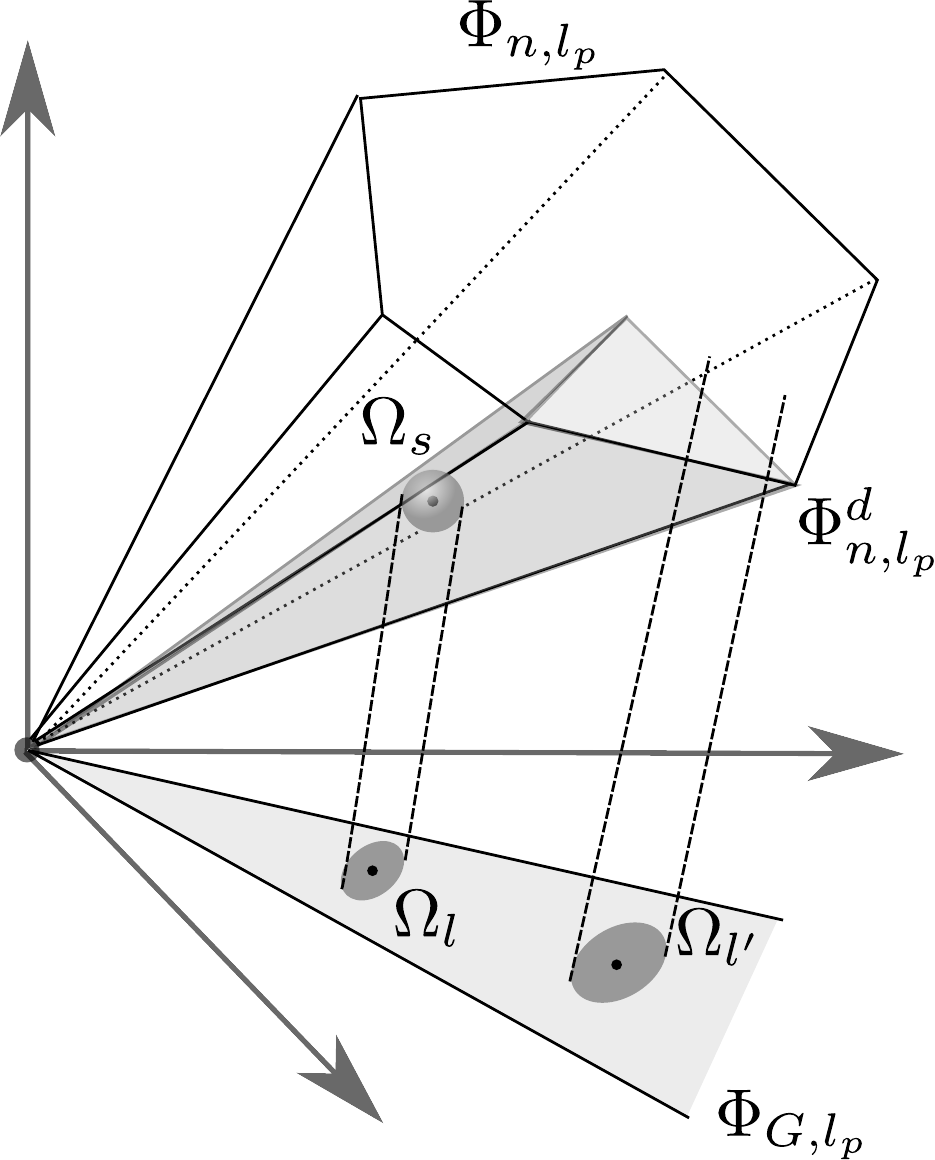 
  \caption{On the left we have 2 neighborhoods $\Omega_r$ and $\Omega_{r'}$ of 2 distance vectors $\delta_r$ and $\delta_{r'}$ in the cone. We then project $\Omega_r$ and $\Omega_{r'}$ onto the edges of $G$ to obtain $\Omega_l$ and $\Omega_{l'}$, which are essentially the neighborhoods of $(G, \delta_r^G)$ and $(G, \delta_{r'}^G)$. On the right, we then take the fiber of $\Omega_l$ and $\Omega_{l'}$ on $\conestrat{n}{l_p}{d}$. The fiber of $\Omega_l$ is completely contained in the stratum while that of $\Omega_{l'}$ misses (does not intersect) the stratum.}
\label{fig:neighborhoods}
\end{figure}

\begin{proof}

For the forward direction, we  note that existence of a generic $d$-flattenable framework $(G,r)$ is equivalent to the statement that the pairwise distance vector $\delta_r$ has an open neighborhood $\Omega_r$ in the interior of the cone $\cone{n}{l_p}$,  
and the  $d$-flattenings $(G,s)$  form an open neighborhood $\Omega_s$ of pairwise distance vector $\delta_s$ in the relative interior of the stratum $\conestrat{n}{l_p}{d}$.  This is also equivalent to saying there is a corresponding open neighborhood of $d$-flattenable linkages $(G,\delta_s^G= \delta_r^G)$. Now $\Omega_s$ (resp. $\Omega_r$) must contain an open neighborhood of pairwise distance vectors $\delta_s$   (resp. $\delta_r$) that correspond to well-positioned and regular frameworks $(G,s)$, (resp. $(G,r)$), hence without loss of generality, we can take that neighborhood to be 
$\Omega_s$ (resp. $\Omega_r$), consisting of $d$-dimensional, well-positioned, regular frameworks $(G,s)$ (resp. $(G,r)$)  that are realizations of an open neighborhood of $\Omega_G$ of 
linkages $(G, \delta_s^G = \delta_r^G)$. These linkages correspond to a coordinate shadow or projection of $\Omega_r$ and $\Omega_s$ onto (the edges in) $G$.

Now observe that  the generic rigidity matrix of $G$ is  the Jacobian of the distance map from the $d$-dimensional point-configuration $s$ to the edge-length vector $\delta_s^G$ at the point $s$.
For $l_2$ and integral $p > 1$, this map is clearly specified by polynomials. For $l_1$ and $l_\infty$, we use the notion of well-positioned frameworks from \cite{Kitson:2014}. If the frameworks of $\Omega_r$ are well-positioned, then it follows that the distance map is locally specified by linear polynomials corresponding to a relevant facet of the $l_1$ or $l_\infty$ ball. Because $\Omega_r$ has dimension equal to the number of edges in $G$, these polynomials are algebraically independent. Hence, their Jacobian has rank equal to the number of edges in $G$.
Therefore, the existence of well-positioned, regular realizations $s$ to an entire neighborhood 
of edge-length vectors $\delta_s^G$ implies the statement that the rows of the generic rigidity matrix -- that correspond to the edges of $G$ -- are independent. 

The converse follows from Proposition 2 of Asimow-Roth \cite{MR0511410} observing that at well-positioned and regular points, for all $1\le p \le \infty$ the $l_p$ distance map from point configurations to 
pairwise distance vectors is a smooth map.
\qed
\end{proof}

The following corollary is immediate from the forward  direction of the above proof.

\begin{corollary}
For general $l_p$ norms,
a graph $G$ is $d$-flattenable only if $G$ is independent in the $d$-dimensional rigidity matroid.
\end {corollary}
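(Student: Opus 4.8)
The plan is to obtain this corollary as a one-line specialization of the forward (``only if'') half of Theorem~\ref{thm:genericimpliesindep}. The key observation is that $d$-flattenability of $G$ is by definition a statement about \emph{every} framework of $G$, so it automatically supplies the existential hypothesis that the forward direction consumes: if all frameworks of $G$ are $d$-flattenable, then in particular the generic ones are, so a generic $d$-flattenable framework exists.

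Concretely, I would argue as follows. Suppose $G$ is $d$-flattenable under the given $l_p$ norm. By the definition of $d$-flattenability, every $l_p$ framework $(G,r)$ of $G$ in arbitrary dimension is $d$-flattenable; this applies in particular to the frameworks singled out by Definition~\ref{def:generic-d-flat}, which (by Kitson's genericity theory) are nonempty---indeed a full-measure set. Hence there exists a generic $d$-flattenable framework of $G$. Feeding exactly this into the forward direction of Theorem~\ref{thm:genericimpliesindep} yields that the rows of the generic $d$-dimensional rigidity matrix indexed by the edges of $G$ are independent, i.e.\ $G$ is independent in the $d$-dimensional rigidity matroid. (One could equivalently route the first step through Theorem~\ref{thm:allgenericall}, which says $G$ is $d$-flattenable iff all its generic frameworks are; but the universal quantifier already built into the definition makes this detour unnecessary.)

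The only subtlety worth flagging is the logical bookkeeping: the corollary is the ``$d$-flattenable $\Rightarrow$ independent'' implication, and it uses \emph{only} the existence-implies-independence half of the biconditional in Theorem~\ref{thm:genericimpliesindep}; it does not touch the converse, so I need neither the Asimow--Roth smoothness argument nor any appeal to regularity of the reverse distance map. For this reason I expect essentially no obstacle: the substantive content---that a well-positioned, regular, $d$-flattenable framework whose distance vector occupies a full-dimensional neighborhood of the cone (projecting onto an $|E(G)|$-dimensional neighborhood of edge-length vectors) forces the corresponding Jacobian rows to be algebraically independent---has already been established inside the proof of Theorem~\ref{thm:genericimpliesindep}. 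The corollary simply records that $d$-flattenability is a sufficient condition for the existence premise of that argument.
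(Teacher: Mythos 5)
Your proposal is correct and matches the paper's own reasoning: the paper states the corollary is immediate from the forward direction of the proof of Theorem~\ref{thm:genericimpliesindep}, which is exactly your route of noting that $d$-flattenability (a universal statement over all frameworks) supplies the existence of a generic $d$-flattenable framework, whence independence follows. Your explicit flagging that only the existence-implies-independence half of the biconditional is used is accurate and consistent with the paper.
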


The following theorem and corollary utilize the dimension of the projection of the $d$-dimensional stratum on the edges of $G$ from the above proof. Note that in the above proof, if $G$ is an $n$-vertex graph, the neighborhood $\Omega_r$ has dimension $n_p$, i.e, the flattening dimension of $K_n$; $\Omega_s$ has dimension equal to that of the stratum $\conestrat{n}{l_p}{d}$, and   $\Omega_G$ has dimension equal to the number of edges of $G$ (see Figure \ref{fig:dimensiondifferences}).

The first item of the following Theorem is a restatement of Theorem \ref{thm:genericimpliesindep} and as such, the ``only if'' direction appears in \cite{MR0511410}.

   \begin{theorem}
   \label{thm:bigtheorem}
For general $l_p$ norms, a graph $G$ is 
\begin{enumerate}
    \item 
        independent in the generic $d$-dimensional rigidity matroid (i.e, the rigidity matrix of a well-positioned and regular framework has independent rows), if and only if coordinate projection of the stratum $\conestrat{n}{l_p}{d}$ onto $G$ has dimension  equal to the number of edges of $G$;
        \label{item:1}
    \item
    maximal independent (minimally rigid) if and only if projection of the stratum $\conestrat{n}{l_p}{d}$ onto $G$ is maximal (i.e., projection preserves dimension) and is equal to the number of edges of $G$;
    \item 
    rigid in $d$-dimensions if and only if projection of the stratum $\conestrat{n}{l_p}{d}$ onto $G$ preserves its dimension;
    \item
    not independent and not rigid in the generic $d$-dimensional rigidity matroid if and only if the projection of 
    $\conestrat{n}{l_p}{d}$ onto $G$ is
        strictly smaller than the minimum of: the dimension of the stratum and the number of edges
        in G.
\end{enumerate}

\end{theorem}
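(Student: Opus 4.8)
The plan is to reduce all four items to a single dimension identity plus elementary matroid bookkeeping. Write $m = |E(G)|$, let $\rho(G)$ be the rank of $G$ in the generic $d$-dimensional rigidity matroid, and let $D = \dim \conestrat{n}{l_p}{d}$ be the dimension of the stratum. The central claim I would isolate and prove first is
\[
\dim \conestrat{G}{l_p}{d} = \rho(G),
\]
i.e. the dimension of the projection of the stratum onto the edges of $G$ equals the rigidity rank of $G$. Granting this, each item is a restatement of a standard matroid fact. Since $\rho(G) \le m$ always, $G$ is independent iff $\rho(G) = m$, which gives item \ref{item:1}. Since the stratum is (locally) the image of the full pairwise-distance map, its dimension $D$ equals the generic Jacobian rank of $R(K_n)$, namely $\rho(K_n)$, the maximal rank in the matroid; and because a projection cannot increase dimension, $\dim\conestrat{G}{l_p}{d} \le D$ always. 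Hence $G$ is spanning (rigid) iff $\rho(G) = D$, i.e. iff the projection preserves the stratum dimension (item 3); $G$ is a basis (maximal independent, minimally rigid) iff $\rho(G) = m = D$ (item 2); and $G$ is neither independent nor spanning iff $\rho(G) < \min(m,D)$ (item 4).

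To prove the central identity I would argue as in Theorem \ref{thm:genericimpliesindep}, but dropping the flattenability hypothesis and treating an arbitrary $G$. Fix a generic, well-positioned, regular $d$-dimensional configuration $s$ and consider the distance maps $\mu_{K_n}$ and $\mu_G$ sending a $d$-dimensional point configuration to its vector of pairwise $l_p^p$ distances on all pairs, respectively on the edges of $G$; by definition the local image of $\mu_{K_n}$ is the stratum $\conestrat{n}{l_p}{d}$ and the local image of $\mu_G$ is its projection $\conestrat{G}{l_p}{d}$. The Jacobian of $\mu_G$ at $s$ is exactly the instantiated rigidity matrix $R(G)$, whose rank at a regular point is $\rho(G)$. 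As the rank is maximal at a regular point it is locally constant, so the constant-rank theorem identifies the local dimension of the image $\conestrat{G}{l_p}{d}$ with that rank, yielding the identity.

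The main obstacle is the smoothness required to legitimize the Jacobian and constant-rank argument. For $l_2$ and even integer $p$ the distance map is polynomial and nothing is at stake; for odd $p$ and for $l_1$, $l_\infty$ the coordinate contributions $x \mapsto |x|^p$ (respectively the facet-wise linear $l_1$/$l_\infty$ distance) fail to be smooth exactly on the loci $r_l(i) = r_l(j)$. This is precisely what well-positionedness in the sense of Kitson \cite{Kitson:2014} forbids: at a well-positioned point every relevant coordinate difference is nonzero, so $\mu_{K_n}$ is locally smooth (for $l_1$, $l_\infty$, locally linear on a single facet) and $R(G)$ is its genuine Jacobian. Since well-positioned, regular configurations are dense, the rank computed at $s$ is the matroid rank $\rho(G)$, so the generic dimension of $\conestrat{G}{l_p}{d}$ is well-defined and equal to $\rho(G)$, completing the reduction.
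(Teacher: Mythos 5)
Your proposal is correct and follows essentially the same route as the paper: the paper's proof of this theorem simply defers to the proof of Theorem \ref{thm:genericimpliesindep}, which is exactly your argument that the rigidity matrix is the Jacobian of the $l_p^p$ distance map at a well-positioned, regular configuration, so the rank of $G$ equals the dimension of the projected stratum. The identity $\dim \conestrat{G}{l_p}{d} = \rho(G)$ that you isolate as the central claim is precisely the corollary the paper states immediately after the theorem, and your matroid bookkeeping for the four cases matches the dimension comparisons the paper illustrates in Figure \ref{fig:dimensiondifferences}.
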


\begin{proof}
  The proof of this theorem follows from the proof of the previous result: Theorem \ref{thm:genericimpliesindep}. Each case is illustrated in Figure \ref{fig:dimensiondifferences}.
\qed
\end{proof}

\begin{figure}
  \centering
  \def\svgwidth{0.22\linewidth} 
  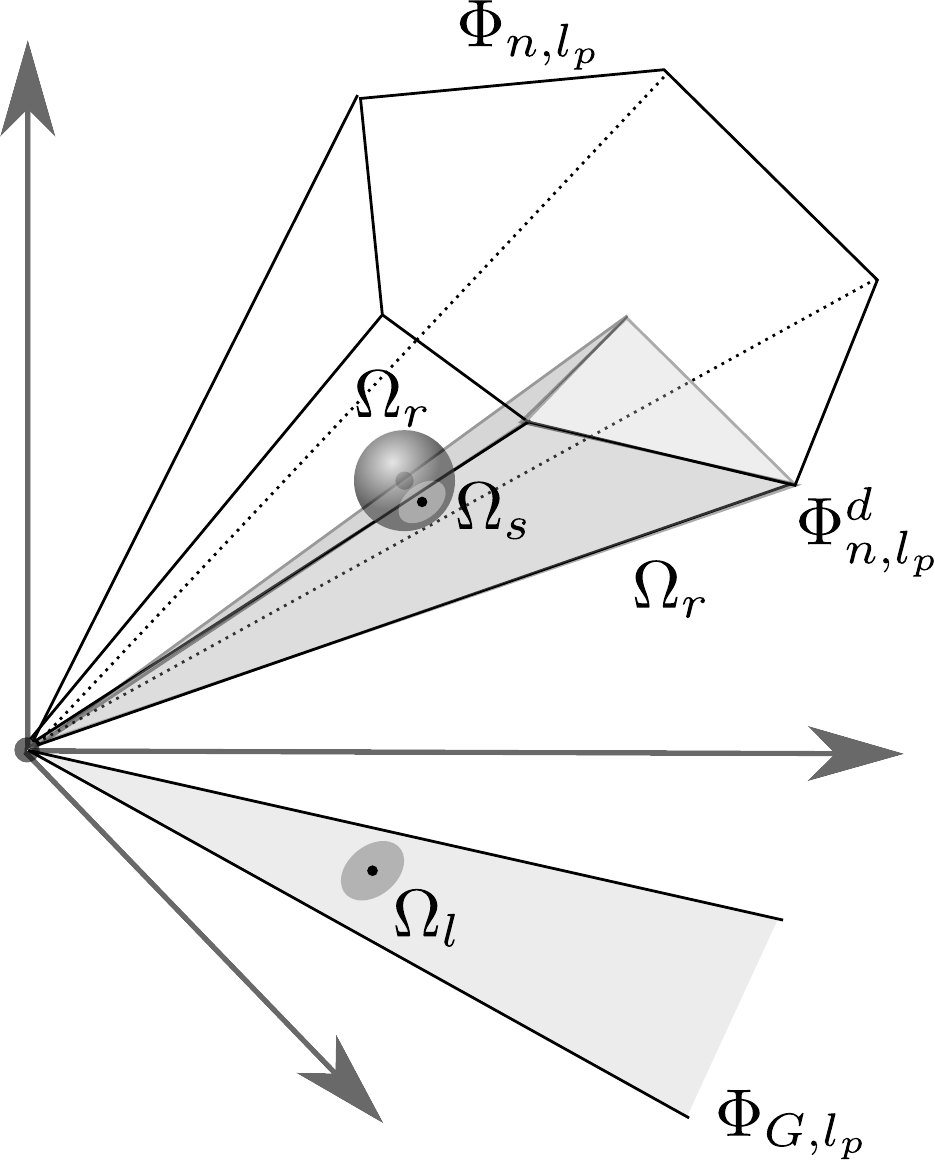 
  \hspace{5pt}
  \def\svgwidth{0.22\linewidth} 
  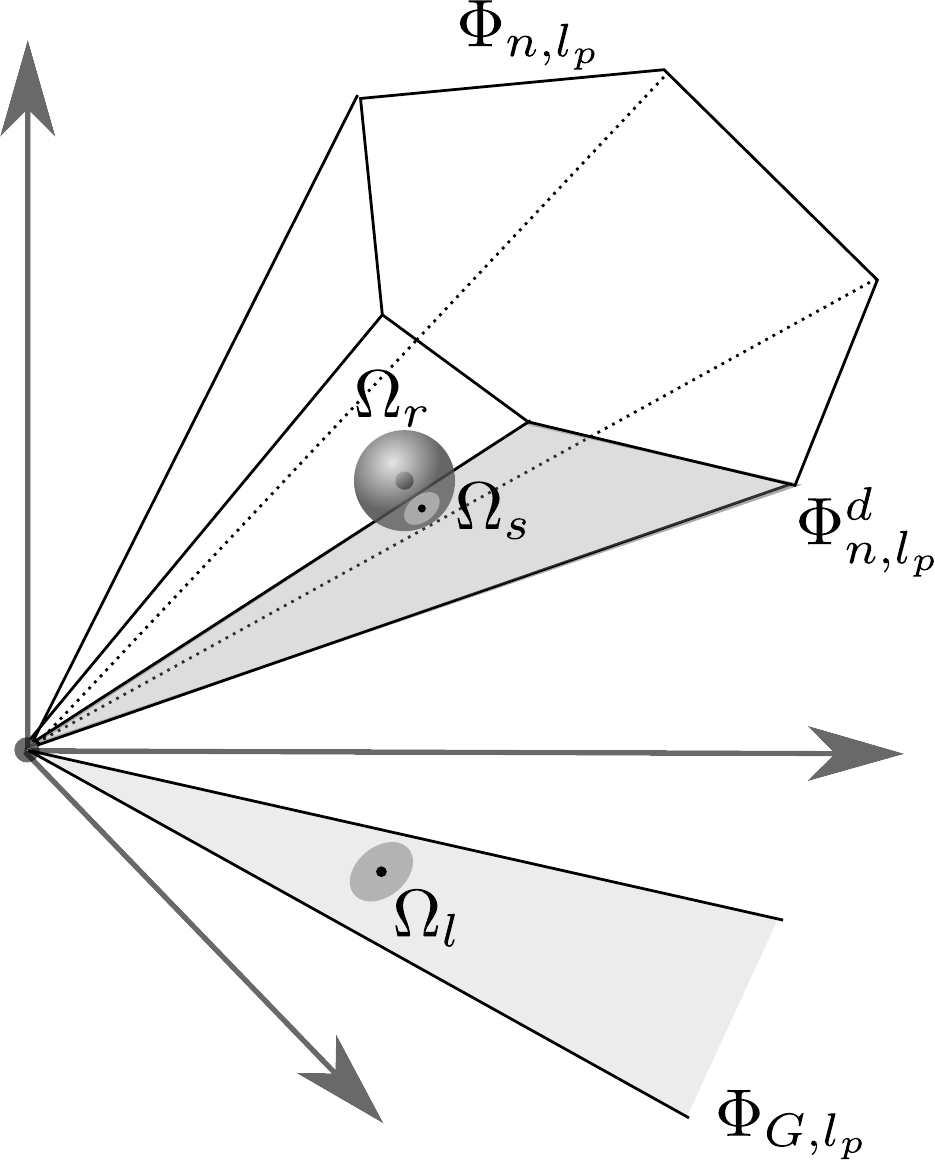 
  \hspace{5pt}
  \def\svgwidth{0.22\linewidth} 
  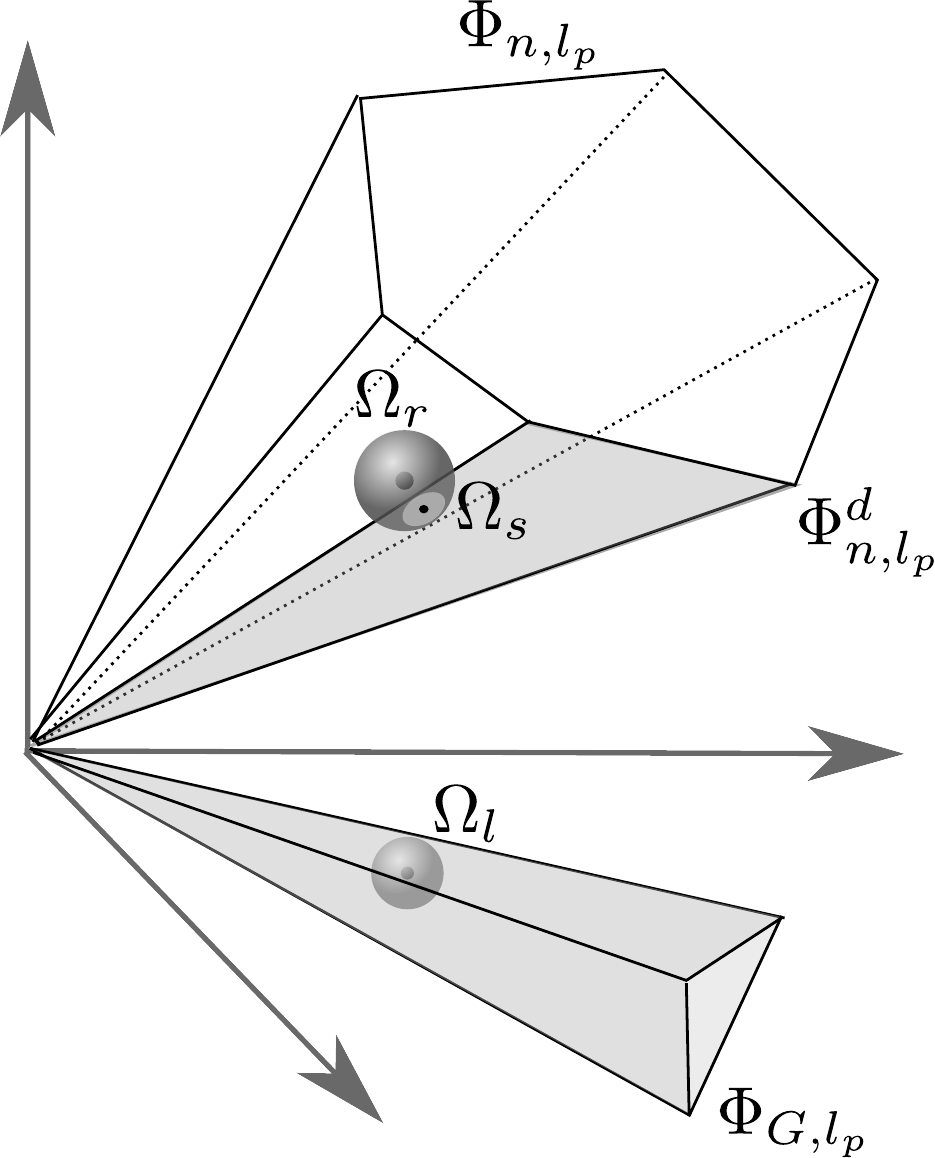 
  \hspace{5pt}
  \def\svgwidth{0.22\linewidth} 
  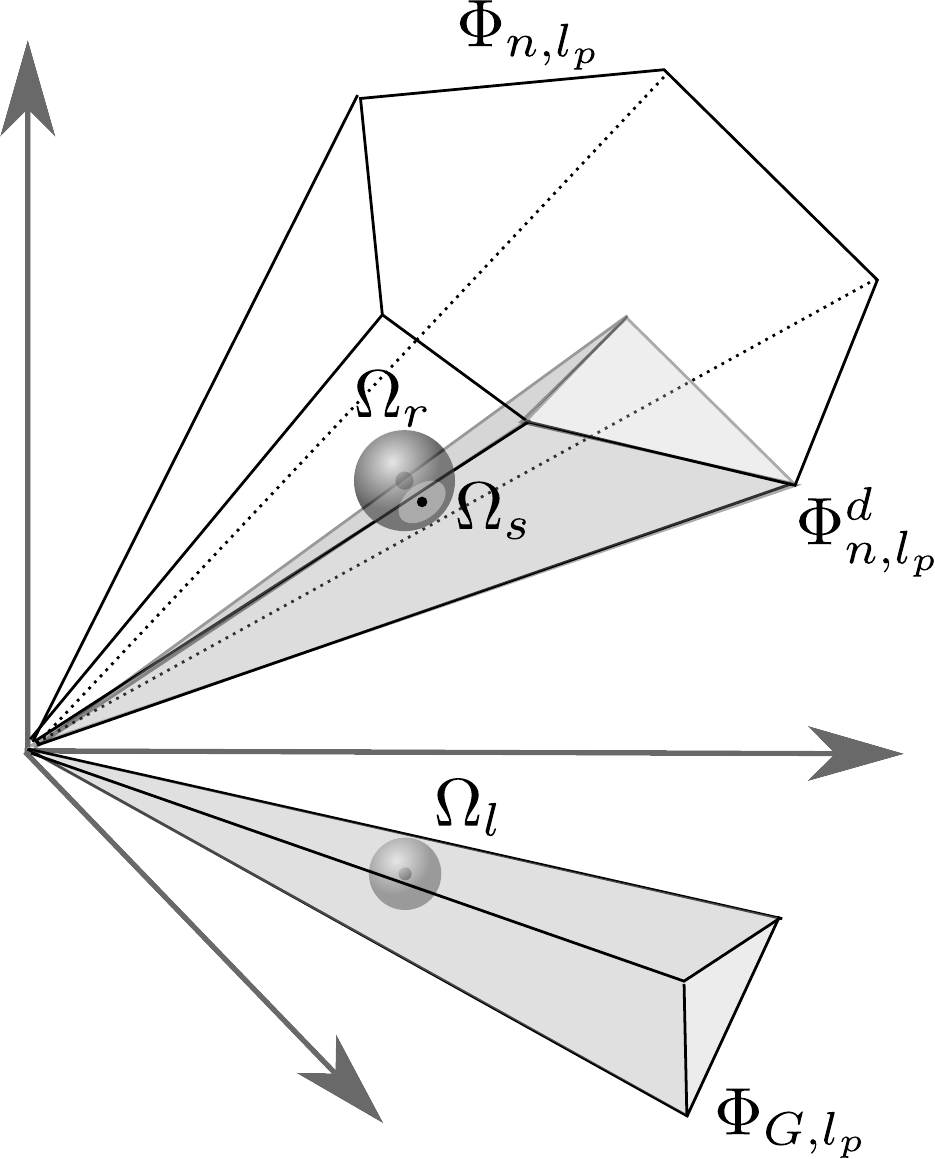 
  \caption{These are visualizations of when frameworks are isostatic and independent. In all of these cases $dim(\Omega_r) \ge \max\{dim(\Omega_s,dim(\Omega_l))\}$. We only show 2 and 3 dimensions here, but in general the dimensions will be much higher. See Figure \ref{fig:neighborhoods} for explanation of what each is. In the following, when we use equality or inequality, we are referring to dimension. On the left, $\Omega_s = \Omega_l < \conestrat{n}{l_p}{d}$ meaning $\delta_r$ is independent but not isostatic. Middle left: $\Omega_s = \Omega_l = \conestrat{n}{l_p}{d}$, so $\delta_r$ is maximal independent or isostatic. Middle right: $\Omega_s = \conestrat{n}{l_p}{d} < \Omega_l$ meaning $\delta_r$ is rigid but not independent. Right: $\Omega_s < \Omega_l$ and $\Omega_s < \conestrat{n}{l_p}{d}$ meaning $\delta_r$ is neither independent nor rigid.}
\label{fig:dimensiondifferences}
\end{figure}

We note that the ``only if'' direction of item \ref{item:1} of Theorem \ref{thm:bigtheorem} is the same result that appears as Proposition 2 (in a different form) in \cite{MR0511410}. However, to the best of our knowledge, the ``if'' direction and the rest of Theorem \ref{thm:bigtheorem} and the proof of Theorem \ref{thm:genericimpliesindep} are new results. We obtain the following useful corollary.

\begin{corollary}
For $l_p$ norms, the rank of a graph $G$ in the $d$-dimensional rigidity
matroid is equal to the dimension of the projection
$\Phi_{G,l_p}^d$ on $G$ of the $d$-dimensional stratum $\Phi_{n,l_p}^d$.
\end{corollary}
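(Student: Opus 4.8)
The plan is to read the equality directly off the identification, established in the proof of Theorem~\ref{thm:genericimpliesindep}, between the generic $d$-dimensional rigidity matrix and the Jacobian of the edge-length map. First I would fix a well-positioned, regular $d$-dimensional configuration $s$ in the relative interior of the stratum $\conestrat{n}{l_p}{d}$, and consider the map $f_G$ sending a $d$-dimensional configuration to its vector of $l_p^p$ edge-lengths indexed by the edges of $G$; a neighborhood of the image of $f_G$ is exactly a neighborhood in the projection $\conestrat{G}{l_p}{d}$. Because $s$ is well-positioned, $f_G$ is smooth near $s$ — given by polynomials for $l_2$ and integral $p>1$, and by linear forms off the relevant facet of the unit ball for $l_1,l_\infty$, exactly as argued for Theorem~\ref{thm:genericimpliesindep} — so the local dimension of its image equals the rank of its Jacobian $Df_G(s)$.

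The key step is the observation, already used there, that $Df_G(s)$ is precisely an instantiation of the rigidity matrix $R(G)$ at the framework $(G,s)$. Since $s$ is regular, this instantiation attains the maximal rank over all instantiations, which by definition is the rank of $G$ in the $d$-dimensional generic rigidity matroid. Combining the two facts yields $\dim \conestrat{G}{l_p}{d} = \operatorname{rank} Df_G(s) = \operatorname{rank}(G)$, the asserted equality.

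Equivalently, one can route the argument through Theorem~\ref{thm:bigtheorem}: pick a maximal independent edge subset $E'\subseteq E$, so that $|E'|$ is the matroid rank of $G$, and apply item~\ref{item:1} to $G'=(V,E')$ to obtain $\dim\conestrat{G'}{l_p}{d}=|E'|$. Since $E'\subseteq E$, the projection onto $G'$ is a further coordinate projection of the projection onto $G$, giving $\dim\conestrat{G'}{l_p}{d}\le\dim\conestrat{G}{l_p}{d}$; and since the rows of $R(G)$ indexed by $E\setminus E'$ are dependent on those indexed by $E'$, the Jacobian rank — hence the image dimension — is still $|E'|$, which forces equality.

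The main obstacle is not a computation but the choice of basepoint: one must ensure that the regular, well-positioned configurations, where $R(G)$ attains its generic rank, are the same points at which the stratum projection attains its dimension. This is precisely what the relative-interior hypothesis and Kitson's genericity and well-positionedness notions buy — they force $Df_G$ to have locally constant maximal rank, so that the constant-rank theorem identifies the dimension of the image with the matroid rank.
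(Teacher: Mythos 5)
Your proof is correct and takes essentially the same route as the paper: the corollary is stated there without a separate argument, as an immediate consequence of Theorem~\ref{thm:bigtheorem} and the identification (in the proof of Theorem~\ref{thm:genericimpliesindep}) of the Jacobian of the edge-length map at a well-positioned, regular configuration with the rigidity matrix, which is exactly what you spell out in both of your variants.
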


    \section{$l_1$: 2-flattenability}
\label{joel2}
We now turn our attention to the $l_1$ norm in 2-dimensions. We note that the $l_1$ and $l_\infty$ norms in 2-dimensions are equivalent by simply applying a rotation to our axes (for argument, see \cite{DezaAndLaurent}). Specifically, we would like to characterize the class of graphs that are 2-flattenable under the $l_1$ norm. A result from \cite{Witsenhausen1986} shows that $K_4$ is 2-flattenable. We note that $K_4$ is the only forbidden minor for 2-flattenability under the $l_2$ norm. It immediately follows that the 2-flattenable $l_2$ graphs are a strict subset of the 2-flattenable $l_1$ graphs. In the remainder of this section, we narrow down the possible candidates for forbidden minors.

\begin{observation}
  All partial 2-trees are 2-flattenable.
\end{observation}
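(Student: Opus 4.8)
The plan is to reduce to the generating case and then induct on the standard construction of $2$-trees. Since every partial $2$-tree is a subgraph of a $2$-tree (adjoining isolated vertices does not affect flattenability), and since $2$-flattenability is minor-closed, it suffices to prove that every $2$-tree is $2$-flattenable in $l_1$. Recall that a $2$-tree is built from a single edge $K_2$ by repeatedly choosing an existing edge $\{x,y\}$ and adding a new vertex $z$ adjacent to both $x$ and $y$ (a Henneberg-I, i.e.\ degree-$2$ vertex, addition). I would therefore induct on the number of such steps, with the trivial base case $K_2$, whose realizable linkages are obviously realizable in the plane.

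For the inductive step, let $G' = G + z$ where $z$ is adjacent to the endpoints $x,y$ of an edge of $G$, and suppose $G$ is $2$-flattenable. Given any realization of a linkage $(G',\delta)$ in arbitrarily high dimension, restrict it to $G$; by the inductive hypothesis the induced linkage $(G,\delta|_G)$ has a realization $r$ in the $l_1$-plane preserving every edge length of $G$, in particular $\|r(x)-r(y)\|_1 = \delta_{xy}$. The three lengths $\delta_{xz},\delta_{yz},\delta_{xy}$ arise as the mutual distances of a genuine point triple in the original realization, so they satisfy the $l_1$ triangle inequalities. It then remains to place $z$ in the plane at $l_1$-distance $\delta_{xz}$ from $r(x)$ and $\delta_{yz}$ from $r(y)$; doing so completes a planar realization of $(G',\delta)$ and closes the induction.

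The heart of the argument, and the step I expect to be the main obstacle, is this final attachment claim: for any two fixed points $u,w$ in the $l_1$-plane and any radii $a,b$ with $|a-b| \le \|u-w\|_1 \le a+b$, the $l_1$-circles $\{p : \|p-u\|_1 = a\}$ and $\{p : \|p-w\|_1 = b\}$ intersect. The subtlety, absent in the Euclidean setting, is that the $l_1$-isometry group of the plane (signed coordinate permutations together with translations) does \emph{not} act transitively on point pairs at a given distance, so one cannot simply transport a known planar realization of the triangle rigidly onto the already-placed edge $\{r(x),r(y)\}$; feasibility must be verified for the specific placement produced by the induction. I would prove the claim by an intermediate-value argument: restricting the continuous function $p \mapsto \|p-w\|_1$ to the connected curve $\{p : \|p-u\|_1 = a\}$, its minimum is $\bigl|\,\|u-w\|_1 - a\,\bigr|$ and its maximum is $\|u-w\|_1 + a$, both attained at the collinear points $u \pm \tfrac{a}{m}(w-u)$ with $m = \|u-w\|_1$ (using positive homogeneity and the triangle inequality). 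By connectedness the function attains every value in between, and the hypothesis $|a-b| \le m \le a+b$ places $b$ precisely in this range.

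Finally, I would remark that the conclusion also follows immediately from the observation preceding the statement, namely that the $l_2$-$2$-flattenable graphs are exactly the partial $2$-trees and form a subclass of the $l_1$-$2$-flattenable graphs. However, the constructive induction above is self-contained, avoids invoking the full Belk--Connelly characterization, and makes explicit \emph{why} the $l_1$ case behaves well despite its weaker isometry group.
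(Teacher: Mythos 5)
Your proof is correct, but it takes a more self-contained route than the paper does for this particular observation. The paper's own proof is a one-line deduction: partial 2-trees are exactly the 2-flattenable graphs for $l_2$ (the $K_4$-minor-free graphs), and the authors assert that 2-flattenability for $l_1$ follows --- leaving the actual mechanism implicit. The constructive content you supply (build the 2-tree by attaching degree-2 vertices, and verify that two $l_1$-circles whose radii and center distance satisfy the triangle inequalities must intersect) is essentially the argument the paper deploys elsewhere, in the first half of the proof of Theorem \ref{thm:2sumK4}, where the intersection claim is justified informally by ``placing $r$ at the origin, moving $s$ along the $l_1$ ball of $r$ in the first quadrant and observing the balls.'' Your intermediate-value argument on the connected curve $\{p : \|p-u\|_1 = a\}$, with the extremes $|\,\|u-w\|_1 - a\,|$ and $\|u-w\|_1 + a$ attained at the collinear points, is cleaner and strictly more general: it works verbatim for any norm on the plane, and it correctly isolates the genuine subtlety that the $l_1$ isometry group is not transitive on point pairs at a fixed distance, so one cannot just rigidly transport a known triangle realization onto the already-placed edge. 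What the paper's shortcut buys is brevity and consistency with its framing (the observation is stated as a corollary of the $l_2$ characterization plus the containment of flattenable classes); what yours buys is a complete, checkable induction that does not lean on the Belk--Connelly result at all. Only minor polish is needed: handle the degenerate case $\|u-w\|_1 = 0$ (excluded anyway since edge lengths are positive) and note explicitly that the restriction of the high-dimensional realization to $G$ is itself a realization of $(G,\delta|_G)$, so the inductive hypothesis applies.
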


This follows from the fact that partial 2-trees are exactly graphs without a $K_4$ minor. We define 2-trees recursively. A triangle is a 2-tree. Given any 2-tree, attaching another triangle onto a single edge is also a 2-tree. A partial 2-tree is any subgraph of a 2-tree. Because the 2-flattenable graphs for $l_2$ are exactly the partial 2-trees, it follows partial 2-trees are 2-flattenable for $l_1$.

In order to generalize our results, we introduce the following Theorem which involves a \define{2-sum} operation. A \define{2-sum} of graph $G_1$ and $G_2$ is a a new graph $G$ made by gluing an edge of $G_1$ to one of $G_2$, i.e. we identify an edge of $G_1$ with an edge of $G_2$.

\begin{theorem}\label{thm:2sumK4}
    A 2-sum of 2-flattenable graphs is 2-flattenable if and only if at most one graph has a $K_4$ minor.
\end{theorem}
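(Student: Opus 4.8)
The plan is to reduce everything to bookkeeping of how the shared edge is distributed between the two coordinate axes. First I would record that a realization in $(\R^2,\lVert\cdot\rVert_1)$ is the same thing as a pair of line configurations $x,y\colon V\to\R$ with $\delta_{ij}=|x_i-x_j|+|y_i-y_j|$ on every edge, and that the isometry group of the $l_1$‑plane is generated by translations together with the order‑$8$ dihedral group acting on the coordinates (the axis reflections $x\mapsto -x$, $y\mapsto -y$ and the swap $x\leftrightarrow y$). Writing $G=G_1\cup_e G_2$ with shared edge $e=uv$, these isometries let us align a $2$D realization of $G_2$ to one of $G_1$ along $u,v$ exactly when the two realizations assign the same unordered pair $\{a,\delta_e-a\}$ of coordinate‑lengths to $e$. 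Hence, letting $A(H,\delta^H)\subseteq[0,\delta_e]$ denote the set of values $|x_u-x_v|$ attainable over all $2$D realizations of $(H,\delta^H)$ (a set symmetric about $\delta_e/2$ because of the swap), the gluing succeeds, and therefore $G$ is $2$-flattenable, if and only if for every $l_1$-realizable $\delta^G$ one has $A(G_1,\delta^{G_1})\cap A(G_2,\delta^{G_2})\neq\emptyset$. This is legitimate because $G_1$ and $G_2$ meet only in $e$, so any two compatible $2$D realizations of the pieces assemble into a single $2$D realization of $G$.

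\textbf{The ``if'' direction.} Assume without loss of generality that $G_2$ has no $K_4$ minor, i.e.\ $G_2$ is a partial $2$-tree. I would prove the sub‑lemma that then $A(G_2,\delta^{G_2})=[0,\delta_e]$ for every realizable $\delta^{G_2}$, so that $A(G_1,\delta^{G_1})\cap A(G_2,\delta^{G_2})=A(G_1,\delta^{G_1})\neq\emptyset$ (nonempty since $G_1$ is $2$-flattenable) and the gluing always goes through. The sub‑lemma is an induction on the $2$-tree construction: for a single triangle the claim reduces to the fact that two $l_1$-spheres in the plane meet whenever the triangle inequalities hold, \emph{independently of the direction of the segment joining their centres}; attaching a further triangle on an already‑placed edge only adds a degree‑two vertex, which can again be placed by the same sphere‑intersection fact without disturbing the chosen split of $e$. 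Deleting edges (passing from $2$-trees to partial $2$-trees) only enlarges the realization space, so fullness of $A$ is preserved.

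\textbf{The ``only if'' direction.} Suppose both $G_1$ and $G_2$ contain $K_4$ minors. Since $2$-flattenability is minor‑closed, it suffices to exhibit one non‑$2$-flattenable graph forced as a minor, namely the $2$-sum of two copies of $K_4$ glued along $e$ (using that a $K_4$ minor of each piece can be routed so that $e$ maps to an edge of its $K_4$; see the obstacle below). The engine is that $K_4$ is generically rigid in the $2$D $l_1$ rigidity matroid ($6$ edges against $2\cdot4$ coordinates modulo the two‑dimensional translation group), so for generic realizable $\delta$ the set $A(K_4,\delta)$ is finite and symmetric about $\delta_e/2$. I would then choose the two $K_4$-metrics so that their finite split‑sets $A_1,A_2$ are disjoint, yet the combined linkage remains realizable in $\R^3$: a single $K_4$, although pinned to finitely many splits of $e$ in the plane, admits a whole interval of splits once a third coordinate is available (Witsenhausen's realization provides the planar end‑cases). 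Picking a common split value in the relative interiors of the two three‑dimensional admissible ranges yields a realization of the double‑$K_4$ in $\R^3$, while $A_1\cap A_2=\emptyset$ forbids any realization in $\R^2$. Equivalently, one may phrase the obstruction through Observation~\ref{obs:not_convex_not_flattenable}, showing the Cayley configuration space on one of the four non‑edges is disconnected.

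\textbf{Main obstacle.} The hard part is the ``only if'' construction, and it has two delicate points. The first is the quantitative gap between the finite planar split‑set of $K_4$ and its three‑dimensional admissible interval: one must verify explicitly that disjoint planar split‑sets $A_1,A_2$ are realized simultaneously by a genuine higher‑dimensional linkage, so that the witness is realizable in \emph{some} dimension — a prerequisite for invoking $2$-flattenability at all. The second is combinatorial: routing a $K_4$ minor of each piece so that $e$ maps to an edge of the $K_4$, which is what produces the double‑$K_4$ minor; this is immediate when the pieces are $2$-connected but can fail if $e$ is incident to a degree‑one vertex, so some non‑degeneracy of the $2$-sum must be assumed. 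By contrast, the ``if'' direction is essentially routine once direction‑independence of planar $l_1$-sphere intersection is established.
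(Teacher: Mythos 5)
Your reduction to the ``split'' $\{a,\delta_e-a\}$ of the shared edge is correct (the $l_1$-isometry group of the plane is indeed generated by translations and the signed coordinate permutations, so two planar realizations of the pieces glue exactly when their unordered coordinate-splits of $e$ agree), and your ``if'' direction is essentially the paper's argument in cleaner clothing: the paper also fixes a planar realization of $G_1$, builds the partial $2$-tree $G_2$ onto $e$ one degree-two vertex at a time, and rests everything on the fact that two planar $l_1$-spheres meet whenever the triangle inequality holds, independently of the direction of the segment joining their centers. Your sub-lemma $A(G_2,\delta^{G_2})=[0,\delta_e]$ is a nice way to package this. (One small point you gloss over, as does the paper: to run the induction on a \emph{partial} $2$-tree you should first realize $(G_2,\delta^{G_2})$ somewhere, read off lengths for the missing edges of a completing $2$-tree rooted at $e$, and then induct on the full $2$-tree.)

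The genuine gap is in the ``only if'' direction: you describe what a witness must look like --- two $K_4$-linkages sharing $\delta_e$, with disjoint finite planar split-sets, whose union is nonetheless realizable in \emph{some} $l_1$-space --- but you never produce one, and the generic-rigidity heuristic does not by itself guarantee that such a pair exists. The two requirements pull against each other: disjointness of $A_1$ and $A_2$ is easy for ``independent'' generic metrics, but joint realizability of the $6$-vertex linkage in some dimension is a real constraint (there are no rotations in $l_1$, so you cannot simply embed the two planar pieces in orthogonal factors and align the shared edge), and your appeal to ``relative interiors of the two three-dimensional admissible ranges'' is not an argument. The paper resolves this by exhibiting the witness explicitly: $G_1$ is the equidistant $K_4$ with all edges $3$, whose unique planar realization (the corners of a diamond) forces the shared edge's split into $\{0,\tfrac{3}{2},3\}$; $G_2$ has the shared edge of length $3$ and all other edges of length $2$, and a short case analysis shows its planar realizations admit none of those splits; and the concrete points $v_1=(0,0,0)$, $v_2=(1.5,1.5,0)$, $v_3=(0.5,1,0.5)$, $v_4=(1,0.5,-0.5)$ realize $G_2$ in $\R^3$ with the shared edge at split $\tfrac{3}{2}$, so the combined linkage is realizable in $\R^3$ but not in $\R^2$. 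Until you supply such an explicit pair (or a genuine existence proof), your ``only if'' direction is a plan, not a proof. Your secondary worry about routing the two $K_4$ minors through $e$ is legitimate --- the paper's proof also only treats the double-$K_4$ case directly --- but it is the missing witness, not the routing, that leaves the argument incomplete.
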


\begin{figure} 
    \centering 
    \def\svgwidth{100pt} 
    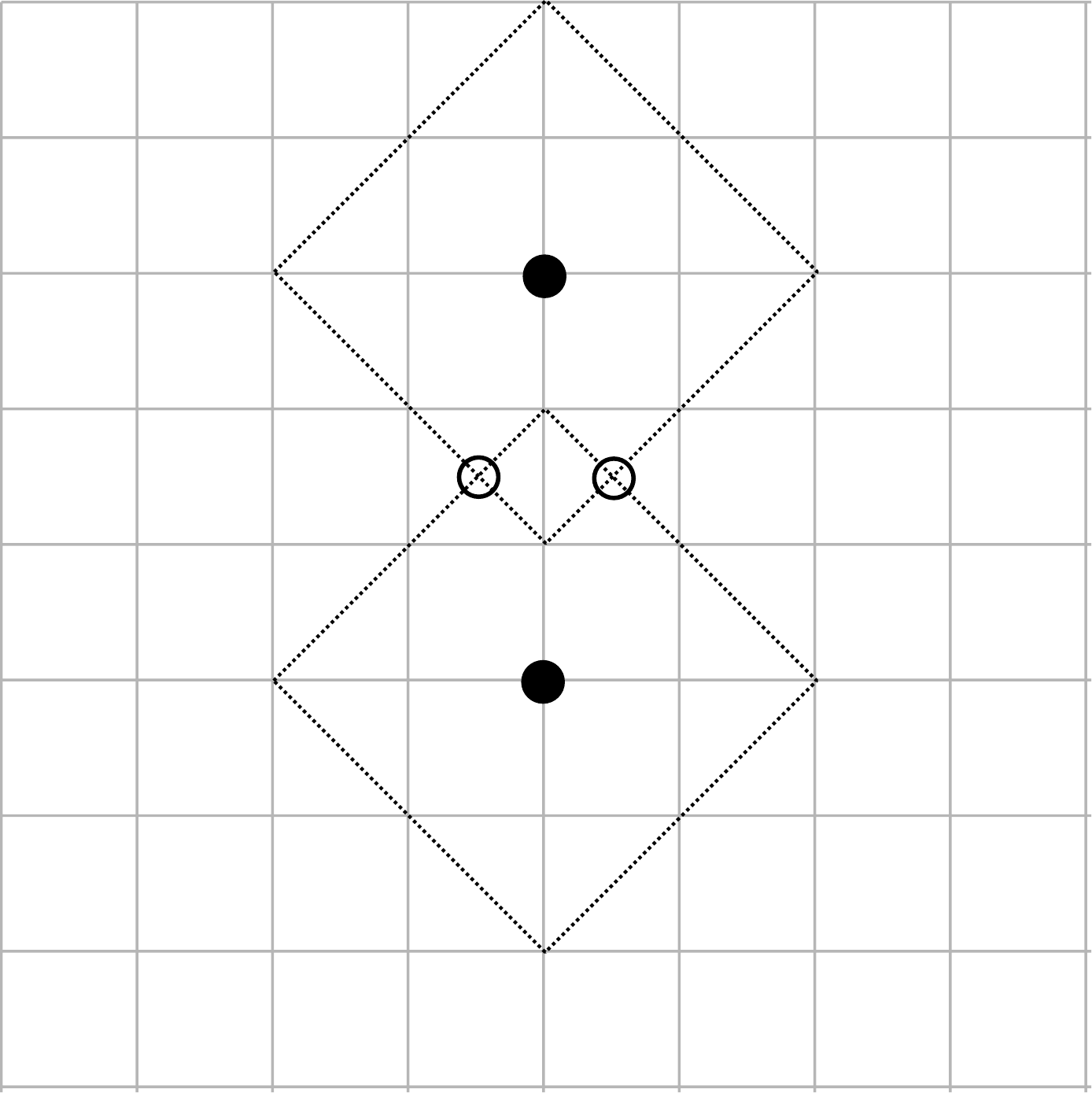 
    \def\svgwidth{100pt} 
    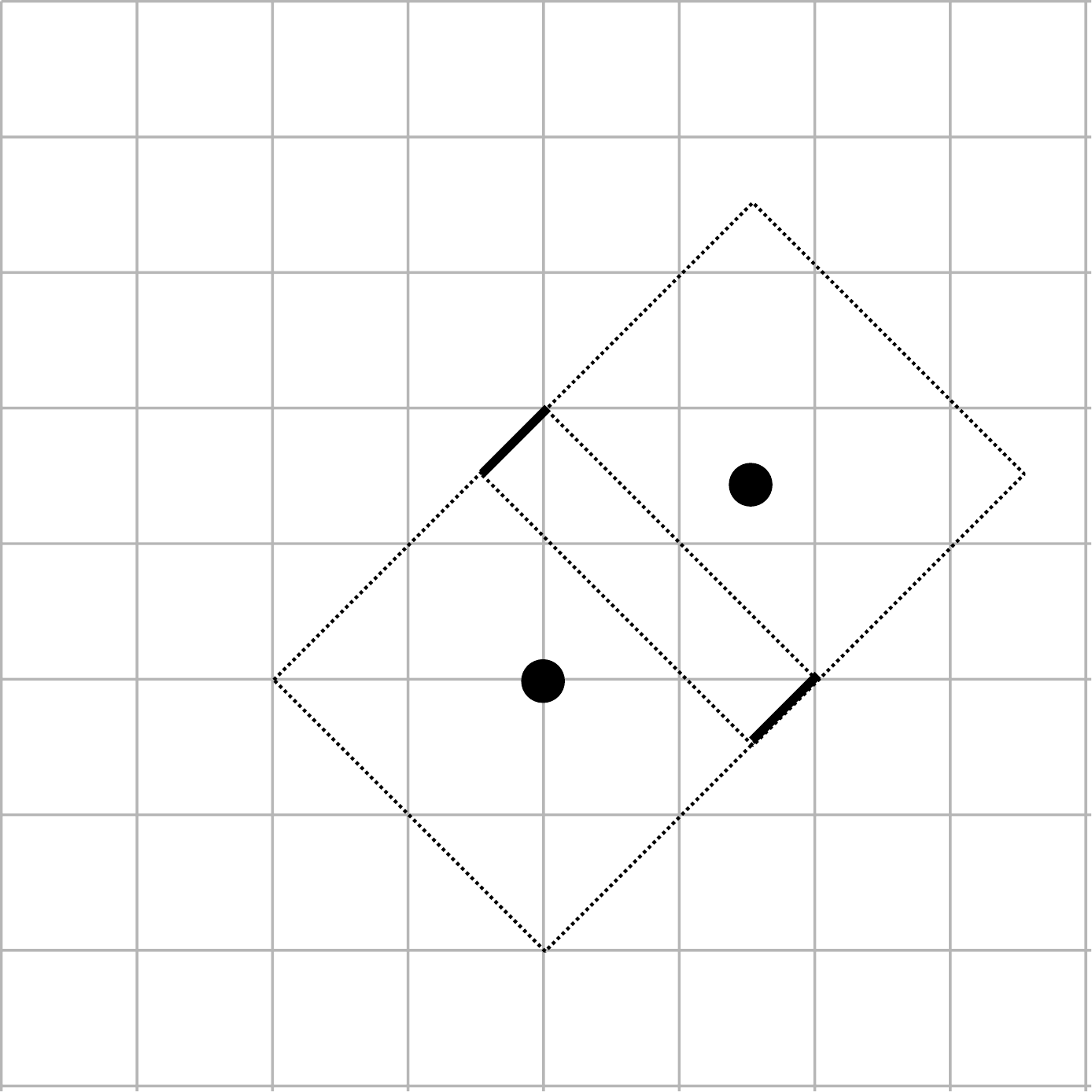
    \caption{On the left is a partial realization of $G_2$ if we assume a vertical orientation for $(v_1, v_2$). On the right is the same for $(v_1, v_2)$ at an angle of 45 degrees}
\label{fig:2sumproof}
\end{figure}

\begin{proof}
    Suppose $G_1$ and $G_2$ are 2-flattenable and only $G_1$ has a $K_4$ minor. Then, $G_2$ is a partial 2-tree. Thus the 2-sum of $G_1$ and $G_2$ can be built by taking a realization of $G_1$, identifying the 2-sum, and then adding the vertices of $G_2$ one at a time. Let $r$ and $s$ be the 2 vertices we are attaching some new vertex $v$ to. No matter the orientation of $r$ and $s$, as long as the triangle formed by $r, s, \text{and } v$ obeys the triangle inequality, the $l_1$-balls surrounding $r$ and $s$ with distances corresponding to their distance to $v$ will always intersect in 2-dimensions. This can be verified by placing $r$ at the origin, moving $s$ along the $l_1$ ball of $r$ in the first quadrant and observing the balls surrounding $r$ and $s$ as $s$ moves. Hence, the triangle $r,s,v$ can be realized in 2-dimensions.

    Suppose $G_1$ and $G_2$ both have a $K_4$ minor. We give a counter-example to show that the 2-sum is not 2-flattenable. Let $G_1$ be the equidistant $K_4$ with each edge having distance 3. Let $G_2$ have every edge with distance 2 except for $(v_1, v_2)$, which has distance 3. $G_1$ has only one realization modulo rearranging vertices: all points at the corners of the distance 3 $l_1$-ball. The edges of $G_1$ are all either vertical/horizontal or at an angle of 45 degrees. We claim $G_2$ has no realization with $(v_1,v_2)$ at those angles. 

    If we assume that $(v_1,v_2)$ is vertical, looking at figure \ref{fig:2sumproof}, we see that the remaining 2 vertices can only lie at $p_1$ and $p_2$. The possible distances they can obtain are 0 and 1, which means $G_2$ cannot be completed. Looking at the 45 degree case on the right of figure \ref{fig:2sumproof}, we see that the other 2 vertices can only lie in $I_1$ and $I_2$. This leads to possible distances of $[0,1]$ and 4. Thus $G_2$ still cannot be completed. Note that the horizontal and other 45 degree orientations are just flips of these two cases.

    Hence, $G_2$ has no realization with $(v_1,v_2)$ at any of the angles of $G_1$'s edges. So, the 2-sum of $G_1$ and $G_2$ is not 2-realizable. We note that this 2-sum does have a realization in 3-dimensions: $v_1 = (0,0,0), v_2=(1.5,1.5,0), v_3=(0.5,1,0.5), v_4=(1,0.5,-0.5)$ give a realization for $G_2$ with $(v_1,v_2)$ at a 45 degree angle, so $G_1$ 2-sum $G_2$ is not 2-flattenable.
\qed
\end{proof}

Note that the realization given at the end of the above proof has a $K_4$ lying on a 2-dimensional subspace of $\R^3$. This is still a 3-dimensional realization as the spanned 2-dimensional subspace is not equipped with an 2-dimensional $l_1$ norm. However, it is equipped with an $l_1$ norm in 3-dimensions. The $K_4$ has a realization in $R^2$, we only need it in 3-dimensions in the proof to 2-sum it to the other $K_4$.

Another result from \cite{Ball} shows that $K_5$ is not 2-flattenable. Hence, we search the subgraphs of $K_5$ and check them for 2-flattenability. This leads to the following example of a non-2-flattenable graph, which we prove using the techniques developed in this paper.

\begin{theorem}\label{thm:banana}
  The so called ``banana'' graph or $K_5$ minus one edge is not 2-flattenable under the $l_1$ norm. 
\end{theorem}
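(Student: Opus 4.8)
The plan is to apply Observation \ref{obs:not_convex_not_flattenable}: if I can exhibit a single assignment of edge lengths to a subgraph $G$ of the banana for which the resulting $2$-dimensional Cayley configuration space $\cayley{F}{l_1}{2}{G}{l}$ over some non-edge set $F$ is non-convex, then the banana is not $2$-flattenable. The banana is $K_5$ minus one edge; label the vertices so that the missing edge is $\{4,5\}$, so vertices $4$ and $5$ are each adjacent to the ``triangle'' $\{1,2,3\}$ but not to each other. The natural choice is to take $F=\{\,\{4,5\}\,\}$ to be the single non-edge, let $H=G\setminus F$ be the banana itself, fix an $l_1$-realizable length vector $\delta^H$ on the nine edges of the banana, and study the set of attainable $l_1$-lengths $\|r(4)-r(5)\|_1$ over all $2$-dimensional realizations $r$ of $(H,\delta^H)$. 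The claim will be that for a suitable symmetric choice of edge lengths this attainable set is a disconnected union of intervals (or otherwise non-convex), which by Observation \ref{obs:not_convex_not_flattenable} rules out $2$-flattenability.

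First I would fix the lengths to exploit the rigidity of $K_4$ in the $l_1$-plane established via Theorem \ref{thm:2sumK4} and the discussion of the equidistant $K_4$: the idea is that once the triangle $\{1,2,3\}$ is placed, each of the two apex vertices $4$ and $5$ is pinned (up to finitely many reflections/placements) by its three edges to the triangle, exactly as in the $2$-sum counterexample where the equidistant $K_4$ with edge length $3$ has its vertices forced to the four corners of an $l_1$-ball. Concretely I would choose the lengths on $\{1,2,3,4\}$ and on $\{1,2,3,5\}$ so that each induced $K_4$ has only finitely many (and by symmetry, reflected) realizations. Then the distance $\|r(4)-r(5)\|_1$ can only take the finite set of values arising from the finitely many relative placements of the two apexes, and a finite set of two or more distinct values (with a gap between them) is manifestly non-convex. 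This mirrors precisely the computation at the end of the proof of Theorem \ref{thm:2sumK4}, where the apex of the second $K_4$ was shown to be confined to isolated points or to intervals separated by a gap, yielding attainable distances such as $\{0,1\}$ and $4$.

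The main obstacle I expect is the careful case analysis of where each apex vertex can sit: unlike the Euclidean case, the $l_1$-ball is a diamond, so the intersection of two $l_1$-circles (the loci $\|x-r(i)\|_1=\delta_{i4}$) can be a single point, a segment, or two points depending on the relative positions, and one must rule out a continuum of placements that would reconnect the attainable-distance set into a single interval. I would handle this by placing vertex $1$ at the origin and using the explicit piecewise-linear structure of $l_1$-circles, enumerating the quadrant/facet combinations exactly as suggested in Figure \ref{fig:2sumproof} and the surrounding text, to show the apex loci are isolated points (or short segments) whose pairwise $l_1$-distances omit an intermediate value. Once the gap in attainable $\{4,5\}$-distances is established, non-convexity of $\cayley{F}{l_1}{2}{H}{\delta^H}$ is immediate, and Observation \ref{obs:not_convex_not_flattenable} finishes the proof. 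Since $l_1$ and $l_\infty$ agree in the plane up to rotation, the same conclusion holds for $l_\infty$ as already noted in the paper.
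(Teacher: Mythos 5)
Your overall strategy---exhibiting a non-convex $2$-dimensional Cayley configuration space and invoking Observation \ref{obs:not_convex_not_flattenable}---is the same as the paper's, but your choice of the non-edge set $F$ breaks the logic. You take $F=\{\{4,5\}\}$, the edge \emph{missing} from $K_5$, so that $H$ is the banana itself and $H\cup F=K_5$. But Theorem \ref{thm:covexcayley_d-flattenable} and Observation \ref{obs:not_convex_not_flattenable} concern \emph{inherent} Cayley configuration spaces, i.e., partitions $G=H\cup F$ of the edge set of the graph $G$ whose flattenability is at issue: convexity of $\cayley{F}{l_p}{d}{H}{\delta^H}$ is guaranteed (for $d$-flattenable $G$) precisely because it is a fiber of $\conestrat{H\cup F}{l_p}{d}=\conestrat{G}{l_p}{d}$, which coincides with the convex set $\cone{G}{l_p}$. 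With your $F$, non-convexity of the attainable $\{4,5\}$-distances would only certify that $H\cup F=K_5$ is not $2$-flattenable---a fact already known from \cite{Ball}---and says nothing about the banana. Indeed, the paper explicitly warns that Cayley configuration spaces over non-edges $F$ of a graph $G$ can behave independently of the flattenability of $G$ itself.

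There is also a secondary problem with the concrete lengths you propose: if both induced $K_4$'s are made equidistant so that each apex is pinned, the two apexes are forced to the \emph{same} fourth corner of the $l_1$-ball determined by the shared triangle, so the attainable $\{4,5\}$-distance set is the single value $\{0\}$, which is convex. The paper's proof instead removes an actual edge $f$ of the banana incident to the degree-$3$ vertex, assigns unit length to the remaining eight edges, and exploits the fact that the fifth vertex---now attached by only two unit edges to the rigid equidistant $K_4$---sweeps out intervals $I_1,I_2$; the resulting attainable lengths for $f$ are $[0,1]\cup\{2\}$, a non-convex set within a genuine partition of the banana's own edges. To repair your argument you would need to redo the case analysis with $F$ equal to such an edge $f$ (or another subset of the banana's nine edges).
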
 

\begin{figure} 
    \centering 
    \def\svgwidth{100pt} 
    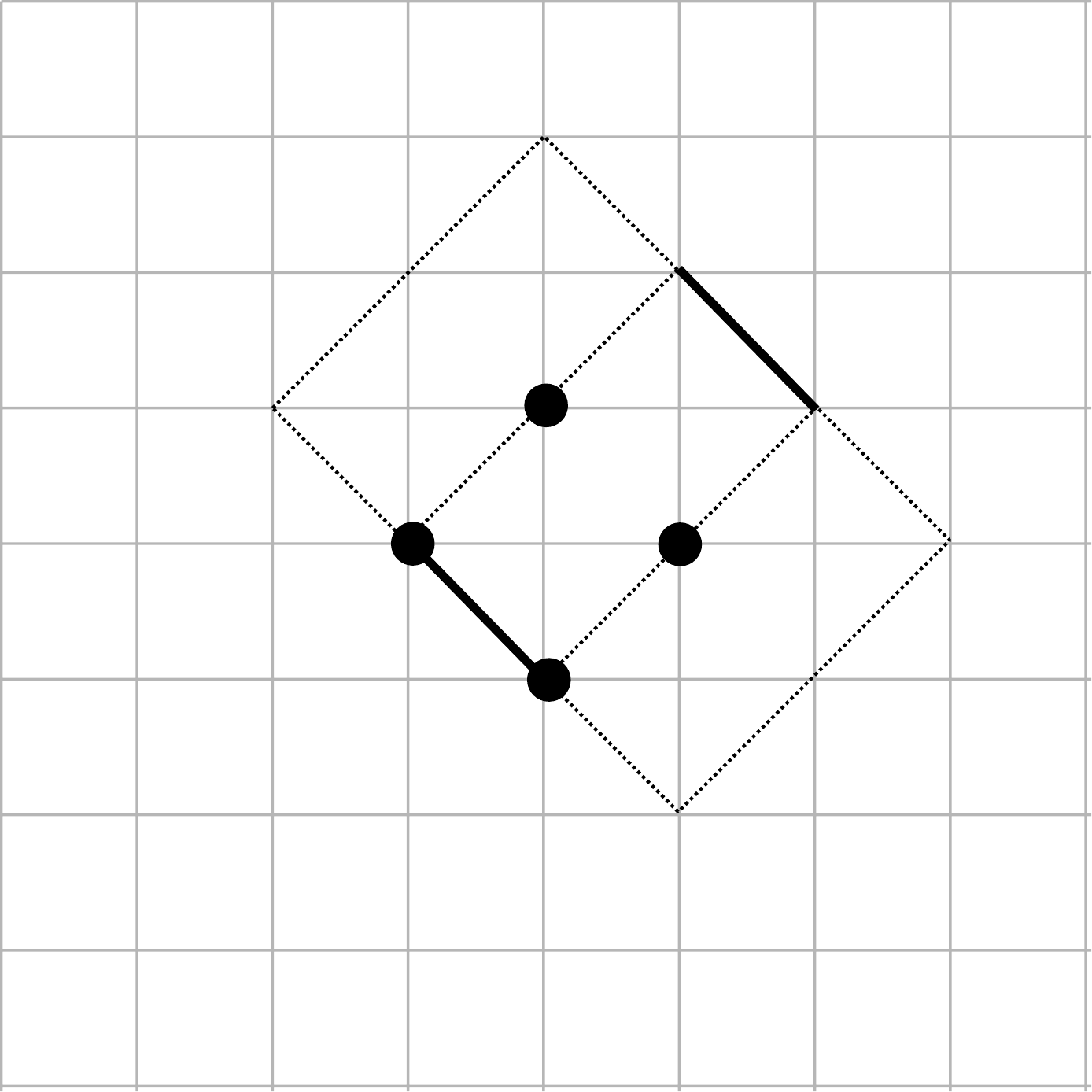 
    \def\svgwidth{100pt} 
    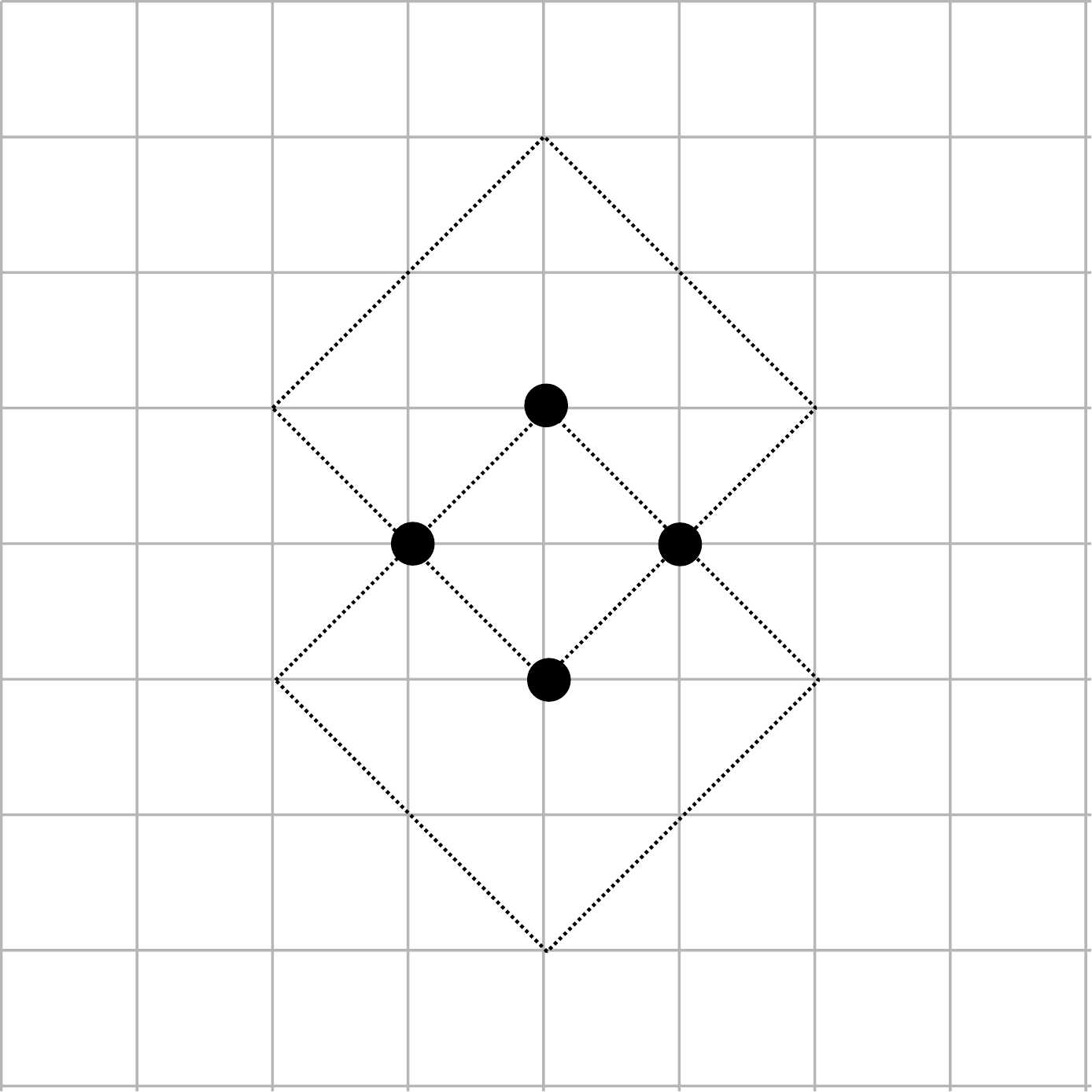
    \caption{On the left is an illustration of an equidistant $K_4$ with the possible intervals of Case 1. On the right is the same for Case 2}
\label{fig:K_4proof}
\end{figure}

\begin{proof}
  We will invoke Observation \ref{obs:not_convex_not_flattenable} to show this.

  Consider a distance vector for the banana with unit distances for all except one edge, $f$. This has a realization in 3-dimensions as $K_5$ is 3-flattenable for the $l_1$ norm (see \cite{Ball}). Then, we have an equidistant $K_4$ as a subgraph. The only realization for such a $K_4$ in 2-dimensions is to have all 4 points arranged as the vertices of the unit ball centered at the origin. The 2 remaining unit edges then connect a new vertex to 2 of these points. Here we have 2 cases: the 2 vertices border the same quadrant or they lie across one of the axes from each other.

  Case 1: Without loss of generality, we assume the 2 vertices are the upper right of the $K_4$. In figure \ref{fig:K_4proof}, it can be seen that the new vertex can lie anywhere in $I_1$ or $I_2$. If it lies in $I_1$, the remaining edge of the banana can take lengths in the range $[0,1]$. If it lies in $I_2$, the only length it can be is 2.

  Case 2: Without loss of generality, assume the 2 vertices are the top-most and bottom-most. Again from figure \ref{fig:K_4proof}, the new vertex only has 2 positions it can be in, each leading to a length of 1 for the remaining edge.

  Hence, $\cayley{F}{l_1}{2}{G \setminus F}{\delta^{G \setminus F}} = [0,1] \cup \{2\}$, where $G$ is the banana and $F=\{f\}$. This is not convex and thus by Theorem \ref{thm:covexcayley_d-flattenable}, the banana is not 2-flattenable.
\qed
\end{proof}

\begin{observation}\label{obs:K5minus2}
    $K_5$ minus 2 edges incident to a single vertex is 2-flattenable.
\end{observation}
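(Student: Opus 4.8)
The plan is to show $K_5$ minus two edges incident to a single vertex is 2-flattenable by exploiting its decomposition structure, so that I can appeal to Theorem \ref{thm:2sumK4}. Label the vertices $\{a,b,c,d,e\}$ and suppose the two removed edges are $(e,a)$ and $(e,b)$, both incident to $e$. Then $e$ is adjacent only to $c$ and $d$, while $\{a,b,c,d\}$ induce a complete $K_4$. First I would observe that the resulting graph is exactly the $2$-sum of the $K_4$ on $\{a,b,c,d\}$ with a triangle on $\{c,d,e\}$, where the gluing edge is $(c,d)$: the triangle $\{c,d,e\}$ contributes the two edges $(e,c)$ and $(e,d)$ together with the shared edge $(c,d)$, and identifying $(c,d)$ with the corresponding edge of $K_4$ reconstitutes the whole graph on five vertices with precisely the edges $(e,a)$ and $(e,b)$ absent.

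Given this decomposition, the next step is routine application of the previously established machinery. A triangle is a partial $2$-tree and hence $2$-flattenable, and in particular it has no $K_4$ minor. The $K_4$ is $2$-flattenable in $l_1$ by the result of Witsenhausen cited before Theorem \ref{thm:2sumK4}. Thus we are taking the $2$-sum of two $2$-flattenable graphs, of which \emph{at most one} (namely the $K_4$) has a $K_4$ minor, since the triangle has none. By the ``if'' direction of Theorem \ref{thm:2sumK4}, the $2$-sum is $2$-flattenable, which is exactly the claim.

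The only genuine content is verifying that $K_5$ minus two edges at a common vertex really is a $2$-sum of a $K_4$ and a triangle along a shared edge; once that combinatorial identification is made, the statement is immediate from the earlier theorem. I would therefore present the proof primarily as this identification, remarking that it is independent of which specific pair of edges incident to the vertex is removed (any such pair leaves that vertex with exactly two neighbors, and those two neighbors together with the vertex form the attached triangle, while the remaining four vertices form the $K_4$). I expect no real obstacle: the main point to state carefully is that the shared edge $(c,d)$ of the $2$-sum is present in both summands and gets identified, so no edge is double-counted and no edge of the target graph is missed.
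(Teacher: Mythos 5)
Your proof is correct and follows the same route as the paper, which simply states that the observation ``follows directly from Theorem \ref{thm:2sumK4}''; you have merely made explicit the decomposition of $K_5$ minus two edges at a common vertex as the $2$-sum of a $K_4$ and a triangle along the edge joining the two remaining neighbors, which is exactly the intended application.
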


\begin{proof}
    This follows directly from Theorem \ref{thm:2sumK4}.
\qed
\end{proof}

\begin{observation}
    Connected graphs on 5 vertices with 7 edges are 2-flattenable
\end{observation}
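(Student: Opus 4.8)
The plan is to enumerate all such graphs and dispatch each with a tool already established above. First I would observe that a connected $5$-vertex graph with $7$ edges is exactly $K_5$ with three edges deleted, and since $K_5$ has edge-connectivity $4$, deleting only three edges can never disconnect it; hence connectivity is automatic, and the family is in bijection (via complementation) with the $3$-edge graphs on the same vertex set. Up to isomorphism there are exactly four $3$-edge graphs that fit on five vertices --- the triangle $K_3$, the path $P_4$, the star $K_{1,3}$, and $P_3 \cup K_2$ (the only remaining $3$-edge graph, the matching $3K_2$, needs six vertices) --- so there are exactly four graphs to verify.

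For the complements of $K_3$ and of $P_4$ I would show the graph is a partial $2$-tree and appeal to the observation that all partial $2$-trees are $2$-flattenable. Deleting a triangle yields the $3$-page book (three triangles sharing a common spine edge), which is literally a $2$-tree; deleting a $P_4$ yields a graph that collapses to a single edge under repeated elimination of degree-$2$ vertices with no fill-in, hence is again a (partial) $2$-tree. For the complement of the star $K_{1,3}$, the graph is a $K_4$ together with a single pendant vertex; this is a subgraph, and therefore a minor, of ``$K_5$ minus two edges incident to a single vertex,'' which is $2$-flattenable by Observation \ref{obs:K5minus2}, so minor-closedness of $2$-flattenability finishes this case.

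The one genuinely resistant case --- and the step I expect to be the main obstacle --- is the complement of $P_3 \cup K_2$. This graph contains a $K_4$ minor (contracting the edge from its unique degree-$2$ vertex into a neighbor produces $K_4$), so it is not a partial $2$-tree and none of the preceding tools apply directly. My plan is to add the single missing edge joining the two vertices that form its natural $2$-separation; the augmented graph is then an honest $2$-sum, along that edge, of a $K_4$ and a triangle. Since exactly one summand (the $K_4$) has a $K_4$ minor, Theorem \ref{thm:2sumK4} shows the augmented graph is $2$-flattenable, and deleting the edge we added recovers the original graph as a minor, which is therefore $2$-flattenable as well.

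The difficulty is concentrated entirely in recognizing that this last graph is not a partial $2$-tree and in producing the right auxiliary $2$-sum; the remaining verifications (the classification into four types, the automatic connectivity, and the two partial-$2$-tree reductions) are routine. A secondary point to check carefully is that adding the missing edge really exposes a clean edge-$2$-sum of a $K_4$ and a triangle, with the two summands coinciding exactly with the two sides of the $2$-separation, rather than some degenerate gluing.
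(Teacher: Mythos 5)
Your proof is correct, and at bottom it follows the same strategy as the paper's own (very terse) argument: enumerate the graphs and dispatch each with the partial-2-tree observation, Observation~\ref{obs:K5minus2}, and Theorem~\ref{thm:2sumK4}. The comparison is still worth making. The paper asserts that ``the only 2 such graphs'' are a subgraph of the graph in Observation~\ref{obs:K5minus2} and the complete 2-tree on 5 vertices; your complementation argument correctly shows there are in fact \emph{four} isomorphism classes ($\overline{K_3}$, $\overline{P_4}$, $\overline{K_{1,3}}$, $\overline{P_3\cup K_2}$, all automatically connected since $K_5$ is $4$-edge-connected). The paper's two categories do happen to cover all four --- two of the graphs are 2-trees and two are subgraphs of $K_5$ minus two edges at a vertex --- but your version makes the case analysis explicit and checkable, which is a genuine improvement. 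The one place where you work harder than necessary is the case you single out as hard: writing $P_3$ as the path $a$--$b$--$c$ and $K_2$ as $de$, the graph $\overline{P_3\cup K_2}$ is $K_5$ minus $\{ab,\,bc,\,de\}$, and since $ab$ and $bc$ are both incident to $b$, it is already a subgraph of $K_5$ minus two edges at a single vertex; so Observation~\ref{obs:K5minus2} plus minor-closedness handles it exactly as in your $\overline{K_{1,3}}$ case. Indeed, the auxiliary graph you construct by adding the edge $de$ \emph{is} the graph of Observation~\ref{obs:K5minus2} (which the paper itself proves 2-flattenable via the same $K_4$-plus-triangle 2-sum), so your final step re-derives that observation rather than citing it --- valid, but a detour.
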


\begin{proof}
    The only 2 such graphs are a subgraph of the Observation \ref{obs:K5minus2} and the complete 2-tree on 5 vertices. Both of these we know to be 2-flattenable. 
\qed
\end{proof}

The only remaining 5 vertex graph we have not looked at yet is the \define{wheel} graph. So far we have shown that for 5 vertices, graphs with the wheel as a minor are not 2-flattenable and graphs without are 2-flattenable. Thus, if we can show that the wheel is not 2-flattenable, then it becomes the only forbidden minor for $l_1$ 2-flattenability. We discuss this more in Section \ref{open} and conjecture that in fact the wheel is the only forbidden minor for 2-flattenability under the $l_1$ (and $l_\infty$) norm.

\section{Conjectures and Open Problems}
\label{open}

\subsection{Combinatorial Rigidity and Structure of $\cone{n}{l_p}$}
\label{sec:rigidityandstrucutre}

In Theorem \ref{thm:genericimpliesindep} and Theorem \ref{thm:bigtheorem}, we have
shown that combinatorial rigidity properties of a graph in $d$-dimensions
is tied to the dimension of the projection of {\sl some} "face" of  $d$-dimensional stratum of $\cone{n}{p}$. These properties are not generic
when viewed as properties of frameworks $\Omega_r$ in the flattening
dimension of $K_n$, i.e, when viewed as distance vectors $\delta_r$ in the
interior of the cone $\cone{n}{p}$. However, since we know that
these properties are generic in $d$-dimensions (via combinatorial rigidity
techniques), this means it must be that that the projection of \emph{every}
face of the $d$-dimensional stratum of $\cone{n}{p}$ onto $G$ has the same
dimension. Thus combinatorial rigidity and Cayley configuration spaces can
help understand the structure of the cone. However, it would be good to
have an independent proof of these properties directly via the cone geometry. More formally:

\begin{conjecture}
  $G$ is $d$-independent if and only if the projection of every face of $\conestrat{G}{l_p}{d}$ has dimension equal to the number of edges of $G$.
\end{conjecture}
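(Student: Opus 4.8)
The plan is to reduce the conjecture to a rank statement on each face and then to supply the region-independence of that rank directly from the cone geometry, as the authors request. Throughout, I read ``face'' of $\conestrat{n}{l_p}{d}$ as a \emph{maximal} face, i.e. the closure of a piece of the natural stratification on which the combinatorial type of the $d$-dimensional configurations is constant: for $l_1$ and $l_\infty$ these are the full-dimensional regions on which the sign of every coordinate difference is fixed, so that the $l_p^p$-distance map is a fixed linear function of the point coordinates, and for $l_2$ this is essentially the single top stratum of generic configurations. The restriction to maximal faces is necessary, since a face of dimension smaller than $|E(G)|$ cannot project onto $|E(G)|$ independent coordinates; so the statement can only be intended for the maximal faces that carry well-positioned, generic frameworks, and the genuine multiplicity of such faces is an $l_1/l_\infty$ phenomenon.

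The backward direction is immediate from Theorem \ref{thm:bigtheorem}. The stratum is the union of its faces and we project onto only $|E(G)|$ coordinates, so the dimension of the projection of the whole stratum equals the maximum, over faces, of the dimension of the projection of that face. If every face projects to dimension $|E(G)|$ then so does the whole stratum, and item \ref{item:1} of Theorem \ref{thm:bigtheorem} yields that $G$ is independent in the generic $d$-dimensional rigidity matroid.

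For the forward direction I would first record the local reduction already implicit in the proof of Theorem \ref{thm:genericimpliesindep}: if $s$ is a configuration whose distance vector lies in the relative interior of a maximal face $\mathcal{F}$, then the dimension of the projection of $\mathcal{F}$ onto $G$ equals the rank, at $s$, of the Jacobian of the $G$-restricted distance map, i.e. the rank of the rigidity matrix $R(G)$ instantiated at $s$. Thus the forward direction becomes: if $G$ is $d$-independent then $R(G)$ has full row rank $|E(G)|$ at a generic configuration of \emph{every} maximal face. For $l_1$ and $l_\infty$ the key simplification is that $R(G)$ is \emph{constant} on each maximal face---it is the fixed $\{0,\pm 1\}$ signed incidence-type matrix determined by the sign pattern labelling $\mathcal{F}$---so the claim reduces to the purely combinatorial statement that, when $G$ is independent, every realizable sign pattern yields a full-rank signed matrix.

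The crux, and the step I expect to be the main obstacle, is precisely this region-independence of the rank: that the generic rank $|E(G)|$ is realized inside \emph{every} maximal face and not merely over the stratum as a whole. This does \emph{not} follow formally from the mere existence of one regular, well-positioned framework, since for $l_1/l_\infty$ a rank-deficient face is an open set that max-rank configurations simply never enter; in fact the claim is equivalent to the assertion that regular frameworks are \emph{dense} in the well-positioned locus, which for polyhedral norms is exactly what is in question. My proposed route is to prove this density directly from the cone picture: show that any two maximal faces are connected through codimension-one faces by elementary moves (single sign flips, corresponding to one point crossing a facet of another point's $l_p$-ball), each of which alters a single row of the signed matrix, and argue that when $G$ is independent these moves preserve full rank. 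Establishing that every such elementary move is rank-preserving---equivalently, a cone-geometric argument that no maximal face is degenerate---is the hard kernel, and single sign flips do not preserve rank automatically, so this is where the real difficulty lies; once it is in hand the forward direction, and hence the conjecture, follows, and one obtains the self-contained cone-geometric proof that the authors ask for, bypassing the rigidity-matroid machinery of Kitson \cite{Kitson:2014} invoked for a single face in Theorem \ref{thm:bigtheorem}.
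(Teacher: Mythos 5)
This statement is left as a conjecture in the paper: the authors give no proof, only the informal argument in the surrounding text that genericity of independence in the $d$-dimensional rigidity matroid ``must'' force every face of the stratum to project to the same dimension, and they explicitly ask for an independent, cone-geometric proof. Your proposal does not supply one. The backward direction is fine (the stratum is a finite union of faces, the dimension of a projection of a union is the maximum over the pieces, and then item \ref{item:1} of Theorem \ref{thm:bigtheorem} applies), and your reduction of the forward direction for $l_1/l_\infty$ is a correct and useful reformulation: on each maximal sign-pattern region the distance map is linear, so the projection dimension of that face equals the rank of the fixed signed matrix attached to the region, and the conjecture becomes the claim that for an independent $G$ \emph{every} realizable sign pattern yields a full-rank matrix. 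But that claim is exactly the content of the conjecture, and you do not prove it. Your proposed route --- connect maximal faces by single sign flips and show each flip preserves rank --- is declared to be the ``hard kernel'' and left open; as you yourself note, a single sign flip changes one row of the matrix and there is no automatic reason this preserves rank, nor is it even clear that any two realizable maximal sign patterns are connected by realizable codimension-one degenerations. So the argument terminates at an unproven statement equivalent to (the forward direction of) the original conjecture, which is a genuine gap rather than a proof.

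Two further cautions. First, you silently replace ``every face'' by ``every maximal face.'' You are right that the literal statement fails for low-dimensional faces (a face of dimension below $|E(G)|$ cannot project to dimension $|E(G)|$), but this means you are proposing to prove a modified statement, and that modification should be flagged as a correction to the conjecture rather than folded into the proof. Second, the claim you reduce to --- that for independent $G$ all well-positioned sign patterns give regular (full-rank) frameworks --- is precisely the kind of region-dependence phenomenon that Kitson's work, cited in Section \ref{open} of this paper, identifies as a genuine obstacle for polyhedral norms; so the missing step is not a routine verification but the substantive open question, and one should not be confident it is even true without restricting the class of faces further.
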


It would be useful to show a stronger property about the continuous mapping used in the proof of Theorem \ref{thm:allgenericall}. Doing so would require deeper understanding of the $d$-flattening process itself. Formally:

\begin{question}
    \label{ques:cont_path}
    Given a realization $(G,r)$ of a $d$-flattenable linkage $(G,\delta_G)$ in some high dimension, is there always a continuous path from $(G,r)$ to $(G,r')$ in $d$-dimensions for general $l_p$ norms?
\end{question}

It would be useful to even show a weaker version of Question \ref{ques:cont_path}: Does a continuous path of high dimensional frameworks of a $d$-flattenable graph $G$ always correspond to a path in $d$-dimensional frameworks?

In the case of the Euclidean or $l_2$ norm many questions remain
concerning core results and applications of convex Cayley configuration
spaces.

The question of convexity of Cayley configuration spaces of graphs $G$
over specified edge sets $F$ is fully understood, and the proof
\cite{SiGa:2010} uses the existence of a specific type of homeomorphism
to produce forbidden minors. The property is relatively close to that of
$2$-flattenability which is equivalent to convexity of inherent
$2$-dimensional Cayley configuration spaces. In fact the class of graphs
(partial 2-trees) have convex Cayley configuration spaces in any dimension
(follows immediately from the close relationship to $2$-flattenability).
Thus, as  in Section \ref{sec:rigidityandstrucutre}, we expect that
fully understanding the structure of convex Cayley configuration spaces of
partial 2-trees in 2-dimensions (which relies on combinatorial rigidity
and forbidden minor properties) will help in understanding the structure of
$2$-dimensional stratum of the cone.

We believe the study of Cayley configuration spaces of partial 2-trees
can simplify results related to the so-called Walker conjecture about the
topology of Cartesian configuration spaces for a very simple class of
partial 2-trees, namely polygonal graphs \cite{Farber1,Farber2},
as well as to extend them to general, partial 2-trees. In fact, we believe
that the Cayley configuration space of partial 2-trees can help to understand 
entire structure of $\cone{n}{l_2}$.

While convex Cayley configuration spaces over specified non-edges $F$
in 2-dimensions are fully characterized, very little is known (beyond the
forbidden minors for $3$-flattenability) in higher dimensions. In
particular, there are graphs $G$ that are themselves not $3$-flattenable,
but their Cayley configuration spaces are convex over certain non-edges $F$. Several
natural conjectures in \cite{SiGa:2010} relate to the specific type of
homeomorphism used to produce the forbidden minor characterizations in the
2D case. These still remain open for higher dimensions.

\subsection{2-Flattenability under $l_1$}
In Section \ref{joel1}, we showed a number of techniques to prove (non)-2-flattenability of certain graphs under the $l_1$ norm. Mostly these dealt with a constructive argument like the partial 2 tree case to prove flattenability and showing non-convexity of inherent Cayley configuration space for non-flattenability. 

It is still an open question as to what the forbidden minor characterization of 2-flattenability under $l_1$ is. Our results show that the only 5 vertex graph to classify is the wheel. Due to the fact that the wheel is a minor to all of the other non-2-flattenable graphs on 5 vertices, we raise the following conjecture.

\begin{conjecture}
   The forbidden minor characterization for 2-flattenability under the $l_1$ and $l_\infty$ norms consists of only the wheel on 5 vertices.
\end{conjecture}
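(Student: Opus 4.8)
The plan is to use that $2$-flattenability is minor-closed, so by Robertson--Seymour the class of $2$-flattenable $l_1$ graphs has a finite forbidden-minor set $\mathcal{F}$, and proving the conjecture means showing $\mathcal{F}=\{W\}$, where $W$ denotes the $5$-vertex wheel $K_1+C_4$ (equivalently $K_5$ minus two disjoint edges). This splits into two tasks: (a) $W\in\mathcal{F}$, i.e. $W$ is minor-minimal non-$2$-flattenable; and (b) completeness, i.e. every $W$-minor-free graph is $2$-flattenable. Task (a) is the easy half and follows the template already established in this paper; task (b) is where the work lies.

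For (a) I would first show $W$ is not $2$-flattenable by the device used for the banana in Theorem \ref{thm:banana}. The only non-edges of $W$ are its two rim diagonals, so I would fix an explicit length vector on the eight edges (forcing an equidistant $K_4$-like core on four of the vertices, as in the banana argument) and compute the attainable lengths of one diagonal as a one-dimensional Cayley configuration space, exhibiting it as a disconnected union of intervals of the shape $[0,1]\cup\{2\}$. Non-convexity then yields non-$2$-flattenability via Observation \ref{obs:not_convex_not_flattenable} together with Theorem \ref{thm:covexcayley_d-flattenable}. Minor-minimality of $W$ is then immediate: every proper minor of $W$ has no wheel minor and is a graph on at most five vertices already shown $2$-flattenable, so $W$ is indeed an obstruction.

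For (b) the natural route is structural. One would take the Tutte decomposition of a $W$-minor-free graph into $3$-connected pieces, cycles, and bonds glued along $2$-sums, observe that the only $3$-connected $W$-minor-free pieces are essentially $K_4$ (for instance the prism already contracts onto $W$), conclude that the graph is assembled from partial $2$-trees and copies of $K_4$ by $1$- and $2$-sums, and finally try to push through $2$-flattenability by induction using Theorem \ref{thm:2sumK4}.

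The main obstacle, and in fact a genuine gap in the conjecture as stated, surfaces exactly at this induction. Since $W$ is $3$-connected, it cannot be a minor across a $2$-separation: a $3$-connected minor of a $2$-sum $G_1\oplus G_2$ is already a minor of $G_1$ or of $G_2$ (with the marker edge retained). Hence the $2$-sum of two $K_4$'s along an edge is $W$-minor-free, because any wheel minor would have to lie inside a single $K_4$, which has only four vertices. Yet, being a $2$-sum of two graphs each carrying a $K_4$ minor, that same $6$-vertex graph is \emph{not} $2$-flattenable by the ``only if'' direction of Theorem \ref{thm:2sumK4}. Thus there exists a non-$2$-flattenable graph with no wheel minor, so $\mathcal{F}$ must contain graphs beyond $W$ and step (b) cannot close as written. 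The real content of the problem is therefore to identify the additional minor-minimal obstructions that forbid two $K_4$-blocks from being $2$-summed directly; an equivalent and arguably cleaner target is the structural characterization that the $2$-flattenable $l_1$ graphs are precisely those admitting a $2$-sum decomposition into partial $2$-trees and $K_4$'s in which no two $K_4$-blocks are adjacent. Proving \emph{that} statement, and then extracting its finite forbidden-minor list, is the version of the conjecture I would actually attempt.
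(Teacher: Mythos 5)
This statement is a conjecture; the paper supplies no proof of it, only the heuristic that every graph on five vertices other than the wheel has been classified and that the known obstructions (the banana, $K_5$) all contain the wheel as a minor. So there is no proof of the paper's to compare yours against; what matters is whether your strategy is sound and what it reveals. Your part (a) follows the paper's own template (Observation \ref{obs:not_convex_not_flattenable} applied to a carefully chosen linkage on the wheel, exactly as in Theorem \ref{thm:banana}); note that it is only a sketch, the paper itself leaves the non-2-flattenability of the wheel open, and you would still need to carry out the Cayley-configuration-space computation explicitly.

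The substantive content of your proposal is the claim that part (b) cannot close, and there you have identified a genuine inconsistency in the paper rather than a gap in your own argument. The 2-sum of two copies of $K_4$ along an edge is a six-vertex graph with a 2-cut separating the two private vertex pairs; since the five-vertex wheel is 3-connected, any wheel minor would have to live on one side of that cut together with the gluing edge, i.e. inside a single $K_4$, which is impossible on four vertices (one can also confirm this by exhausting the handful of branch-set partitions of six vertices). Yet the explicit linkage constructed in the proof of Theorem \ref{thm:2sumK4} --- one $K_4$ equilateral with edge length $3$, the other with lengths $2$ except for the shared edge --- is realizable in $\R^3$ but not in $\R^2$ under $l_1$, so this graph is not 2-flattenable. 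Granting Theorem \ref{thm:2sumK4}, the forbidden-minor family therefore contains at least one minor-minimal obstruction with no wheel minor, and the conjecture as stated is false. Your proposed repair --- that the 2-flattenable graphs are exactly those admitting a 2-sum decomposition into partial 2-trees and $K_4$'s in which no two $K_4$-blocks share a 2-sum edge --- is the right target, but two steps still need work: the ``if'' direction requires more than Theorem \ref{thm:2sumK4}, which treats a single 2-sum rather than an arbitrary tree of them, and the assertion that $K_4$ is the only 3-connected graph with no five-vertex-wheel minor needs a citation to Tutte's wheel theorem or a short independent proof.
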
 

Showing this requires only that we show that the wheel is not 2-flattenable. If this result is proven to be negative, then it will be necessary to look at 6 vertex graphs such as $K_{3,3}$, the doublet, and $K_{2,2,2}$.

\subsection{Other Metrics}
We would like to extend the results of this paper to other polyhedral norms faces.
Some of the major obstacles have been outlined in \cite{Kitson:2014}. In particular the 
nonexistence of well-positioned and regular frameworks, all of whose
sub-frameworks are also regular. Some work was done in this paper on this paper for the specific case of $l_1$. 

Extending the results of this paper to
other metrics would increase its applicability in combinatorial
optimization settings. Doing this will require us to first choose an appropriate
notion of dimension for metric topologies, be it the doubling dimension or some other classical notion of dimension.

\subsection*{Acknowledgement}
We thank Bob Connelly, Steven Gortler and Derek Kitson for
interesting conversations related to this paper.

\bibliography{refs}
\bibliographystyle{plain}

\end{document}